\newtheorem{definition}{Definition}[section]
\newtheorem{theorem}[definition]{Theorem}
\newtheorem{lemma}[definition]{Lemma}
\newtheorem{corollary}[definition]{Corollary}
\newtheorem{proposition}[definition]{Proposition}
\theoremstyle{definition}
\newtheorem{remark}[definition]{Remark}
\newtheorem{example}[definition]{Example}
\newcommand\tr{ \operatorname{Tr} } 
\newcommand{\Ep}{\mathcal{E}}
\newcommand{\Md}{\mathcal{M}_{d}}
\newcommand{\Bh}{\mathcal{B}(\mathcal{H})}
\newcommand{\Hi}{\mathcal{H}}
\newcommand{\Me}{\mathcal{M}_{\mathcal{E}}}
\newcommand{\Mi}{\mathcal{M}_{\mathcal{E}^{\infty}}}
\newcommand{\Ne}{\mathcal{N}_{\mathcal{E}}}
\title{ Multiplicative Properties of Quantum Channels \footnote{2010 
Mathematics Subject Classification:  Primary 46L05; 
Secondary 46L60, 81R15}
}
\author{Mizanur Rahaman}
\begin{document}
\maketitle
{\em
Department of Mathematics and Statistics,
University of Regina}\\
{\em Regina, Saskatchewan S4S 0A2, Canada} 

%%%%%%%%%%%%%%%%%%% title and abstract %%%%%%%%%%%%%%%%%%%%%%%%%%%%%%

\begin{abstract}
In this paper, we study the multiplicative behaviour of quantum channels, mathematically described by trace preserving, completely positive maps on matrix algebras. It turns out that the multiplicative domain of a unital quantum channel has a close connection to its spectral properties. A structure theorem (Theorem \ref{main theroem}), which reveals the automorphic property  of an arbitrary unital quantum channel on a subalgebra,  is presented. Various classes of quantum channels (irreducible, primitive etc.) are then analysed in terms of this stabilising subalgebra. The notion of the \emph{multiplicative index} of a unital quantum channel is introduced, which measures the number of times a unital channel needs to be composed with itself for the multiplicative algebra to stabilise.
We show that the maps that have trivial multiplicative domains are dense in completely bounded  norm topology in the set of all unital completely positive maps. Some applications in quantum information theory are discussed.  
\end{abstract}
%%%%%%%%%%%%%%%%%%%%%%%
 
 %%%%%%%%%%%%%%%%%%%%%%%%%%%%%%
\section*{Introduction}
Quantum channels are the most general input-output transformations allowed by quantum mechanics. Physically, they play a central role in quantum information theory, where they represent the communication from a sender to a receiver (\cite{Watrous-book}), in quantum information processing (\cite{Werner}, \cite{Perinotti}, \cite{Chiribella}), and in the theory of quantum open systems (see the monograph \cite{math-language-qit}).
 The steps of a quantum computation and also the effects of errors and noise on quantum registers are modeled as quantum channels.  
In quantum statistical mechanics involving finitely many particles,  
the typical domain on which a quantum channel acts is the $d\times d$ complex matrices which we denote by $\Md$.

Although the maps in discussion are linear, their domain, the $d\times d$ complex matrices $\Md$, has an algebraic structure. Thus, it is of interest to investigate the multiplicative nature of such linear maps. The domain on which the quantum channel is multiplicative is known as the \emph{multiplicative domain} and the main theme of this paper is to study this domain in detail. The scheme of \emph{quantum error correction}, one of the major themes of current research in information theory, was successfully analysed from the point of view of the multiplicative nature of channels ( see \cite{johnston}, \cite{kribs-spekkens}, \cite{c-j-k}). Also, the  multiplicative domain appears to be a useful area
to explore in the study of private algebras and complementary quantum channels (see \cite{levick}).   

From a purely operator algebraic perspective, the multiplicative domains of positive and completely positive maps have been studied by many authors for independent interests (\cite{choi1}, \cite{stormer2007}, \cite{bunce-salinas}). In this context, it is essential to mention the work of {Bulinski\u\i} in {\cite{Bulinskii}} where the author 
considers the dynamical system $(\mathcal{M}, \omega, \tau)$ and investigates the asymptotic automorphic behaviour of the dynamical system, where $\mathcal{M}$ is a von Neumann algebra, $\tau= (\tau_{t})_{t\geq 0}$ is a family of unital normal  completely positive maps on $\mathcal{M}$ parametrised over the positive real numbers and $\omega$ is a faithful normal state on $\mathcal{M}$ such that $\omega(x)=\omega\circ\tau_{t}(x)$, for every $t\geq 0$ and $x\in\mathcal{M}$. The main result of this work asserts that there exists a subalgebra (named as ``automorphy subalgebra") on which each $\tau_{t}$ is an automorphism. Later, St\o rmer in \cite{stormer2007} studied multiplicative properties of a positive map on a von Neumann algebra 
and obtained more general results concerning the automorphic behaviour.  

Our aim in this paper is to study multiplicative properties of quantum channels acting on a matrix algebra $\Md$. It is well known that a quantum channel $\Ep:\Md\rightarrow\Md$ is represented by a  set of (non-unique) operators $\{k_{i}\}_{i=1}^{n}$ such that 
\begin{align*}
\Ep(x)=\sum_{j=1}^{n}k_{j}xk_{j}^*, \ \forall x\in \Md \  \ and \  \sum_{j=1}^{n}k_{j}^*k_{j}=1.
\end{align*}
The operators $\{k_{i}\}_{i=1}^{n}$ are known as \emph{Kraus operators}.
Our first result is: for a unital quantum channel, we have the equality of two sets   \[\Me=\mathcal{F}_{\Ep^*\circ\Ep} \ ,\]
where $\Me$ is the multiplicative domain of $\Ep$ and 
$\mathcal{F}_{\Ep^*\circ\Ep}$ denotes the fixed point set of $\Ep^*\circ\Ep$.
 Here $\Ep^*$ is the adjoint of $\Ep$ when $\Md$ is thought of as a Hilbert space with the Hilbert-Schmidt inner product $\langle a,b\rangle=\tr(ab^*)$ for all $a,b\in \Md$. The adjoint $\Ep^*$ satisfies the relation $\tr(\Ep(a)b)=\tr(a\Ep^*(b))$ for all $a,b\in \Md $. This result seems to be known before (see Theorem 10, 11 in \cite{c-j-k} and also \cite{kribs-spekkens}) but we present a different proof here. Exploiting the same relationship for powers of the channel, that is for $\Ep^n$, $n\geq 1$, we arrive at the chain of subalgebras with the following inclusion 
\begin{center} $\mathcal{M}_{\Ep}\supseteq\mathcal{M}_{\Ep^2}\supseteq
\cdots \supseteq \mathcal{M}_{\Ep^n}\supseteq\cdots$.
\end{center}
Since the underlying space is of finite dimension, this finding motivates us to predict the existence of a stabilising subalgebra which we denote by
 \[\Mi:=\bigcap_{n=1}^{\infty}\mathcal{M}_{\Ep^n}.\]
This subalgebra captures precisely the automorphic behaviour of $\Ep$ which is the content of the main theorem (\ref{main theroem}) of this paper. It turns out that $\Ep$ acts as a bijective homomorphism on $\Mi$ and also this set is the algebra generated by the eigen operators of $\Ep$ corresponding to the peripheral eigenvalues. The sub algebra $\Mi$ carries the intrinsic automorphic  attribute of a unital channel which also manifests the asymptotic behaviour of $\Ep$. Theorem \ref{main theroem} then sets the foundation of introducing the notion of \emph{multiplicative index} of a unital channel. It is the smallest $n\in \mathbb{N}$ such that $\mathcal{M}_{\Ep^n}=\Mi$. It turns out that the multiplicative index has important connection in quantum error correction which is described in Section \ref{sec:application}. 

   Much of the work presented in this paper is based on viewing a channel in the Schr\"odinger picture where trace preservation is assumed and exploited heavily. In a non-unital case, such a channel is realised as a unital completely positive map in the Heisenberg picture (that is, in the dual picture). In Section \ref{heisenberg pic}, we prove that the set of unital completely positive maps that have trivial multiplicative domains is cb dense in the class of all unital completely positive maps. Also a new result is obtained (Theorem \ref{pripheral eigenvectors}) which can be viewed as an extension to the Arveson's boundary theorem (\ref{Boundary Theorem}) on matrix algebras. 
%%%%%%%%%%%%%%%%%%%%%%

The paper is organised as follows. For a unital channel $\Ep:\Md\rightarrow\Md$ 
we start with Section \ref{Fixed Point and Multiplicative Domain section} that develops the techniques needed to prove 
the relationship between the fixed point set of $\Ep^*\circ\Ep$ and the multiplicative domain of $\Ep$. 
Some related corollaries are noted down. In Section \ref{mult index} we introduce the notion of \emph{multiplicative index}. 
The main theorem (Theorem \ref{main theroem}) characterises the stabilising subalegbra $\Mi$ in terms of the peripheral eigenvectors of $\Ep$. Also the multiplicative index is calculated for some quantum channels. Section \ref{irreducible section} is concerned with the multiplicative properties of irreducible and primitive quantum channels, types of  channels that appear to be very important in information theory. In Section \ref{heisenberg pic} the multiplicative nature of unital completely positive (not necessarily trace preserving) maps is  explored. The tools and techniques developed throughout the paper are exploited in Section \ref{sec:application} to demonstrate some applications in information theory, specifically in quantum error correction. Section \ref{sec:summary} contains the summary and some discussions on the topological aspects of sets with fixed multiplicative index. 

%%%%%%%%%%%%%%%%%%%%%%%%%%%%%%%%%%%%%%%%55 
 
\section{Fixed Point and Multiplicative Domain}\label{Fixed Point and Multiplicative Domain section}

We begin with some terminology and general theory of positive and completely positive maps on C$^*$-algebras which be needed for further discussion. The books \cite{paulsen} and \cite{stormer} are amongst many good references in this topic.

Let $\Ep:\mathcal{A} \rightarrow \mathcal{B}$ be a unital completely positive map of $C^*$ algebras $\mathcal{A}$ and $\mathcal{B}$  . The following sets are called the set of fixed points and the multiplicative domain respectively:
\begin{center}
$\mathcal{F}_{\Ep}=\{a\in \mathcal{A}:\Ep(a)=a\},$\\ 
\vspace{0.3cm}
$\mathcal{M}_{\Ep}=\{a\in \mathcal{A}:\Ep(ab)=\Ep(a)\Ep(b), \ \Ep(ba)=\Ep(b)\Ep(a) \ \forall b \in \mathcal{A}\}$.
\end{center}
Recall that a completely positive unital map $\Ep$ satisfies the Schwarz inequality $\Ep(aa^*)\geq \Ep(a)\Ep(a^*)$, for every $a\in \mathcal{A}$.
Choi (\cite{choi1}) showed,  for a unital completely positive map, the set  $\mathcal{M}_{\Ep}$ is same as the following set
\begin{center}
$\mathcal{S}=\{x\in \Md:\Ep(xx^*)=\Ep(x)\Ep(x^*), \Ep(x^*x)=\Ep(x^*)\Ep(x)\}$.
\end{center}
Recall that for a unital completely positive map $\Phi:\mathcal{A}\rightarrow \Bh$, the Stinespring dilation theorem says that there exist a Hilbert space $\mathcal{K}$, a bounded linear operator $V:\mathcal{H}\rightarrow \mathcal{K}$ and a $*$ homomorphism $\pi:\mathcal{A}\rightarrow \mathcal{B}(\mathcal{K})$ such that $\Phi(x)=V^*\pi(x)V$, for all $x \in \mathcal{A}$.
Furthermore, $||V||^2\leq ||\Phi(1)||=1$.
We first state a theorem regarding multiplicative domain of a unital completely positive (ucp for short!) map defined on a C$^*$-algebra which can be found as an exercise in \cite{paulsen}, see also \cite{bunce-salinas}
\begin{theorem}{\label{paulsen's exc}}
Let $\mathcal{A}$ be a unital $C^*$ algebra, and let $\Ep:\mathcal{A}\rightarrow \Bh$ be a ucp map with the  minimal Stinespring representation 
$(\pi,V,\mathcal{K})$. An element $a\in \mathcal{A}$ satisfies $\Ep(aa^*)=\Ep(a)\Ep(a)^*$ and $\Ep(a^*a)=
\Ep(a)^*\Ep(a)$ if and only if $V\mathcal{H}$ is a reducing subspace for $\pi(a)$. Moreover, the  collection of such elements is a $C^*$ sub-algebra of $\mathcal{A}$ and 
equals to the multiplicative domain of $\Ep$.
\end{theorem}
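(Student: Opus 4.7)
The plan is to translate the algebraic conditions defining $\mathcal{M}_{\Ep}$ into geometric conditions on the subspace $V\mathcal{H}\subseteq\mathcal{K}$ via the Stinespring formula $\Ep(x)=V^*\pi(x)V$. Since $\Ep$ is unital, $V^*V=I$, so $V$ is an isometry and $P:=VV^*$ is the orthogonal projection onto $V\mathcal{H}$. The key identity, on which the whole argument rests, is $V^*P=V^*$ (equivalently $V^*(I-P)=0$); this is precisely where unitality enters.

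For the ``if and only if'' clause I would expand the two Schwarz defects:
\[\Ep(a^*a)-\Ep(a)^*\Ep(a)=V^*\pi(a)^*(I-P)\pi(a)V=X^*X,\quad X=(I-P)\pi(a)V,\]
and similarly $\Ep(aa^*)-\Ep(a)\Ep(a)^*=Y^*Y$ with $Y=(I-P)\pi(a)^*V$. Each defect, being of the form $Z^*Z$, vanishes iff the corresponding $Z$ does, i.e.\ iff $\pi(a)V\mathcal{H}\subseteq V\mathcal{H}$ and $\pi(a)^*V\mathcal{H}\subseteq V\mathcal{H}$ respectively. The conjunction of these two invariance conditions is exactly the statement that $V\mathcal{H}$ reduces $\pi(a)$, equivalently that $P$ commutes with $\pi(a)$.

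Next I would show the reducing condition forces $a$ into the full multiplicative domain. Assuming $P\pi(a)=\pi(a)P$, for any $b\in\mathcal{A}$,
\[\Ep(a)\Ep(b)=V^*\pi(a)V\,V^*\pi(b)V=V^*\pi(a)P\pi(b)V=V^*P\pi(a)\pi(b)V=V^*\pi(ab)V=\Ep(ab),\]
using $P\pi(a)=\pi(a)P$ and $V^*P=V^*$. An analogous chain, this time invoking $PV=V$ on the right, yields $\Ep(ba)=\Ep(b)\Ep(a)$. Conversely, any $a\in\mathcal{M}_{\Ep}$ satisfies the two Schwarz equalities by taking $b=a^*$ in the definition, so the two descriptions of $\mathcal{M}_{\Ep}$ coincide.

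The algebra assertion is then immediate: $\{T\in\mathcal{B}(\mathcal{K}):TP=PT\}$ is the commutant $\{P\}'$, a von Neumann algebra, so its $\pi$-preimage is a $C^*$-subalgebra of $\mathcal{A}$ which, by the above, coincides with $\mathcal{M}_{\Ep}$. I do not anticipate a genuine obstacle; the only real pitfall is keeping straight the roles of $P\pi(a)=\pi(a)P$, $V^*P=V^*$, and $PV=V$ in the two-sided multiplicativity step, where commutativity must be combined with the projection identities in a specific order.
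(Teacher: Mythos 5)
Your proposal is correct, and it is precisely the standard Stinespring/Choi argument that the paper itself does not reprove but cites (as an exercise in Paulsen's book and in Bunce--Salinas): unitality gives $V^*V=I$, each Schwarz defect factors as $X^*X$ with $X=(I-P)\pi(a)V$ (resp.\ $Y=(I-P)\pi(a)^*V$) for $P=VV^*$, and the identities $V^*P=V^*$, $PV=V$ turn commutation of $P$ with $\pi(a)$ into two-sided multiplicativity, with the algebra statement following because $\{a\in\mathcal{A}:\pi(a)P=P\pi(a)\}$ is the $\pi$-preimage of the von Neumann algebra $\{P\}'$. One minor observation: your argument never uses minimality of the Stinespring representation, which is harmless --- the equivalence holds for any unital Stinespring dilation.
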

The following theorem provides useful  characterisations of projections and unitaries to belong to the multiplicative domain of a ucp map.
\begin{theorem}{\label{projection and unitary}}
Let $\Ep:\mathcal{A}\rightarrow \mathcal{B}$ be a ucp  map between unital $C^*$ algebras. Then
\begin{enumerate}
\item for a projection $p\in \mathcal{A}$, $p\in  \Me$ if and only if $\Ep(p)$ is
a projection.
\item for a unitary element $u\in \mathcal{A}$, $u\in \Me$ if and only if $\Ep(u)$ is a unitary element.
\end{enumerate}
\end{theorem}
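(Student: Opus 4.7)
My plan is to deduce both statements directly from the characterisation of the multiplicative domain that was recalled just before the theorem: namely, Choi's result that $a\in\Me$ if and only if both Schwarz identities $\Ep(aa^*)=\Ep(a)\Ep(a^*)$ and $\Ep(a^*a)=\Ep(a^*)\Ep(a)$ hold. I would also use the two general facts that a ucp map is $*$-preserving (so $\Ep(a^*)=\Ep(a)^*$) and that it is unital (so $\Ep(1)=1$). These two facts together reduce each item of the theorem to an almost algebraic manipulation.

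For item (1), the idea is to exploit that a projection $p$ satisfies $p=p^*=p^2$, so that the two Choi conditions collapse to the single condition $\Ep(p)=\Ep(p)^2$. Combined with the self-adjointness $\Ep(p)^*=\Ep(p^*)=\Ep(p)$, this single equation is equivalent to $\Ep(p)$ being a projection. In the forward direction I would write $\Ep(p)=\Ep(p\cdot p)=\Ep(p)\Ep(p)$ and note that $\Ep(p)^*=\Ep(p)$. In the reverse direction, starting from $\Ep(p)^2=\Ep(p)$ and $\Ep(p)^*=\Ep(p)$, I would verify the two Schwarz equalities and apply Choi's criterion.

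For item (2), the approach is analogous, but the operative identity is $uu^*=u^*u=1$ rather than $p^2=p$. In the forward direction, multiplicativity of $\Ep$ on $u$ together with unitality gives $\Ep(u)\Ep(u)^*=\Ep(u)\Ep(u^*)=\Ep(uu^*)=\Ep(1)=1$, and symmetrically $\Ep(u)^*\Ep(u)=1$, so $\Ep(u)$ is unitary. Conversely, if $\Ep(u)$ is unitary then both $\Ep(u)\Ep(u)^*$ and $\Ep(u)^*\Ep(u)$ equal $1=\Ep(uu^*)=\Ep(u^*u)$, and Choi's criterion again places $u$ in $\Me$.

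There is no real obstacle here; the only point that one must be careful about is that the two Schwarz conditions in $\mathcal{S}$ are not in general equivalent for a ucp map (they are for a $2$-positive map via a $2\times 2$ matrix trick, but one should not need that), and so both must be checked separately in the reverse directions. For projections and unitaries this is automatic because the two identities are symmetric in $a$ and $a^*$, so verifying one side is essentially the same as verifying the other.
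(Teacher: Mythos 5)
Your proof is correct and follows essentially the same route as the paper: both directions of each item are reduced to the Schwarz-equality characterisation of $\Me$ recalled just before the theorem (Choi's criterion, equivalently Theorem \ref{paulsen's exc}), using only $*$-preservation, unitality, and the identities $p=p^*=p^2$ and $uu^*=u^*u=1$. One minor caveat: your parenthetical claim that the two Schwarz equalities are equivalent for $2$-positive maps is not true (for the vector state $x\mapsto\langle xe_{1},e_{1}\rangle$ on $\mathcal{M}_{2}$ and $a=e_{12}$ one equality holds and the other fails), but this does not affect your argument, since you verify both equalities and they are indeed automatic by the $a\leftrightarrow a^*$ symmetry for projections and unitaries.
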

\begin{proof}
\begin{enumerate}
\item If a  projection $p\in \Me$, then by Theorem \ref{paulsen's exc} $\Ep(p)=\Ep(p^2)=\Ep(p)\Ep(p)$.\\
Conversely, for a projection $p\in \mathcal{A}$, if $\Ep(p)^2=\Ep(p)$, then $\Ep(p^2)=\Ep(p)\Ep(p)$ and hence $p$ gives equality in the Schwarz inequality and by the Theorem \ref{paulsen's exc}, $p\in \Me$. 
\item If $u\in \Me$ and $u$ is a unitary, then $\Ep(1)=1=\Ep(uu^*)=\Ep(u)\Ep(u)^*$ and similarly the other direction.\\
Conversely, if $\Ep(u)$ is unitary for a unitary $u\in \mathcal{A}$, then it is easy to see that $u$ satisfies the equality in the Schwarz inequality as well and hence we get $u\in \Me$.
\end{enumerate}
\end{proof}
%%%%%%%%%%%%%%%%%%%%%%%%%%%%%%%%
 \noindent{\bf Assumptions.} 
\begin{enumerate} 
\item With the exception of Example \ref{free group example} all quantum channels considered henceforth in this paper are assumed to act on the matrix algebra $\mathcal M_d$.
\item Given a quantum channel $\Ep:\Md\rightarrow\Md$, we can identify its dual map or adjoint map $\Ep^*$ via the relation
$\tr(\Ep(a)b)=\tr(a\Ep^*(b)) \ \forall \ a, \ b \ \in \Md $ where 
 $\langle a,b\rangle=\tr(ab^*)$ for all $a,b \in \Md$,  defines an inner product on $\Md$ which makes $\Md$ a Hilbert space. This is known as Hilbert-Schmidt inner product. We will frequently denote the norm of an element $x\in \Md$, arising from this inner product as $\|x\|^2_{H.S}:=\langle x,x\rangle=\tr(xx^*)$.
 \end{enumerate}
We are now ready to state the following theorem. The result was known before (see \cite{c-j-k}, \cite{kribs-spekkens}) but we present a different proof here. The technique used in this proof will be used significantly throughout the rest of the paper.
\begin{theorem}{\label{mult.dom thm}}
Let $\Ep:\Md \rightarrow \Md$ be a unital quantum channel. Then
\begin{center}
 $\mathcal{M}_{\Ep}=\mathcal{F}_{(\Ep^{*}\circ \Ep)}$.
\end{center}
That is, the multiplicative domain of $\Ep$ equals to the fixed point set of $(\Ep^*\circ \Ep)$.
\end{theorem}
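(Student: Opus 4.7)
The plan is to translate the identity $\mathcal{M}_{\Ep}=\mathcal{F}_{\Ep^{*}\circ\Ep}$ into a statement about the Hilbert--Schmidt geometry, using the fact that unital trace-preserving CP maps are contractions in the $\|\cdot\|_{H.S}$ norm and that Schwarz-type inequalities are saturated exactly when norms are preserved.

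First I would verify that $\Ep^{*}$ is indeed the Hilbert--Schmidt adjoint of $\Ep$ (so that $\Ep^{*}\circ\Ep$ is a positive self-adjoint operator on $(\Md,\langle\cdot,\cdot\rangle_{H.S})$) and that $\Ep^{*}$ is itself unital CP because $\Ep$ is unital and trace preserving. Applying the Schwarz inequality $\Ep(a^{*}a)\geq \Ep(a)^{*}\Ep(a)$ and taking the trace yields
\begin{equation*}
\|a\|_{H.S}^{2}=\tr(a^{*}a)=\tr(\Ep(a^{*}a))\geq \tr(\Ep(a)^{*}\Ep(a))=\|\Ep(a)\|_{H.S}^{2},
\end{equation*}
so $\Ep^{*}\Ep\leq I$ as operators on the Hilbert space $\Md$.

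Next I would prove the key equivalence: for $a\in\Md$, one has $(\Ep^{*}\circ\Ep)(a)=a$ if and only if $\|\Ep(a)\|_{H.S}=\|a\|_{H.S}$. The forward direction is obvious by taking the inner product against $a$ since $\langle a,\Ep^{*}\Ep(a)\rangle_{H.S}=\|\Ep(a)\|_{H.S}^{2}$. The reverse direction uses that $I-\Ep^{*}\Ep$ is a positive self-adjoint operator, so $\langle a,(I-\Ep^{*}\Ep)a\rangle_{H.S}=0$ forces $(I-\Ep^{*}\Ep)a=0$. Then, under the hypothesis $\|\Ep(a)\|_{H.S}=\|a\|_{H.S}$, the positive matrix $\Ep(a^{*}a)-\Ep(a)^{*}\Ep(a)$ has vanishing trace (thanks to trace preservation of $\Ep$) and is therefore zero, which gives $\Ep(a^{*}a)=\Ep(a)^{*}\Ep(a)$.

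Finally, I would close the loop. If $a\in\mathcal{F}_{\Ep^{*}\Ep}$, then $a^{*}\in\mathcal{F}_{\Ep^{*}\Ep}$ too (since $\Ep^{*}\Ep$ is $*$-preserving), so the previous step applied to both $a$ and $a^{*}$ yields the two Choi equalities $\Ep(a^{*}a)=\Ep(a)^{*}\Ep(a)$ and $\Ep(aa^{*})=\Ep(a)\Ep(a)^{*}$, placing $a$ in $\mathcal{M}_{\Ep}$ by the Choi characterisation recalled earlier. Conversely, for $a\in\mathcal{M}_{\Ep}$ the first equality immediately gives $\|\Ep(a)\|_{H.S}=\|a\|_{H.S}$ and hence $a\in\mathcal{F}_{\Ep^{*}\Ep}$. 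The only delicate point is the ``norm preservation implies fixed point'' step, where I must genuinely use that $\Ep^{*}\Ep$ is a self-adjoint contraction (not merely a contraction) so that the kernel of $I-\Ep^{*}\Ep$ coincides with its ``isotropic'' set; the rest is bookkeeping with the Hilbert--Schmidt inner product.
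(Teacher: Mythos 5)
Your proposal is correct, and the harder inclusion $\mathcal{F}_{\Ep^*\circ\Ep}\subseteq\mathcal{M}_{\Ep}$ is in substance the paper's argument: saturate the Schwarz inequality after taking traces, using trace preservation, and then use faithfulness of the trace; you are in fact slightly more careful than the paper in running this for both $a$ and $a^*$, so that both of Choi's equalities are in hand before invoking the characterisation of the multiplicative domain. The genuine difference is in the other inclusion. The paper takes $x\in\mathcal{M}_{\Ep}$, uses multiplicativity against an \emph{arbitrary} $z\in\Md$ to get $\langle x,z\rangle=\langle\Ep^*\circ\Ep(x),z\rangle$ for all $z$, and concludes by non-degeneracy of the pairing; you instead extract from the single equality $\Ep(a^*a)=\Ep(a)^*\Ep(a)$ only the isometry condition $\|\Ep(a)\|_{H.S}=\|a\|_{H.S}$ and then use that $1-\Ep^*\circ\Ep$ is a positive semidefinite self-adjoint operator on $(\Md,\langle\cdot,\cdot\rangle_{H.S})$, so that $\langle a,(1-\Ep^*\circ\Ep)(a)\rangle=0$ forces $(1-\Ep^*\circ\Ep)(a)=0$ via the square-root/kernel argument. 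Both are valid: your route isolates the reformulation $\mathcal{F}_{\Ep^*\circ\Ep}=\{a\in\Md:\|\Ep(a)\|_{H.S}=\|a\|_{H.S}\}$, which needs only one Schwarz equality rather than multiplicativity against all of $\Md$ and fits well with the later asymptotic arguments in the paper, whereas the paper's computation is more elementary, using nothing beyond non-degeneracy of the trace pairing. The one point to keep explicit is the one you already flag: identifying the trace-duality adjoint $\Ep^*$ with the Hilbert--Schmidt adjoint uses that $\Ep^*$ preserves adjoints, which holds because $\Ep$ (and hence $\Ep^*$) is positive.
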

\begin{proof}
If $x \in \mathcal{M}_{\Ep}$, then $\Ep(xy)=\Ep(x)\Ep(y)$ and $\Ep(yx)=\Ep(y)\Ep(x)$ for all $y \in \Md$.\\
We then have, for any $z \in \Md$, 
\begin{center}
 $\langle x,z\rangle=\tr(z^*x)=\tr(\Ep(z^*x))=\tr(\Ep(z)^*\Ep
(x))=\langle \Ep(x),\Ep(z)\rangle .$ 
\end{center}
Invoking the adjoint relation, we have 
\begin{center}
$\langle x,z\rangle=\langle\Ep^{*}\circ\Ep(x),z\rangle$.
\end{center}
Since this happens for all $z \in \Md$, by the non-degeneracy of the pairing we have $\Ep^{*}\circ\Ep(x)=x$.\\
Conversely, if $x \in \Md$ is such that $\Ep^{*}\circ\Ep(x)=x$, then 
\begin{align*}
\langle x,x\rangle=\tr(x^*x)&=\tr(\Ep(x^*x))\\ 
&\geq \tr(\Ep(x)^{*}\Ep(x))\\
&=\langle\Ep(x),\Ep(x)\rangle\\
&=\langle \Ep^{*}\circ\Ep(x),x\rangle=\langle x,x\rangle.
\end{align*}
Where the inequality arises from from the Schwarz inequality for the ucp map $\Ep$. Since the extreme ends of the above equations are same, the inequalities become equality.
So we have $\tr(\Ep(x^*x))=\tr(\Ep(x)^{*}\Ep(x))$,
and hence by the faithfulness of the trace we have $\Ep(x^*x)=
\Ep(x)^{*}\Ep(x)$. Now by  Theorem \ref{paulsen's exc}, we conclude $x \in \mathcal{M}_{\Ep}$.
\end{proof}

Now we relate the multiplicative domain with the  commutant of product of Kraus operators of a unital channel.
\begin{corollary}[Commutant \ of \ the \ Kraus \ Operators]{\label{commutant}}
Let $\Ep:\Md\rightarrow\Md$ be a unital quantum channel with the Kraus representation:  $\Ep(x)=\sum_{j=1}^{n}a_{j}xa_{j}^*$, for all $x\in \Md$. Then an element $\rho \in \mathcal{M}_{\Ep}$ if and only if $\rho a_{i}^*a_{j}=a_{i}^*a_{j}\rho$ for all $i,j=1,2, \cdots, n$.
\end{corollary}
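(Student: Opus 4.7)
My plan is to deduce the corollary from Theorem~\ref{mult.dom thm}, which identifies $\Me$ with $\mathcal{F}_{\Ep^{*}\circ\Ep}$. The first step is to write the adjoint explicitly: from the trace pairing one finds $\Ep^{*}(y)=\sum_{j}a_{j}^{*}ya_{j}$, so
\[
(\Ep^{*}\circ\Ep)(\rho)=\sum_{i,j}a_{i}^{*}a_{j}\,\rho\,a_{j}^{*}a_{i}.
\]
Hence the corollary reduces to the assertion that $\rho$ is fixed by this double sum if and only if $\rho$ commutes with every product $a_{i}^{*}a_{j}$.

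For sufficiency, I would simply substitute. If $\rho(a_{i}^{*}a_{j})=(a_{i}^{*}a_{j})\rho$ for every pair $(i,j)$ (and hence also with $i,j$ swapped), the inner $\rho$ slides to the right and collapses the sum to $\bigl(\sum_{i}a_{i}^{*}\bigl(\sum_{j}a_{j}a_{j}^{*}\bigr)a_{i}\bigr)\rho=\rho$ by unitality ($\sum_{j}a_{j}a_{j}^{*}=\Ep(I)=I$) followed by trace preservation ($\sum_{i}a_{i}^{*}a_{i}=I$). This step is purely mechanical.

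For necessity, which is the real content, I plan to use a Stinespring dilation tailored to the Kraus data: define $V\colon\mathbb{C}^{d}\to\mathbb{C}^{d}\otimes\mathbb{C}^{n}$ by $V\xi=\sum_{j}(a_{j}^{*}\xi)\otimes e_{j}$ and $\pi(x)=x\otimes I_{n}$. Unitality of $\Ep$ makes $V$ an isometry and $\Ep(x)=V^{*}\pi(x)V$. A direct calculation then gives
\[
\Ep(\rho^{*}\rho)-\Ep(\rho)^{*}\Ep(\rho)=A^{*}A,\qquad A:=(I-VV^{*})(\rho\otimes I)V,
\]
together with the analogous identity on the $\rho\rho^{*}$ side. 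If $\rho\in\Me$, both differences vanish, so $(I-VV^{*})(\rho\otimes I)V=0$ and $(I-VV^{*})(\rho^{*}\otimes I)V=0$, meaning $V(\mathbb{C}^{d})$ reduces $\rho\otimes I$; equivalently $[VV^{*},\rho\otimes I]=0$. The finishing move is to read this commutator blockwise: in the identification $\mathbb{C}^{d}\otimes\mathbb{C}^{n}\cong(\mathbb{C}^{d})^{n}$, the projection $VV^{*}$ has $(k,j)$-block equal to $a_{k}^{*}a_{j}$, while $\rho\otimes I$ is block-diagonal with $\rho$ in each slot, so the vanishing of $[VV^{*},\rho\otimes I]$ block-by-block is precisely the family of relations $\rho\,a_{k}^{*}a_{j}=a_{k}^{*}a_{j}\,\rho$.

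I expect the necessity direction to be the only real obstacle: one must convert a single scalar identity $\Ep^{*}\circ\Ep(\rho)=\rho$ into an entire family of $n^{2}$ commutation relations. The Stinespring dilation is exactly the mechanism that packages all the products $a_{i}^{*}a_{j}$ into one operator $VV^{*}$ whose commutator with $\rho\otimes I$ encodes all of them simultaneously, and Theorem~\ref{mult.dom thm} together with the $A^{*}A$ identity above does the rest.
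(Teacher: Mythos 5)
Your proof is correct, but it reaches the conclusion by a different route than the paper, mainly in the necessity direction. The paper's own argument is short: it observes that $\Ep^*\circ\Ep$ is again a unital trace-preserving CP map whose Kraus operators are exactly the products $a_i^*a_j$, and then invokes two results — Theorem \ref{mult.dom thm} ($\Me=\mathcal{F}_{\Ep^*\circ\Ep}$) and the fixed-point/commutant theorem for unital channels (Theorem 2.1 of \cite{krbs}), which says the fixed-point set of a unital trace-preserving CP map equals the commutant of its Kraus operators — so that $\rho\in\Me$ iff $\rho\in\{a_i^*a_j\}'$. You use Theorem \ref{mult.dom thm} only for sufficiency (your direct computation $\sum_{i,j}a_i^*a_j\rho a_j^*a_i=\sum_i a_i^*\bigl(\sum_j a_ja_j^*\bigr)a_i\,\rho=\rho$ is fine, and correctly uses unitality and then trace preservation), while for necessity you bypass both Theorem \ref{mult.dom thm} and the cited commutant theorem: you build the explicit Stinespring dilation $V\xi=\sum_j(a_j^*\xi)\otimes e_j$, verify the identity $\Ep(\rho^*\rho)-\Ep(\rho)^*\Ep(\rho)=A^*A$ with $A=(I-VV^*)(\rho\otimes I)V$, deduce $[VV^*,\rho\otimes I]=0$ from the two Schwarz equalities, and read off the relations $\rho a_k^*a_j=a_k^*a_j\rho$ from the $(k,j)$-blocks of $VV^*$. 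In effect you reprove, in this concrete setting, the direction of the Kribs-type commutant theorem that the paper imports, which makes your argument self-contained (no appeal to \cite{krbs}, and no need for minimality of the dilation, since you prove the reducing-subspace fact directly rather than quoting Theorem \ref{paulsen's exc}). A pleasant by-product of your route is that the necessity direction only uses unitality of $\Ep$ (through $V$ being an isometry and the Schwarz equalities), with trace preservation entering solely in the sufficiency direction via $\sum_i a_i^*a_i=1$ and Theorem \ref{mult.dom thm}; the paper's proof, by contrast, uses trace preservation in both directions but is shorter because it leverages the known fixed-point theorem.
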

\begin{proof}
Let us note that if $\Ep$ is unital channel, then $\Ep^{*}\circ\Ep$ is a unital channel as well. Since $\Ep$ is completely positive and trace preserving,  $\Ep^*$ is positive and unital because $\tr(x)=\tr(x\cdot1)=\tr(\Ep(x)\cdot1)=\tr(x\Ep^*(1))$ for all $x \in \Bh$ and hence $\Ep^*(1)=1$. If $\Ep$ is completely positive, then so is $\Ep^*$(see \cite{stormer}). Also
since $\Ep$ is unital, $\Ep^*$ is trace preserving because $\tr(\Ep^*(x))=\tr(\Ep^*(x)1)=\tr(x\Ep(1))=\tr(x)$.
And hence $\Ep^*$ is a ucp and trace preserving map and hence the composition $\Ep^*\circ\Ep$ is the same.
Now if $\Ep(x)=\sum_{j=1}^{n}a_{j}xa_{j}^* \ \forall x \in \Md$, is a Kraus representation, a little calculation shows that $\Ep^*\circ\Ep$ is represented by
\begin{align*}
\Ep^*\circ\Ep(x)=\sum_{i,j=1}^{n}a_{i}^*a_{j}xa_{j}^*a_
{i}.
\end{align*}
And so the Kraus operators for $\Ep^*\circ\Ep$ are  $\{a_{i}^*a_{j}\}_{i,j}$. Now by the fixed point and commutant theorem (\cite{krbs}, Theorem 2.1) and by the theorem \ref{mult.dom thm} we get $\rho \in \mathcal{M}_{\Ep}$ if and only if 
\begin{center}
$\rho a_{i}^*a_{j}=a_{i}^*a_{j}\rho ,\forall
 i,j. $
\end{center}
%The theorem can be proved without the aid
%of the Theorem \ref{mult.dom thm} by a direct calculation. For the completeness, we sketch the proof.\\
%If $\rho \in \Me$, then By Theorem \ref{paulsen's exc},  $\Ep(\rho\rho^*)=\Ep(\rho)\Ep(
%\rho)^*$. Now using the unitality and trace preservation of $\Ep$ we get 
%\begin{align*}
%0&\leq Tr([a_{i}^*a_{j}\rho-\rho a_{i}^*a_{j}][a_{i}^*a_{j}\rho-\rho a_{i}^*a_{j}]^*)\\
%&\leq \sum_{i,j=1}^{n}Tr([a_{i}^*a_{j}\rho-\rho a_{i}^*a_{j}][a_{i}^*a_{j}\rho-\rho a_{i}^*a_{j}]^*)\\
%&=\sum_{i,j=1}^{n}Tr([a_{i}^*a_{j}\rho-\rho a_{i}^*a_{j}][\rho^*a_{j}^*a_{i}- a_{j}^*a_{i}\rho^*])\\
%&=\sum_{i=1}^{n}Tr[a_{i}^*\Ep(\rho\rho^*)a_{i}]+
%\sum_{i,}^{n}Tr[\rho(a_{i}^*Ia_{i})\rho^*
%]-\\
%&\sum_{i=1}^{n}Tr[a_{i}^*\Ep(\rho)a_{i}\rho^
%*]-\sum_{i=1}^{n}Tr[\rho a^*\Ep(\rho^*)a_{i}]\\
%&=Tr\Ep(\rho\rho^*)+Tr(\rho\rho^*)-Tr(\Ep(\rho)\Ep(\rho)^*)-Tr(\Ep(\rho)\Ep(\rho)^*)\\
%&=0
%\end{align*}
%And by the faithfulness of Trace function, we have
%$a_{i}^*a_{j}\rho-\rho a_{i}^*a_{j}=0$, $\forall i,j$, implying
%$a_{i}^*a_{j}\rho=\rho a_{i}^*a_{j}$, $\forall i,j$.\\
%Conversely, if Equation \ref{first equation} is satisfied by any $\rho$, then
%\begin{align*}
%\Ep(\rho)\Ep(\rho)^*&=(\sum_{i=1}^{n}a_{i}\rho a_{i}^*)(\sum_{j=1}^{n}a_{j}\rho^*a_{j}^*)\\
%&=\sum_{i,j=1}^{n}a_{i}\rho a_{i}^*a_{j}\rho^*a_{j}^*\\
%&=\sum_{j,i=1}^{n}a_{i} a_{i}^*a_{j}\rho\rho^*a_{j}^*\\
%&=\sum_{j=1}^{n}a_{j}\rho\rho^*a_{j}^*\\
%&=\Ep(\rho\rho^*)
%\end{align*}
%and hence by Theorem \ref{paulsen's exc}, $a \in \Me$.
\end{proof}
\begin{remark}
Trace preservation of $\Ep$ (equivalently, the unitality of $\Ep^*$) is the key factor of the above theorem.
Below we give an example where an element lies in the multiplicative domain of a ucp (but not trace preserving) map with a given Kraus representation does not satisfy the commutant condition. We use the example of a ucp map  that arose in \cite{bratteli}.\\
Let $\Phi:\mathcal{M}_{3}\rightarrow\mathcal{M}_{3}$ be a ucp map given by
\begin{center}
$\Phi\Bigg(\left[ \begin{array}{ccc} x_{11}&x_{12}&x_{13} \\ x_{21}&x_{22}&x_{23}\\ x_{31}&x_{32}&x_{33} \end{array}\right ]\Bigg)=\left[ \begin{array}{ccc} x_{11}&0&0 \\ 0&x_{22}&0\\ 0&0&\frac{x_{11}+x_{22}}{2} \end{array}\right ]$
\end{center}
Let us choose a set of Kraus operators for $\Phi$ as follows
\begin{center}
$k_{1}=\left[ \begin{array}{ccc} 1&0&0 \\ 0&0&0\\ 0&0&0 \end{array}\right ], k_{2}=\left[ \begin{array}{ccc} 0&0&0 \\ 0&1&0\\ 0&0&0 \end{array}\right ], k_{3}=\left[ \begin{array}{ccc} 0&0&0 \\ 0&0&0\\ \frac{1}{\sqrt{2}}&0&0 \end{array}\right ]$\\
and $k_{4}=\left[ \begin{array}{ccc} 0&0&0 \\ 0&0&0\\ 0&\frac{1}{\sqrt{2}}&0 \end{array}\right ]$
\end{center}
Now note that $a=\left[ \begin{array}{ccc} 1&0&0 \\ 0&1&0\\ 0&0&0 \end{array}\right ] \in \mathcal{M}_{\Phi}$ because $\Phi(a)=1$ where $1$ denote the $3\times3$ identity matrix and so $\Phi(a^*)=1$. Moreover $\Phi(aa^*)=\Phi(a)=1$ and $\Phi(a)\Phi(a^*)=1$.
And hence $a \in \mathcal{M}_{\Phi}$.\\
But $a$ does not commute with $k_{3}k_{1}^*=k_{3}$.
\end{remark}
In infinite dimension, Theorem \ref{commutant} is not true as the following example suggests. This example was first given in \cite{gudder} in a context of proving that the fixed point set is not necessarily equal to the commutant of Kraus operators of a channel. Since our context is similar, we use this example and expand it accordingly. Recall in infinite dimension, the issue of trace preserving needs to be addressed as the algebra in context might not have a finite trace. To this end, we follow the notion of \emph{quantum operation} given in \cite{gudder}. For a Hilbert space $\mathcal{H}$, if $\Bh$ is the set of bounded linear operators, then a linear map $\Ep:\Bh\rightarrow\Bh$ which is induced by a set of operators $\{a_{i}\}_{i=1}^{\infty}$ and defined as $\Ep(x)=\displaystyle\sum_{i=1}^{\infty} a_{i}xa_{i}^*$ is called \emph{trace preserving} if  $\tr(\Ep(b))=\tr b$, for every trace class operator $b$. It turns out that if $\displaystyle\sum_{i=1}^{\infty}a_{i}^*a_{i}=1$,  where the convergence is in the ultra-weak topology, then $\Ep$ is trace preserving. With these notations in hand, a quantum operation is a completely positive trace preserving map. 
\begin{example}\label{free group example}
Let $\mathbb{F}_{2}$ be the free group of two generators
$g_{1}, g_{2}$, with the identity element $e$. So $\mathbb{F}_{2}$ is a countable group. Let $\mathcal{H}$ be the complex separable Hilbert space 
\begin{center}
$\mathcal{H}=\{f:\mathbb{F}_{2}\rightarrow \mathbb{C}:
\sum_{x}|f(x)|^2 <\infty\}$.
\end{center}
Now define the following function for $x\in \mathbb{F}_{2}$,
\[ \delta_{x}(y)=   \left\{
\begin{array}{ll}
      1 & \text{if} \ y=x \\
      0 & \text{otherwise}.
\end{array} 
\right. \]
Then $\{\delta_{x}:x\in \mathbb{F}_{2}\}$ is an orthonormal basis for $\mathcal{H}$. It is well known that the group C$^*$-algebra $C^*(\mathbb{F}_{2})$ corresponding to the left regular representation 
$\Gamma:\mathbb{F}_{2}\rightarrow\Bh$ has a faithful trace
$\tau$, defined by $\tau(x)=\langle x\delta_{e},\delta_{e}\rangle$.\\
 Define two unitary operators on $\mathcal{H}$ by $u, v$ in the following way:
\begin{center}
$u(\delta_{x})=\delta_{g_{1}x}$ and $v(\delta_{x})=\delta_{g_{2}x}$ for all $x\in \mathbb{F}_{2}$.
\end{center}
Now define $\Ep:\Bh\rightarrow \Bh$ by 
\begin{center}
$\Ep(a)=\frac{1}{2}uau^*+\frac{1}{2}vav^*$
for all $a\in \Bh$.
\end{center}
Then $\Ep$  defines a ucp and trace preserving map.
 Let us call $\mathcal{M}$ the von Neumann algebra generated by $\{u^*v, v^*u, 1\}$.
 
Suppose an operator $b$ is defined as $b(\delta_{x})=\lambda_{x}\delta_{x}$ for all $x\in \mathbb{F}_{2}$ and $\lambda_{x}\in [0,1]$. 
The operator $b$ is positive and in the multiplicative domain $\Me$ of $\Ep$ if and only if  we have 
$\Ep(bb^*)=\Ep(b)\Ep(b)^*$ which yields for all $x\in \mathbb{F}_{2}$, $\Ep(bb^*)(\delta_{x})=\Ep(b)\Ep(b)^*(\delta_{x})$. Unwinding the definition of $\Ep$, we get 
\begin{center}
$\frac{1}{2} (ubb^{*}u^*+vbb^{*}v^*)(\delta_{x})=
\frac{1}{2}(ubu^*+vbv^*)\frac{1}{2}(ub^{*}u^*+vb^{*}v^*)(\delta_{x})$.
\end{center}  
Now applying the definition of $u, v$ and $b$, we get 
\begin{center}
\begin{equation}{\label{equ for mult.}}
(\frac{1}{2}\lambda^{2}_{g_{1}^{-1}x}+\frac{1}{2}\lambda^{2}_{g_{2}^{-1}x})=(\frac{1}{2}\lambda_{g_{1}^{-1}x}+\frac{1}{2}\lambda_{g_{2}^{-1}x})^{2} \ for \ all \ x\in \mathbb{F}_{2}.
\end{equation}
\end{center}
Now, if we assume $b\in \mathcal{M}'$, 
 and if $x=g^{-1}_{1}g_{2}y$ for some $y\in \mathbb{F}_{2}$, then we have, 
\begin{center}
$\lambda_{x}\delta_{x}=b\delta_{x}=b\delta_{g^{-1}_{1}
g_{2}y}=bu^*v\delta_{y}=u^*vb\delta_{y}=
\lambda_{y}u^*v\delta_{y}=\lambda_{y}\delta_{x}$.
\end{center}
And hence 
\begin{center}
\begin{equation}{\label{comm equ}}
\lambda_{g^{-1}_{1}g_{2}y}=\lambda_{y}, \ \forall y\in \mathbb{F}_{2}.
\end{equation}
\end{center}.

If $b$ is defined as $b(\delta_{x})=\lambda_{x}\delta_{x}$, where 
\[ \lambda_{x}=   \left\{
\begin{array}{ll}
      0 & \text{if $x$ ends with} \ g_{2}^{-1}\\
      1 & \text{if $x$ ends with} \ g^{-1}_{1}\\
      \frac{1}{2} & \text{otherwise}.
\end{array} 
\right. \]
Then one can check that $b$ satisfies equation \ref{equ for mult.} and hence $b \in \Me$. But $b\notin \mathcal{M}'$ because otherwise, we saw from Equation \ref{comm equ},  $\lambda_{g^{-1}_{1}g_{2}y}=\lambda_{y}$ for all $y\in \mathbb{F}_{2}$. Now putting $y=g^{-1}_{2}$ we have 
$1=\lambda_{g^{-1}_{1}}=\lambda_{g^{-1}_{2}}=0$, a contradiction.
\end{example}

\section{Multiplicative Index of a Unital Quantum Channel}{\label{mult index}}
In this section we discover more intrinsic properties  of a unital channel concerning its multiplicative behaviour. The relationship between spectral properties and the multiplicative nature will be 
explored but first we start with the following lemma which will be useful in subsequent discussion. 
\begin{lemma}{\label{primitive}}
For a unital quantum channel $\Ep$, we have 
\begin{center}
$\mathcal{M}_{\Ep^*\circ\Ep}=\mathcal{F}_
{\Ep^*\circ\Ep}=\mathcal{M}_{\Ep}$.
\end{center}
\end{lemma}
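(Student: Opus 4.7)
The plan is to chain the two claimed equalities through Theorem \ref{mult.dom thm} applied twice: once to $\Ep$ itself and once to the composition $\Phi := \Ep^*\circ\Ep$. One half is immediate: Theorem \ref{mult.dom thm} directly gives $\mathcal{M}_\Ep = \mathcal{F}_{\Ep^*\circ\Ep}$, so the real content is to prove $\mathcal{M}_\Phi = \mathcal{F}_\Phi$.

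For this, I would first observe that $\Phi$ is itself a unital quantum channel. The computation in Corollary \ref{commutant} already shows that $\Ep^*$ is a ucp trace-preserving map when $\Ep$ is, so the composition $\Phi = \Ep^*\circ\Ep$ is again a unital channel on $\Md$. Theorem \ref{mult.dom thm} then applies to $\Phi$, yielding $\mathcal{M}_\Phi = \mathcal{F}_{\Phi^*\circ\Phi}$, where now the adjoint is taken with respect to the HS inner product.

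The key simplification is that $\Phi$ is self-adjoint with respect to the Hilbert-Schmidt inner product, since
\[
\langle \Phi(x),y\rangle_{HS} = \langle \Ep(x), \Ep(y)\rangle_{HS} = \langle x, \Phi(y)\rangle_{HS}
\]
for all $x,y\in\Md$. Consequently $\Phi^*\circ\Phi = \Phi^2$, and the previous identity becomes $\mathcal{M}_\Phi = \mathcal{F}_{\Phi^2}$. It then remains to show $\mathcal{F}_{\Phi^2} = \mathcal{F}_\Phi$. This follows from a spectral argument: $\Phi$ is a positive operator on the HS Hilbert space because $\langle \Phi(x),x\rangle_{HS} = \|\Ep(x)\|^2_{HS} \geq 0$, so $\Phi$ is diagonalizable with non-negative real eigenvalues. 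Any $x$ with $\Phi^2(x)=x$ decomposes into eigenvectors with eigenvalues $\pm 1$, but negative eigenvalues are excluded, forcing $\Phi(x)=x$.

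The main obstacle — more of a conceptual pivot than a technical hurdle — is spotting the self-adjointness of $\Phi$ on $(\Md,\langle\cdot,\cdot\rangle_{HS})$, which collapses the awkward double composition $\Phi^*\circ\Phi$ into the single $\Phi^2$ and allows the spectral step to finish the argument. Once that is seen, the lemma reduces to two invocations of Theorem \ref{mult.dom thm} together with positivity of $\Ep^*\circ\Ep$ as a Hilbert space operator.
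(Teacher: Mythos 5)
Your proof is correct, but it takes a genuinely different route from the paper's. The paper proves the nontrivial inclusion $\mathcal{M}_{\Ep^*\circ\Ep}\subseteq\mathcal{F}_{\Ep^*\circ\Ep}$ by the same trace--Schwarz sandwich that drives Theorem \ref{mult.dom thm}: for $a$ in the multiplicative domain of $\Ep^*\circ\Ep$, the Schwarz equality $\Ep^*\circ\Ep(aa^*)=\Ep^*\circ\Ep(a)\,\Ep^*\circ\Ep(a^*)$ is inserted into a chain of trace inequalities (Schwarz for $\Ep$ and for $\Ep^*$ separately), and faithfulness of the trace forces $\Ep(aa^*)=\Ep(a)\Ep(a^*)$, i.e.\ $a\in\Me=\mathcal{F}_{\Ep^*\circ\Ep}$. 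You instead apply Theorem \ref{mult.dom thm} a second time, to the channel $\Phi=\Ep^*\circ\Ep$, and exploit that $\Phi$ is self-adjoint and positive semidefinite as an operator on $\left(\Md,\langle\cdot,\cdot\rangle_{H.S}\right)$, so that $\mathcal{M}_{\Phi}=\mathcal{F}_{\Phi^*\circ\Phi}=\mathcal{F}_{\Phi^2}=\mathcal{F}_{\Phi}$ by the spectral theorem. Both arguments are sound; the one point worth making explicit in yours is that the adjoint appearing in Theorem \ref{mult.dom thm} is the trace-duality adjoint, and it coincides with the Hilbert--Schmidt adjoint precisely because positive maps are $*$-preserving --- that identification is what legitimises the step $\Phi^*\circ\Phi=\Phi^2$. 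What your route buys is conceptual economy: the nonnegativity of the spectrum of $\Ep^*\circ\Ep$ that you isolate gives the corollary immediately following the lemma (no peripheral eigenvalue of $\Ep^*\circ\Ep$ other than $1$) with no additional work. What the paper's route buys is uniformity and elementarity: the same Schwarz--trace sandwich, needing only trace preservation and faithfulness rather than the spectral theorem, is reused essentially verbatim in Lemma \ref{composition}, Theorem \ref{main theroem} and Proposition \ref{appl. 1}, so the paper's proof of this lemma doubles as the template for the rest of the arguments.
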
 
\begin{proof}
The last equality is from Theorem \ref{mult.dom thm}.
We will establish the first equality of sets. Since the fixed point set is a subalgebra of the multiplicative domain, we automatically have $\mathcal{F}_{\Ep^*\circ\Ep}\subseteq \mathcal{M}_{\Ep^*\circ\Ep}$. For the converse part, let 
$a\in \mathcal{M}_{\Ep^*\circ\Ep}$. So it gives equality in Schwarz inequality for the map $\Ep^*\circ\Ep$ and we get 
\begin{center}
\begin{equation}\label{equ schwarz}
\Ep^*\circ\Ep(aa^*)=\Ep^*\circ\Ep(a)\Ep^*\circ\Ep(a^*)
\end{equation}
\end{center}
Now we have 
\begin{align*}
\tr(aa^*)&=\tr(\Ep^*\circ\Ep(aa^*))\\
&\geq \tr(\Ep^*[\Ep(a)\Ep(a^*)])\\
&\geq \tr(\Ep^*\Ep(a)\Ep^*\Ep(a^*))\\
&= \tr (\Ep^*\circ\Ep(aa^*))\\
&= \tr(aa^*)
\end{align*}
where the first two inequalities follow from Schwarz inequality for $\Ep$ and $\Ep^*$ and then we have used the relation in Equation \ref{equ schwarz}. Since the extreme ends of the above equation are same, we have equalities
in all the inequalities. This gives $\tr((\Ep^*\circ\Ep(aa^*))=\tr(\Ep^*[\Ep(a)\Ep(a^*)])$. Since $\Ep^*$ is trace preserving, we get $\tr(\Ep(aa^*))=\tr(\Ep(a)\Ep(a^*))$. Hence by the faithfulness of trace, we get 
$\Ep(aa^*)=\Ep(a)\Ep(a^*)$ and $a\in \mathcal{M}_{\Ep}=\mathcal{F}_{\Ep^*\circ\Ep}$.   
\end{proof}
Given a linear map $\Phi:\Md\rightarrow\Md$, the spectrum of $\Phi$ which is denoted by $\rm{Spec(\Phi)}$, is defined as
\[\rm{Spec(\Phi)}=\{\lambda\in \mathbb{C}: (\lambda 1-\Phi) \ \text{is not invertible on} \ \Md\},\]
where $1$ denotes the identity operator on $\Md$. Recall that the spectral radius of $\Phi$ which is denoted as $r(\Phi)$, is defined as
\[r(\Phi)=\sup\{|\lambda|:\lambda\in \rm{Spec(\Phi)}\}.\]
It follows that if $\Phi$ is a unital positive map, then $r(\Phi)\leq 1$ and hence all eigenvalues lie in the unit disc of the complex plane (see Proposition 6.1 in \cite{wolf}). For a unital channel $\Phi$ the
set $\rm{Spec(\Phi)}\cap \mathbb{T}$ is called the set of \emph{peripheral eigenvalues}, where $\mathbb{T}$ is the unit circle in the complex plane.
If an element $a\in \Md$ satisfies the relation $\Phi(a)=\lambda a$ for $|\lambda|=1$, then $a$ is called a \emph{peripheral eigenvector} of $\Phi$. Note that the fixed point set of $\Phi$, that is $\mathcal{F}_{\Phi}$, is the set of all peripheral eigenvectors corresponding to the eigenvalue $1$.
With these terminology in hand, we are ready to note down the corollary to Lemma \ref{primitive}.  
 
\begin{corollary}
For any unital channel $\Ep$, the channel $\Ep^*\circ\Ep$ does not have any peripheral eigenvalue other than 1.
\end{corollary}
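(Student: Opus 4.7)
The plan is to suppose $\lambda \in \mathbb{T}$ is a peripheral eigenvalue of $\Psi := \Ep^*\circ\Ep$ with eigenvector $a \neq 0$, and show $\lambda = 1$ by forcing $a$ into the fixed-point set via Lemma \ref{primitive}.

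First I would record that $\Psi$ is itself a unital quantum channel: this was already observed in the proof of Corollary \ref{commutant}, where it is shown that $\Ep^*$ is ucp and trace preserving whenever $\Ep$ is. Thus $\Psi$ satisfies the Schwarz inequality and preserves trace on $\Md$, and we may apply Theorem \ref{mult.dom thm} and Lemma \ref{primitive} to it.

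Next, starting from $\Psi(a) = \lambda a$ with $|\lambda|=1$, I would exploit the Schwarz inequality for $\Psi$:
\begin{equation*}
\Psi(a^*a) \;\geq\; \Psi(a)^*\Psi(a) \;=\; |\lambda|^2\, a^*a \;=\; a^*a.
\end{equation*}
Now trace preservation of $\Psi$ gives $\tr(\Psi(a^*a)) = \tr(a^*a)$, so the positive operator $\Psi(a^*a) - a^*a$ has trace zero; by faithfulness of the trace on $\Md$ it must vanish. Hence $\Psi(a^*a) = \Psi(a)^*\Psi(a)$, which is equality in the Schwarz inequality, and by Theorem \ref{paulsen's exc} this places $a \in \mathcal{M}_{\Psi}$.

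Finally, Lemma \ref{primitive} asserts $\mathcal{M}_{\Psi} = \mathcal{F}_{\Psi}$, so $\Psi(a) = a$. Combined with $\Psi(a) = \lambda a$ and $a \neq 0$, this forces $\lambda = 1$. I do not anticipate a genuine obstacle; the only subtle point is verifying that $\Psi$ itself is a trace-preserving unital channel so that Lemma \ref{primitive} applies, and this is already contained in the preceding material. The argument is essentially a standard ``Schwarz-plus-trace'' rigidity argument, here applied to the composition $\Ep^* \circ \Ep$ rather than to $\Ep$ directly.
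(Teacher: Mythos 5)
Your proof is correct and follows essentially the same route as the paper: use the Schwarz inequality together with trace preservation and faithfulness of the trace to show a peripheral eigenvector lies in the multiplicative domain, then invoke Lemma \ref{primitive} to identify $\mathcal{M}_{\Ep^*\circ\Ep}$ with $\mathcal{F}_{\Ep^*\circ\Ep}$ and force $\lambda=1$. The only cosmetic difference is that the paper phrases it as a contradiction for an arbitrary unital channel before specializing to $\Ep^*\circ\Ep$, while you argue directly with the composition.
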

\begin{proof}
We will show that any peripheral eigenvector corresponding to a peripheral eigenvalue of a unital channel $\Phi$ is in the multiplicative domain $\mathcal{M}_{\Phi}$. To this end,  
let $\lambda(\neq 1)$ is a peripheral eigenvalue of $\Phi$, that is $|\lambda|=1$ and suppose $a\in \Md$ is such that $\Phi(a)=\lambda a$. Now it follows form the positivity of $\Phi$ that $\Phi$ preserves the $*$-operation, that is, $\Phi(x^*)=\Phi(x)^*$, for every $x\in \Md$. Hence $\Phi(a^*)=\bar{\lambda}a^*$.
We get
\begin{align*}
\Phi(aa^*)\geq \Phi(a)\Phi(a^*)&=\lambda a\bar{\lambda}a^*\\
&=aa^*
\end{align*}
Using the trace preservation and faithfulness of trace, we get $\Phi(aa^*)=aa^*$ and hence $a\in \mathcal{M}_{\Phi}$. 

Now for a unital channel $\Ep$, if $a$ is a peripheral eigenvector of $\Ep^*\circ\Ep$ corresponding to a peripheral eigenvalue $\lambda(\neq 1)$, then $a\in \mathcal{M}_{\Ep^*\circ\Ep}$. But Lemma \ref{primitive} asserts that $\mathcal{M}_{\Ep^*\circ\Ep}=\mathcal{F}_
{\Ep^*\circ\Ep}$, which implies $a$ is an eigenvector
of $\Ep^*\circ\Ep$ corresponding to the eigenvalue 1. Hence we get a contradiction.  
\end{proof}
The next lemma sets the foundation of the concept of the multiplicative index of a quantum channel. It gives the description of the multiplicative domain of a composition of two quantum channels.
\begin{lemma}{\label{composition}}
Let $\Ep=\Phi\circ\Psi$ where $\Phi,\Psi$ are two unital quantum channels. Then 
\begin{center}
$\mathcal{M}_{\Ep}=\{a\in \mathcal{M}_{\Psi}:\Psi(a)\in \mathcal{M}_{\Phi}\}$.
\end{center}
\end{lemma}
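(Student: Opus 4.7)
\medskip

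\noindent\textbf{Proof proposal.} The plan is to characterize membership in $\mathcal{M}_{\Ep}$ through equality in the Schwarz inequality (using Theorem \ref{paulsen's exc} combined with Choi's observation recalled at the start of Section \ref{Fixed Point and Multiplicative Domain section}), and then to iterate the Schwarz inequality once through $\Psi$ and once through $\Phi$. The easy containment is $\supseteq$: if $a\in\mathcal{M}_{\Psi}$ and $\Psi(a)\in\mathcal{M}_{\Phi}$, then
\[\Ep(aa^*)=\Phi(\Psi(aa^*))=\Phi(\Psi(a)\Psi(a^*))=\Phi(\Psi(a))\Phi(\Psi(a^*))=\Ep(a)\Ep(a^*),\]
and symmetrically for $a^*a$; invoking Choi's characterization gives $a\in\mathcal{M}_{\Ep}$.

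For the reverse containment, assume $a\in\mathcal{M}_{\Ep}$ and consider the double Schwarz chain
\[\Phi(\Psi(aa^*))\geq \Phi(\Psi(a)\Psi(a^*))\geq \Phi(\Psi(a))\Phi(\Psi(a^*))=\Ep(a)\Ep(a^*)=\Ep(aa^*)=\Phi(\Psi(aa^*)).\]
Since the extreme ends coincide, both inequalities are equalities. The second one immediately yields $\Phi(\Psi(a)\Psi(a^*))=\Phi(\Psi(a))\Phi(\Psi(a^*))$, which (together with the analogous identity for $\Psi(a)^*\Psi(a)$, obtained by repeating the argument with $a^*a$ in place of $aa^*$) shows via Choi's characterization that $\Psi(a)\in\mathcal{M}_{\Phi}$.

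The main step, which I expect to be the subtle point, is to promote the first equality to multiplicativity of $\Psi$ on $a$. Set $T:=\Psi(aa^*)-\Psi(a)\Psi(a^*)$. The Schwarz inequality for $\Psi$ gives $T\geq 0$, while the first equality in the chain says $\Phi(T)=0$. Because $\Ep=\Phi\circ\Psi$ is a unital channel, both $\Psi$ and $\Phi$ are trace preserving, so
\[0=\tr(\Phi(T))=\tr(T)=\tr\bigl(\Psi(aa^*)-\Psi(a)\Psi(a^*)\bigr).\]
Faithfulness of the trace applied to the positive element $T$ forces $T=0$, i.e., $\Psi(aa^*)=\Psi(a)\Psi(a^*)$. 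Running the identical argument starting from $\Ep(a^*a)=\Ep(a)^*\Ep(a)$ produces $\Psi(a^*a)=\Psi(a)^*\Psi(a)$, and a final appeal to Choi's characterization gives $a\in\mathcal{M}_{\Psi}$. Combining with the previous paragraph establishes the claimed equality. The essential ingredient behind the hard direction is the trace preservation of $\Phi$ (equivalently, the unitality of $\Phi^*$), exactly as in the proof of Theorem \ref{mult.dom thm}.
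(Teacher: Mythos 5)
Your proof is correct and follows essentially the same route as the paper: the same double Schwarz chain $\Phi(\Psi(aa^*))\geq\Phi(\Psi(a)\Psi(a^*))\geq\Phi(\Psi(a))\Phi(\Psi(a^*))$ with equal endpoints, then trace preservation of $\Phi$ plus faithfulness of the trace to force $\Psi(aa^*)=\Psi(a)\Psi(a^*)$. Your explicit treatment of the element $T=\Psi(aa^*)-\Psi(a)\Psi(a^*)$ and of the $a^*a$ case merely spells out steps the paper leaves implicit.
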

\begin{proof}
If $a\in \mathcal{M}_{\Psi}$ such that $\Psi(a)\in \mathcal{M}_{\Phi}$, then
\begin{center} $\Ep(aa^*)=\Phi(\Psi(aa^*))=
\Phi(\Psi(a)\Psi(a^*))=\Phi\circ\Psi(a)\Phi\circ\Psi
(a^*)=\Ep(a)\Ep(a^*)$.
\end{center}
and hence $a\in \mathcal{M}_{\Ep}$.

Conversely, let $a\in \mathcal{M}_{\Ep}$. Then 
\begin{align*}
\Ep(a)\Ep(a^*)=\Ep(aa^*)&=\Phi\circ\Psi(aa^*)\\
&\geq \Phi(\Psi(a)\Psi(a^*))\\
&\geq \Phi(\Psi(a))\Phi(\Psi(a^*))\\
&=\Ep(a)\Ep(a^*).
\end{align*}
Hence all the inequalities must be equalities and we first get $\Phi\circ\Psi(aa^*)=\Phi(\Psi(a)\Psi(a^*))$. Now the trace preservation property of $\Phi$ would imply $\Psi(aa^*)=\Psi(a)\Psi(a^*)$ and we get $a\in \mathcal{M}_{\Psi}$.
Using the second inequality it's immediate that $\Psi(a)
\in \mathcal{M}_{\Phi}$. 
\end{proof}
Now we can proceed in exploring the multiplicative domain of compositions of a channel $\Ep$ with itself.
\begin{corollary}{\label{power relation}}
For $\Ep^n=\Ep\circ\cdots\circ\Ep$ ($n$-times, $n\in \mathbb{N}$), we have 
\begin{center}
$\mathcal{M}_{\Ep^n}=\{a\in \mathcal{M}_{\Ep^{n-1}}:\Ep(a)\in \mathcal{M}_{\Ep^{n-1}}\}$.
\end{center}
In particular, we have the following inclusion
\begin{center} $\mathcal{M}_{\Ep}\supseteq\mathcal{M}_{\Ep^2}\supseteq
\cdots \supseteq \mathcal{M}_{\Ep^n}\supseteq\cdots$.
\end{center}
\end{corollary}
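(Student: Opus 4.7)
The plan is to derive the corollary by iterating Lemma \ref{composition}. The key observation is that $\Ep^n$ admits two natural decompositions, $\Ep^n = \Ep\circ\Ep^{n-1}$ and $\Ep^n = \Ep^{n-1}\circ\Ep$, each of which yields a useful description of $\mathcal{M}_{\Ep^n}$ via Lemma \ref{composition}. Note that both $\Ep$ and $\Ep^{n-1}$ are unital quantum channels (the latter because the composition of unital quantum channels is a unital quantum channel), so the hypotheses of Lemma \ref{composition} are met in either direction.

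First I would apply Lemma \ref{composition} with $\Phi = \Ep$ and $\Psi = \Ep^{n-1}$ to get
\[ \mathcal{M}_{\Ep^n} = \{a\in \mathcal{M}_{\Ep^{n-1}} : \Ep^{n-1}(a)\in \mathcal{M}_{\Ep}\}. \]
This description makes the nested inclusion $\mathcal{M}_{\Ep^n}\subseteq \mathcal{M}_{\Ep^{n-1}}$ immediate, and iterating over $n$ (with base case $n=2$ coming from Lemma \ref{composition} with $\Phi=\Psi=\Ep$) yields the descending chain claimed in the second part of the statement.

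Next I would apply Lemma \ref{composition} with the opposite assignment, $\Phi = \Ep^{n-1}$ and $\Psi = \Ep$, to obtain the complementary formula
\[ \mathcal{M}_{\Ep^n} = \{a\in \mathcal{M}_{\Ep} : \Ep(a)\in \mathcal{M}_{\Ep^{n-1}}\}. \]
Combining these two descriptions produces the desired characterisation. Indeed, if $a\in \mathcal{M}_{\Ep^n}$, then the first identity gives $a\in \mathcal{M}_{\Ep^{n-1}}$ and the second gives $\Ep(a)\in \mathcal{M}_{\Ep^{n-1}}$. Conversely, if $a,\Ep(a)\in \mathcal{M}_{\Ep^{n-1}}$, then by the already-established chain we have $a\in \mathcal{M}_{\Ep^{n-1}}\subseteq \mathcal{M}_{\Ep}$, so the second description forces $a\in \mathcal{M}_{\Ep^n}$.

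I do not expect a genuine obstacle here: once Lemma \ref{composition} is in hand the argument is pure bookkeeping. The only point that requires care is keeping the two decompositions straight so that the roles of $\Phi$ and $\Psi$, and hence which of $\Ep$ or $\Ep^{n-1}$ appears ``on the inside'', are not interchanged.
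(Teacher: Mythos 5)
Your argument is correct and follows the route the paper intends: the corollary is stated as an immediate consequence of Lemma \ref{composition}, and your use of the two decompositions $\Ep^n=\Ep\circ\Ep^{n-1}$ and $\Ep^n=\Ep^{n-1}\circ\Ep$, together with the chain $\mathcal{M}_{\Ep^{n-1}}\subseteq\mathcal{M}_{\Ep}$, correctly supplies the bookkeeping the paper omits.
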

Now since multiplicative domain of $\Ep^n$ is a C$^*$-algebra for each $n\in \mathbb{N}$ and the underlying subspace is of finite dimension, the above decreasing chain of subalgebras will stabilise at a fixed subalgebra. Let us denote this sub-algebra as $\Mi$ i.e 
\begin{align*}
\Mi=\bigcap_{n=1}^{\infty}\mathcal{M}_{\Ep^n}
\end{align*}
We will see that on this subalgebra $\Mi$, $\Ep$ acts as an automorphism. Also, it is not necessarily true that if $a\in \Me$, then $\Ep(a)\in \Me$. However, it will be evident that if $a\in \Mi$, then $\Ep(a)\in \Mi$. Note that St{\o}rmer in \cite{stormer2007} deals with a set related to a positive map, which he calls  the \emph{multiplicative core}
and proves that the positive linear map when restricted to this set, is a Jordan automorphism. In our context, the underlying space is of finite dimension and the linear map is completely positive which is stronger than positivity. The indispensable effect of the adjoint of a quantum channel  
in this whole discussion about multiplicative property, might be traced back to the work of K{\"u}mmerer in \cite{kummerer}, Sec. 3.2 where the author deals with dilations of asymptotic automorphic dynamical systems.
Now we state and prove the main theorem of this section.
\begin{theorem}{\label{main theroem}}
Let $\Ep:\Md\rightarrow\Md$ be a unital quantum channel.
Then
\begin{enumerate}
\item There exists a subalgebra of $\Md$, namely $\Mi$,  upon which $\Ep$ acts as a bijective homomorphism with the inverse being the adjoint $\Ep^*$.
\item $\Md$ decomposes into two orthogonal subspaces as $\Md=\Mi\bigoplus\Mi^\perp$, where the orthogonality is with respect to the Hilbert-Schmidt inner product. Moreover, we have a precise description for the set $\Mi^\perp$ given by
\begin{center}
$\Mi^\perp=\{x\in \Md:\displaystyle\lim_{n\to\infty}\|\Ep^n(x)\|=0\}$.
\end{center}
\item The spectrum of this automorphism is equal to the peripheral spectrum of $\Ep$, that is  $\rm{Spec(\Ep|_{\Mi})=Spec(\Ep)\cap \mathbb{T}}$. Moreover, if $\mathcal{N}_{\Ep}=\{a\in \Md:\Ep(a)=\lambda a, |\lambda|=1\}$, then $\Mi$ is the algebra generated by $\Ne$. 
\end{enumerate}
\end{theorem}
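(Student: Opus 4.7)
The plan is to establish the three parts in the order $(1)$, $(3)$, $(2)$, because the spectral statement in $(3)$ is what drives the asymptotic characterization in $(2)$.

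\textbf{Part $(1)$.} Since $\Md$ is finite dimensional, the descending chain in Corollary \ref{power relation} must stabilise at some index $n_{0}$, so $\Mi = \mathcal{M}_{\Ep^{n_{0}}}$. I would first verify $\Ep(\Mi)\subseteq\Mi$: for $a\in\Mi$ and any $n\geq 1$, membership of $a$ in $\mathcal{M}_{\Ep^{n+1}}$ combined with Corollary \ref{power relation} forces $\Ep(a)\in\mathcal{M}_{\Ep^{n}}$, hence $\Ep(a)\in\Mi$. The inclusion $\Mi\subseteq\Me$ makes $\Ep|_{\Mi}$ a $\ast$-homomorphism, and the identification $\Me=\mathcal{F}_{\Ep^{*}\circ\Ep}$ from Theorem \ref{mult.dom thm} yields $\Ep^{*}\circ\Ep = \mathrm{id}_{\Mi}$. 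Thus $\Ep|_{\Mi}$ is injective, hence bijective by finite dimensionality; surjectivity then forces $\Ep^{*}$ to restrict to a two-sided inverse of $\Ep|_{\Mi}$ on $\Mi$.

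\textbf{Part $(3)$.} Both $\Ep$ and $\Ep^{*}$ are unital and trace preserving, and hence HS-contractive via the Schwarz inequality combined with trace preservation. If $\lambda$ is an eigenvalue of $\Ep|_{\Mi}$ on some $v$, then $\Ep^{*}(v)=\lambda^{-1}v$ by part $(1)$, and contractivity of both maps forces $|\lambda|=1$. Conversely, the corollary to Lemma \ref{primitive} already places every peripheral eigenvector of $\Ep$ in $\Me$, and since $\Ep(v)=\lambda v$ is again peripheral, an induction via Corollary \ref{power relation} puts $v$ in every $\mathcal{M}_{\Ep^{n}}$, hence in $\Mi$. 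Thus $\mathrm{Spec}(\Ep|_{\Mi})=\mathrm{Spec}(\Ep)\cap\mathbb{T}$. A Jordan block at an eigenvalue of modulus one would force $\|\Ep^{k}v\|$ to grow polynomially in $k$, contradicting HS-contractivity, so $\Ep|_{\Mi}$ is diagonalisable; consequently $\Mi$ is the linear span of eigenvectors lying in $\Ne$, which, being already a subalgebra, coincides with the algebra generated by $\Ne$.

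\textbf{Part $(2)$.} The orthogonal decomposition $\Md=\Mi\oplus\Mi^{\perp}$ in the HS inner product is automatic. The complement $\Mi^{\perp}$ is $\Ep$-invariant: for $x\in\Mi^{\perp}$ and $a\in\Mi$, the adjoint relation together with $\Ep^{*}(a)\in\Mi$ from part $(1)$ gives $\langle \Ep(x),a\rangle=\langle x,\Ep^{*}(a)\rangle=0$. By part $(3)$ every peripheral eigenspace of $\Ep$ is swallowed by $\Mi$, so the spectrum of $\Ep|_{\Mi^{\perp}}$ lies strictly inside the unit disc, giving $\Ep^{n}|_{\Mi^{\perp}}\to 0$ in operator norm and hence $\|\Ep^{n}(x)\|\to 0$ for every $x\in\Mi^{\perp}$. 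Conversely, if $\|\Ep^{n}(x)\|\to 0$, then for every $a\in\Mi$ the identity $(\Ep^{*})^{n}\Ep^{n}(a)=a$ (iterating part $(1)$) combined with the adjoint relation gives $\langle x,a\rangle=\langle \Ep^{n}(x),\Ep^{n}(a)\rangle$, and Cauchy--Schwarz plus HS-contractivity drive this to zero, forcing $x\in\Mi^{\perp}$.

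\textbf{Main obstacle.} The delicate step is proving that $\Ep|_{\Mi^{\perp}}$ has spectral radius strictly less than one. This hinges on $\Mi$ absorbing every peripheral eigenspace of $\Ep$ with full multiplicity and with no residual peripheral Jordan mass in $\Mi^{\perp}$, which is exactly what part $(3)$ guarantees; without it one obtains only boundedness of $\{\Ep^{n}\}$ on $\Mi^{\perp}$, not the required convergence to zero.
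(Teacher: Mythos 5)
Your proposal is correct, but it reaches part (2) by a genuinely different route than the paper. The paper proves (2) before (3) and entirely without spectral input: for $a\in\Mi^{\perp}$ it shows, via the Schwarz inequality and trace preservation, that $\{\|\Ep^{n}(a)\|_{H.S}\}_{n}$ is decreasing, extracts a limit point $a_{0}$, shows $a_{0}\in\Mi$ by passing the multiplicativity defect to the limit, shows $\Ep^{k}(\Mi^{\perp})\subseteq\Mi^{\perp}$ by pairing against $\Ep^{k}(\Mi)=\Mi$, and concludes $a_{0}\in\Mi\cap\Mi^{\perp}=\{0\}$ (following St{\o}rmer's method); only afterwards does it prove (3), getting diagonalisability from the fact that $\Ep|_{\Mi}$, having $\Ep^{*}|_{\Mi}$ as a two-sided inverse and HS-adjoint, is a unitary on the Hilbert space $\Mi$, hence normal. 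You instead prove (3) first (with a Jordan-block growth obstruction in place of normality, which works equally well given HS-contractivity of $\Ep$ and $\Ep^{*}$) and then deduce (2) by noting that $\Mi^{\perp}$ is $\Ep$-invariant and that any peripheral eigenvalue of $\Ep|_{\Mi^{\perp}}$ would produce a peripheral eigenvector outside $\Mi$, so the spectral radius of $\Ep|_{\Mi^{\perp}}$ is strictly less than one and $\Ep^{n}|_{\Mi^{\perp}}\to 0$; your converse inclusion via $\langle x,a\rangle=\langle\Ep^{n}(x),\Ep^{n}(a)\rangle$ is clean and matches what the paper leaves implicit. Your derivation of $\Ep(\Mi)\subseteq\Mi$ directly from Corollary \ref{power relation} is also slicker than the paper's repeated Schwarz computation. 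What each buys: the paper's monotone-norm argument is soft, self-contained and independent of (3); your spectral-radius argument yields more, namely exponential decay of $\Ep^{n}$ on $\Mi^{\perp}$ at a rate governed by the spectral radius of the restriction, at the cost of having to secure (3) (including the absence of peripheral eigenvectors in $\Mi^{\perp}$) beforehand, which you do correctly.
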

\begin{proof}
$1$. We will first show that $\Ep(\Mi)\subseteq\Mi$. To see this, let $a\in \Mi$. For any $k\in \mathbb{N}$, we have 
\begin{align*}
\Ep^{k+1}(aa^*)=\Ep^{k}(\Ep(aa^*))&=\Ep^{k}(\Ep(a)\Ep(a^*))\\
&\geq \Ep^{k}\Ep(a)\Ep^{k}\Ep(a^*)\\
&=\Ep^{k+1}(a)\Ep^{k+1}(a^*)\\
&=\Ep^{k+1}(aa^*).
\end{align*}
Where we have just used the Schwarz inequality for the map $\Ep^{k}$ and the equality follows because of the fact that $a\in \Mi=\bigcap_{n=1}^{\infty}\mathcal{M}_{\Ep^n}$.
Clearly, we have equality in all the inequalities and we obtain $\Ep^{k}(\Ep(a)\Ep(a^*))=\Ep^{k}(\Ep(a))\Ep^{k}(\Ep(a^*))$. Hence $\Ep(a)\in \mathcal{M}_{\Ep^k}$ for every $k\in \mathbb{N}$ and we get $\Ep(a)\in \Mi$.

Injectivity follows easily; let $a\in \Mi$ such that $\Ep(a)=0$, then
\begin{center} $\tr(aa^*)=\tr(\Ep(aa^*))=\tr(\Ep(a)\Ep(a^*))=0$,
\end{center}
which forces $a=0$. So $\Ep:\Mi\rightarrow\Mi$ is an injective linear map. Since $\Mi$ is finite dimensional vector space, the rank-nullity theorem asserts that $\Ep$ is surjective. \\
Now we prove that the inverse of $\Ep|_{\Mi}$ is $\Ep^*$. From Theorem \ref{mult.dom thm}, it is evident that on $\Mi\subseteq\Me$, $\Ep^*\circ\Ep=1$ where $1$
is the identity operator. As $\Ep$ is bijective on $\Mi$, for any $a\in \Mi$, there exists an element  $b\in \Mi$ such that $\Ep(b)=a$. Applying the adjoint both sides we get $\Ep^*\Ep(b)=\Ep^*(a)$. But $\Mi$ is a subset of the fixed point of $\Ep^*\circ\Ep$ and hence we get $b=\Ep^*(a)$. As $a$ was arbitrary, we have proved $\Ep^*(\Mi)\subseteq\Mi$. One can show similarly as was shown for $\Ep$, that $\Ep^*$ is a bijective homomorphism on $\Mi$. Now to show $\Ep^*$ is the right inverse, we let $a\in \Mi$ and find a $b\in \Mi$ such that $\Ep(b)=a$ and hence $\Ep\circ\Ep^*(a)=\Ep\circ\Ep^*(\Ep(b))=\Ep(b)=a$. So on $\Mi$, we get $\Ep\circ\Ep^*=\Ep^*\circ\Ep=1$.

2. The decomposition of $\Md$ is now clear since $\Mi$ is an invariant subspace for both $\Ep$ and $\Ep^*$ as was shown in part 1 of the proof. 
Now to get the characterisation of $\Mi^\perp$ we follow the method taken by St{\o}rmer in \cite{stormer2007}. Let $a\in \Mi^\perp$. Then we compute for any $n\in \mathbb{N}$, 
\begin{align*}
||\Ep^{n+1}(a)||^2_{H.S}&=\tr(\Ep^{n+1}(a)\Ep^{n+1}(a^*))\\
&\leq \tr(\Ep(\Ep^n(a)\Ep^n(a^*)))\\ 
&=\tr(\Ep^n(a)\Ep^n(a^*))\\
&=||\Ep^n(a)||^2_{H.S}
\end{align*}
So $\{||\Ep^{n}(a)||^2_{H.S}\}_{n=1}^{\infty}$ is a decreasing sequence and suppose $\Ep^n(a)\rightarrow a_{0}$ in the H.S sense. We will show that $a_{0}\in \Mi$. 
Since $\{||\Ep^{n}(a)||^2_{H.S}\}_{n=1}^{\infty}$ is decreasing, we have 
\begin{center}
\begin{equation}{\label{decresing}}
||\Ep^{n}(a)||^2_{H.S}-||\Ep^{n+1}(a)||^2_{H.S}\rightarrow 0 \ as \  n\rightarrow\infty.
\end{equation} 
\end{center}
By Schwarz inequality we see that 
\begin{center}
$\Ep(\Ep^n(a)\Ep^n(a^*))-\Ep^{n+1}(a)\Ep^{n+1}(a^*)\geq 0$
\end{center}
Taking trace and using Equation  \ref{decresing}, we obtain $\Ep(\Ep^n(a)\Ep^n(a^*))=\Ep^{n+1}(a)\Ep^{n+1}(a^*)$ as $n\rightarrow \infty$ i.e $\lim_{n\to\infty}\Ep^n(a)\in \Me$. Using this argument 
repeatedly we can show $\displaystyle\lim_{n\to\infty}\Ep^n(a)\in \mathcal{M}_{\Ep^k}$ for every $k\geq 1$. And we obtain 
$\displaystyle\lim_{n\to\infty}\Ep^n(a)\in \Mi$. Since the underlying space is a finite dimensional C$^*$-algebra, the weak limit is the norm limit and because $\Mi$ is a C$^*$-algebra, we find $a_{0}\in \Mi$. 

Lastly, if $a\in \Mi^\perp$, then $\Ep^k(a)\in \Mi^\perp$ for any $k\geq1$. Indeed, note that for every such $k$, $\Ep^k$ is bijective on $\Mi$ and for any $b\in \Mi$, there exists an element  $c\in \Mi$ such that $\Ep^k(c)=b$. So for any $b\in \Mi$, we have 
\begin{center}
$\tr(\Ep^k(a)b)=\tr(\Ep^k(a)\Ep^k(c))=\tr(\Ep^k(ac))=\tr(ac)
=0$.
\end{center} 
Hence $\Ep^k(a)\in \Mi^\perp$ for all $k\geq1$.
So we have $a_{0}\in \Mi\cap\Mi^\perp$, which forces $a_{0}=0$. Hence we get 
\begin{center}
$\Mi^\perp=\{x\in \Md:\displaystyle\lim_{n\to\infty}||\Ep^n(x)||_{H.S}=0\}$.
\end{center}
As $||\cdot||\leq ||\cdot||_{H.S}$, we have the desired description of the set.

3. We first prove that the eigen operators corresponding to the peripheral eigenvalues algebraically span the set $\Mi$.
It is not hard to see that $\Ne\subseteq \Mi$. Indeed if $\Ep(a)=\lambda a$ where $|\lambda|=1$, then we get $\Ep^k(a)=\lambda^k a$, for any $k\geq1$. Hence $\Ep^k(aa^*)\geq \Ep^k(a)\Ep^k(a^*)=\lambda^k\bar{\lambda}^kaa^*=aa^*$.
Taking trace and using the faithfulness of trace, we get 
$\Ep^k(aa^*)=\Ep^k(a)\Ep^k(a^*)$. Hence it follows that 
$a\in \mathcal{M}_{\Ep^k}$ for all $k\geq1$ and subsequently, $a\in \Mi$. Hence the algebra generated by $\Ne$ is contained in $\Mi$.

Conversely, the map $\Ep:\Mi\rightarrow\Mi$ is a bijection which satisfies $\Ep\circ\Ep^*=\Ep^*\circ\Ep=1$ that is a unitary operator on the Hilbert subspace $\Mi$ equipped with the Hilbert-Schmidt inner product. In particular, $\Ep$ is a normal operator and hence diagonalisable that is there exists a basis of eigenvectors
of $\Ep$ which span the entire space $\Mi$. Now if $a\in \Mi$ such that $\Ep(a)=\lambda a$, then $\Ep(aa^*)=\Ep(a)\Ep(a^*)=\lambda\bar{\lambda}aa^*$. Taking trace in both sides we see that $\lambda\bar{\lambda}=|\lambda|^2=1$ that is $a$ is one of the peripheral eigen operators of $\Ep$. So, $\Mi$ is spanned by the eigen operators corresponding the peripheral eigenvalues of $\Ep$ showing $\Mi$ is contained in the algebra spanned by $\Ne$ and hence we get the required equality.
It is now obvious that $\rm{Spec(\Ep|_{\Mi})=Spec(\Ep)\cap\mathbb{T}}$.
\end{proof}
\begin{remark}
Theorem \ref{main theroem} displays the stable  behaviour of the channel on the algebra generated by the peripheral eigen-operators and on its complementary part, $\Ep$ asymptotically approaches to zero. In \cite{arveson2}, similar results were obtained for a unital completely positive maps (not necessarily trace preserving) where the given map acts as an isometry on the operator system spanned by $\Ne$. See also \cite{veronica} for the related discussion on quantum Markov semigroups. 
 Our attention has been finite dimensional C$^*$-algebra and the maps $\Ep$ were unital completely positive and trace preserving. The existence of the adjoint $\Ep^*$ and subsequently the identity $\Me=\mathcal{F}_{\Ep^*\circ\Ep}$ helps to have a very different approach to this topic from the above mentioned cases.
\end{remark}
\begin{remark}
The decreasing chain of subalgebras 
\begin{center} $\mathcal{M}_{\Ep}\supseteq\mathcal{M}_{\Ep^2}\supseteq
\cdots \supseteq \mathcal{M}_{\Ep^n}\supseteq\cdots$
\end{center}
must stabilise at a finite stage because the underlying 
space is of finite dimension. Call  $\kappa$ to be the smallest number for which  we have $\mathcal{M}_{\Ep^n}=\mathcal{M}_{\Ep^\kappa}$ , $n\geq \kappa$. It is evident that $\Mi=\mathcal{M}_{\Ep^\kappa}$. This $\kappa$ is uniquely determined for every quantum channel $\Ep$ and can be used to differentiate between two quantum channels. We call this $\kappa$ as the \emph{multiplicative index} of $\Ep$. 
\end{remark}
Now we note down the following corollary which will be useful in future reference. Note that a channel is called \emph{normal} or \emph{diagonalisable} if $\Ep^*\circ\Ep=\Ep\circ\Ep^*$.    
\begin{corollary}{\label{diagonalisable}}
If a unital channel $\Ep$ satisfies $\Ep^*\circ\Ep=\Ep\circ\Ep^*$, then $\Mi=\Me$. Hence the multiplicative index of such channels is 1.
\end{corollary}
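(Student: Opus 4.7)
The plan is to show that normality of $\Ep$ (as an operator on the Hilbert--Schmidt space) forces $\Me$ to already be $\Ep$-invariant, so no further intersection is needed to obtain $\Mi$. The key identity is Theorem \ref{mult.dom thm}, which tells us that $\Me = \mathcal{F}_{\Ep^*\circ\Ep}$. Combined with the commutation hypothesis $\Ep^*\circ\Ep = \Ep\circ\Ep^*$, this characterization is exactly what makes $\Ep(\Me) \subseteq \Me$ trivial.

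More concretely, I would proceed as follows. Let $a \in \Me$. By Theorem \ref{mult.dom thm} we have $\Ep^*\circ\Ep(a) = a$. Applying $\Ep$ and then using the assumed commutation gives
\begin{align*}
\Ep^*\circ\Ep(\Ep(a)) = \Ep\circ\Ep^*\circ\Ep(a) = \Ep(a),
\end{align*}
so $\Ep(a) \in \mathcal{F}_{\Ep^*\circ\Ep} = \Me$. Thus $\Ep(\Me) \subseteq \Me$. Now I would feed this into Corollary \ref{power relation}: for $n=2$, membership of $a$ in $\mathcal{M}_{\Ep^2}$ is equivalent to $a \in \Me$ together with $\Ep(a) \in \Me$, both of which we just verified. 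Iterating, if $a \in \mathcal{M}_{\Ep^{n-1}}$ then $\Ep(a) \in \Me \subseteq \mathcal{M}_{\Ep^{n-2}}$, and an inductive argument using the invariance of $\Me$ at every step shows $\Ep(a) \in \mathcal{M}_{\Ep^{n-1}}$, hence $a \in \mathcal{M}_{\Ep^n}$.

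Therefore $\Me \subseteq \bigcap_{n\ge 1} \mathcal{M}_{\Ep^n} = \Mi$, and since the reverse inclusion is obvious from the decreasing chain in Corollary \ref{power relation}, we obtain $\Me = \Mi$. Because the stabilisation already happens at $n=1$, the multiplicative index $\kappa$ is $1$ by definition.

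I do not anticipate a real obstacle: the only slightly delicate point is making sure the inductive step in the last paragraph is carried out using Corollary \ref{power relation} rather than attempting to re-derive it, and checking that the commutation hypothesis $\Ep^*\circ\Ep = \Ep\circ\Ep^*$ is exactly what is needed to transport the fixed point relation $\Ep^*\Ep(a)=a$ through one more application of $\Ep$. Everything else is bookkeeping.
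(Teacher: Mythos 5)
Your proof is correct, and it reaches the conclusion by a genuinely more direct route than the paper at the key step. The paper also reduces everything to showing $\Ep(a)\in\Me$ for $a\in\Me$, but it does so with its recurring trace machinery: from $\Ep^*\circ\Ep(a)=a$ and the commutation it derives $(\Ep^*)^2\circ\Ep^2(a)=a$, then runs a chain $\tr(aa^*)=\tr(\Ep^2(aa^*))\geq\tr(\Ep^2(a)\Ep^2(a^*))=\tr((\Ep^*)^2\Ep^2(a)a^*)=\tr(aa^*)$, and uses faithfulness of the trace plus the Schwarz inequality to force $\Ep(a)\in\Me$, before invoking Corollary \ref{power relation} and repeating. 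You instead observe that normality transports the fixed-point relation in one line: $\Ep^*\circ\Ep(\Ep(a))=\Ep(\Ep^*\circ\Ep(a))=\Ep(a)$, so $\Ep(a)\in\mathcal{F}_{\Ep^*\circ\Ep}=\Me$ by Theorem \ref{mult.dom thm}, i.e.\ $\Me$ is $\Ep$-invariant, and then Corollary \ref{power relation} gives $\mathcal{M}_{\Ep^n}=\Me$ for all $n$ by a clean induction (inductively $\mathcal{M}_{\Ep^{n-1}}=\Me$, so $\mathcal{M}_{\Ep^n}=\{a\in\Me:\Ep(a)\in\Me\}=\Me$). Your version is purely algebraic, avoids the Schwarz-plus-faithful-trace argument entirely, and makes transparent exactly what the hypothesis $\Ep^*\circ\Ep=\Ep\circ\Ep^*$ is doing; the paper's version buys consistency with the technique it uses elsewhere (where only inequalities, not an exact algebraic identity, are available), but here that machinery is not needed. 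One cosmetic remark: in your inductive step it is cleaner to carry the hypothesis $\mathcal{M}_{\Ep^{n-1}}=\Me$ explicitly rather than passing through $\mathcal{M}_{\Ep^{n-2}}$, but the argument as sketched is sound.
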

\begin{proof}
If $a\in \Me=\mathcal{F}_{\Ep^*\circ\Ep}$, then $\Ep^*\circ\Ep(a)=a$ and applying $\Ep^*\circ\Ep$ once again we get $\Ep^*\circ\Ep(\Ep^*\circ\Ep(a))=a$. Now using the commutativity, we get $(\Ep^*)^2\circ\Ep^2(a)=a$. We will show $\mathcal{M}_{\Ep^2}=\Me$. 

For $a\in \Me$, we have
\begin{align*}
\tr(aa^*)=\tr(\Ep^2(aa^*))&=\tr(\Ep(\Ep(a)\Ep(a^*)))\\
&\geq \tr(\Ep^2(a)\Ep^2(a^*))\\
&=\tr((\Ep^*)^2\Ep^2(a)a^*)\\
&=\tr(aa^*).
\end{align*}
So the inequality must be an equality and hence we get
$\Ep(a)\in \Me$ which yields $a\in \mathcal{M}_{\Ep^2}$. Now by Corollary \ref{power relation}, we get $\mathcal{M}_{\Ep^2}=\Me$. This process can be repeated for every $n\in \mathbb{N}$ to get $a\in \mathcal{M}_{\Ep^n}$, which forces $a\in \Mi$.
\end{proof}
Recall that the qubit Pauli operators are described by the following $2\times 2$ matrices
\[I=\left[\begin{array}{cc}
1&0\\
0&1
\end{array}\right], X=\left[\begin{array}{cc}
0&1\\
1&0
\end{array}\right], Y=\left[\begin{array}{cc}
0&-i\\
i&0
\end{array}\right],Z=\left[\begin{array}{cc}
1&0\\
0&-1
\end{array}\right] .\]
Channels whose Kraus decompositions consist of Pauli operators are called Pauli Channels. The generalised Pauli channels in dimension $d$ consists of random mixtures of unitaries in the discrete Weyl-Heisenberg representation. The Pauli or generalised Pauli channels are diagonalisable (\cite{ergodic}) and hence they all have multiplicative index 1. 
\begin{remark}
 By Theorem \ref{main theroem}, we get a decomposition of the unital channel $\Ep$ as follows  $\Ep=\begin{pmatrix}
\Ep_{0} & 0\\
0  & \Ep_{1}\\
\end{pmatrix}$,
where $\Ep_{0}$ is the automorphism on $\Mi$ and $\Ep_{1}$ is another quantum channel. This decomposition of $\Ep$ gives more information than the \emph{Jordan decomposition} for $\Ep$. Recall that a linear operator $\Phi:\Md\rightarrow\Md$, regarded as an element of $\mathcal{M}_{d^2}$, admits a \emph{Jordan decomposition} of the form 
\begin{align*}
\Phi=P\big (\bigoplus J_{k}(\lambda_{k}) \big )P^{-1}, \ 
J_{k}(\lambda)=\begin{pmatrix}
\lambda & 1 &  \\
 & \ddots & 1 \\
 & & \lambda
\end{pmatrix},
\end{align*}
where the $J_{k}$'s are Jordan blocks of size $d_{k}$ and $\sum_{k}d_{k}=d^2$. From the \emph{Jordan decomposition} it follows that each of the peripheral eigenvalues for a trace preserving or unital positive map $\Phi$, has one-dimensional Jordan blocks (see for example \cite{wolf}, Proposition 6.2). Although the Jordan decomposition is very useful in studying eigen values and their locations, the multiplicative nature of a channel cannot be investigated by just looking at the Jordan decomposition.
\end{remark}
 
Now we enlist some examples showing the different values of the multiplicative index $\kappa$ of unital channels and the corresponding algebras $\Mi$. In what follows $e_{1}, \cdots, e_{d}$ denote the standard basis of $\mathbb{C}^d$. For any two vectors
$\xi,\eta\in \mathbb{C}^d$, the rank one operator $\xi\eta^*:\mathbb{C}^d\rightarrow\mathbb{C}^d$ is defined by $\xi\eta^*(x)=\langle x,\eta\rangle \xi$, for all $x\in \mathbb{C}^d$.
\begin{example}\label{unitary}
If $\Ep$ is a unitary channel that is $\Ep(x)=uxu^*$ for some unitary $u$ and for all $x\in \mathcal{M}_{d}$, then $\Ep$ is multiplicative in the whole domain and $\Mi=\Md$ and $\kappa=1$.
\end{example}
\begin{example}\label{discrete four.for 3}
Let $\omega\in \mathbb{C}$ be such that $\omega^{3}=1$. Define a quantum channel $\Ep:\mathcal{M}_{3}\rightarrow\mathcal{M}_{3}$ in the following way:
\[\Ep(x)=\displaystyle\sum_{j=1}^{3}s_{j}xs_{j}^*,\]
for all $x\in \mathcal{M}_{3}$. Where 
\[s_{1}=\frac{1}{\sqrt{3}}\left[\begin{array}{ccc}
1 & 0 & 0\\
1 & 0 & 0\\
1 & 0 & 0\\
\end{array}\right], \ s_{2}=\frac{1}{\sqrt{3}}\left[\begin{array}{ccc}
0 & 1 & 0\\
0 & \omega & 0\\
0 & \omega^2 & 0\\
\end{array}\right], \ s_{3}=\frac{1}{\sqrt{3}}\left[\begin{array}{ccc}
0 & 0 & 1\\
0 & 0 & \omega^2\\
0 & 0 & \omega\\
\end{array}\right].\]

Calculation shows that 
\[\Ep^{*}\circ\Ep(x)=\displaystyle\sum_{j=1}^{3}a_{j}x
a_{j}^*,\]
where  $a_{1}=\left[\begin{array}{ccc}

1 & 0 & 0\\
0 & 0 & 0\\
0 & 0 & 0\\
\end{array}\right], \ a_{2}=\left[\begin{array}{ccc}
0 & 0 & 0\\
0 & 1 & 0\\
0 & 0 & 0\\
\end{array}\right], \ a_{3}=\left[\begin{array}{ccc}
0 & 0 & 0\\
0 & 0 & 0\\
0 & 0 & 1\\
\end{array}\right].$

\vspace{10pt}
 Clearly $\Me=\mathcal{F}_{\Ep^*\circ\Ep}=\Bigg
 \{\left[\begin{array}{ccc}
a & 0 & 0\\
0 & b & 0\\
0 & 0 & c\\
\end{array}\right] : a,b,c \in \mathbb{C}\Bigg\}$, which is the algebra of diagonal matrices, a commutative C$^*$-algebra.

Now $\Ep^2(x)=\displaystyle \sum_{j=1}^{9}b_{j}xb_{j}^*$,
\newline
where 
\[b_{1}=\frac{1}{3}\left[\begin{array}{ccc}
1 & 0 & 0\\
1 & 0 & 0\\
1 & 0 & 0\\
\end{array}\right], b_{2}=\frac{1}{3}\left[\begin{array}{ccc}
1 & 0 & 0\\
\omega & 0 & 0\\
\omega^2 & 0 & 0\\
\end{array}\right], b_{3}=\frac{1}{3}\left[\begin{array}{ccc}
1 & 0 & 0\\
\omega^2 & 0 & 0\\
\omega & 0 & 0\\
\end{array}\right],\]
\[b_{4}=\frac{1}{3}\left[\begin{array}{ccc}
0 & 1 & 0\\
0 & 1 & 0\\
0 & 1 & 0\\
\end{array}\right], b_{5}=\frac{1}{3}\left[\begin{array}{ccc}
0 & \omega & 0\\
0 & \omega^2 & 0\\
0 & 1 & 0\\
\end{array}\right], b_{6}=\frac{1}{3}\left[\begin{array}{ccc}
0 & \omega^2 & 0\\
0 & \omega & 0\\
0 & 1 & 0\\
\end{array}\right],\]
\[b_{7}=\frac{1}{3}\left[\begin{array}{ccc}
0 & 0 & 1\\
0 & 0 & 1\\
0 & 0 & 1\\
\end{array}\right], 
b_{8}=\frac{1}{3}\left[\begin{array}{ccc}
0 & 0 & \omega^2\\
0 & 0 & 1\\
0 & 0 & \omega\\
\end{array}\right], b_{9}=\frac{1}{3}\left[\begin{array}{ccc}
0 & 0 & \omega\\
0 & 0 & 1\\
0 & 0 & \omega^2\\
\end{array}\right].\]
Actually $\Ep^2(x)=\frac{\tr(x)}{3}1$, for all $x\in \mathcal{M}_{3}$.
We have $\mathcal{M}_{\Ep^2}=\mathcal{A}'$, where $\mathcal{A}$ is the algebra generated by the set $\{b_{i}^*b_{j}:i,j=1,\cdots, 9\}$. Now Since $\mathcal{A}$ generates the full matrix algebra $\mathcal{M}_{3}$, we get $\mathcal{A}'=\mathbb{C}1$.
So $\mathcal{M}_{\Ep^2}=\mathbb{C}1$. Hence for every $n\geq 2$, $\mathcal{M}_{\Ep^n}=\mathbb{C}1$, resulting $\Mi=\mathcal{M}_{\Ep^2}=\mathbb{C}1$. Hence the multiplicative index $\kappa=2$ and also we have found $\Mi\subset \Me$.
\end{example}
\begin{example}
The above example is a particular case $(d=3)$ of a more general example that can be constructed on $\mathcal{M}_{d}$ (see example after Theorem 13 in \cite{ergodic}), where $\Ep:\Md\rightarrow\Md$ is defined by
\[\Ep(x)=\displaystyle \sum_{j=1}^{d}s_{j}xs_{j}^*,\]
where $s_{k}=f_{k}e_{k}^*$, where $e_{k}$'s are standard basis of $\mathbb{C}^d$ and $f_{k}$'s are the Fourier basis $f_{k}=\frac{1}{\sqrt{d}}\displaystyle
\sum_{j=1}^{d}e^{{2\pi ikj}/d}e_{k}$.
It follows that $\Ep^2(x)=\tr(x)\frac{1}{d}$ and hence has trivial multiplicative domain and hence $\Mi=\mathcal{M}_{\Ep^2}=\mathbb{C}1$. However the multiplicative domain of $\Ep$ is generated by the rank one projections $e_{1}{e_{1}}^*, \cdots, e_{d}{e_{d}}^*$. So $\Mi\subset\Me$ and $\kappa(\Ep)=2$ for every dimension $d$.  

\end{example}
\begin{example}\label{kappa three}
 Now we construct a channel on $\mathcal{M}_{3}$ with $\kappa=3$.
 Let $\Ep:\mathcal{M}_{3}\rightarrow\mathcal{M}_{3}$ be given by \[
\Ep(x)= \displaystyle \sum_{j=1}^{3}s_{j}xs_{j},
\]
where $s_{1}=\frac{1}{\sqrt{2}}\left[\begin{array}{ccc}
0 & 0 & 1\\
0 & 0 & 1\\
0 & 0 & 0\\
\end{array}\right]$, $s_{2}=\frac{1}{\sqrt{2}}\left[\begin{array}{ccc}
0 & 1 & 0\\
0 & -1 & 0\\
0 & 0 & 0\\
\end{array}\right]$ and $s_{3}=\left[\begin{array}{ccc}
0 & 0 & 0\\
0 & 0 & 0\\
1 & 0 & 0\\
\end{array}\right]$.

It can be seen that $\Ep^*\circ\Ep(x)=\displaystyle \sum_{j=1}^{3}a_{j}xa_{j}$, where $a_{j}=e_{j}{e_{j}}^*$, for $j=1,2,3$, that is the rank one projections on standard basis $\{e_{1}, e_{2}, e_{3}\}$. So 
\[\Me=\mathcal{F}_{\Ep^*\circ\Ep}=\Bigg
 \{\left[\begin{array}{ccc}
a & 0 & 0\\
0 & b & 0\\
0 & 0 & c\\
\end{array}\right] : a,b,c \in \mathbb{C}\Bigg\}.\]
Now, take $p=\left[\begin{array}{ccc}
0 & 0 & 0\\
0 & 0 & 0\\
0 & 0 & 1\\
\end{array}\right]$, then it can be checked that 
$\Ep(p)=\frac{1}{2}\left[\begin{array}{ccc}
1 & 1 & 0\\
1 & 1 & 0\\
0 & 0 & 0\\
\end{array}\right]$, and hence $\Ep(p)\not\in \Me$.
Since $p\in \Me$ and $\Ep(p)\not\in \Me$, we get $p\not\in\mathcal{M}_{\Ep^2}$. Indeed one checks that
\[\mathcal{M}_{\Ep^2}=\text{span}\Bigg\{\left[\begin{array}{ccc}
0 & 0 & 0\\
0 & 1 & 0\\
0 & 0 & 1\\
\end{array}\right],\left[\begin{array}{ccc}
1 & 0 & 0\\
0 & 0 & 0\\
0 & 0 & 0\\
\end{array}\right]\Bigg\}.\]

Now for $q=\left[\begin{array}{ccc}
1 & 0 & 0\\
0 & 0 & 0\\
0 & 0 & 0\\
\end{array}\right]$, it turns out that $\Ep^2(q)\not\in\mathcal{M}_{\Ep^2}$ and hence $q\not\in \mathcal{M}_{\Ep^3}$. Indeed we get 
$\mathcal{M}_{\Ep^3}=\mathbb{C}1\subset \mathcal{M}_{\Ep^2}$, which yields 
$\Mi=\mathcal{M}_{\Ep^3}=\mathbb{C}1$ and hence $\kappa(\Ep)=3$.
\end{example}
%%%%%%%%%%%%%%%%%%%%
It is worth mentioning at this point that the value of the multiplicative index $\kappa$ of a unital channel $\Ep:\Md\rightarrow\Md$ can not range from $1$ to $d^2$. The reason is the reduction of dimension of maximal proper  $*$-subalgebras of $\Md$. 
The maximum possible value of $\kappa$ is the longest chain of the following subalgebras
\begin{equation}\label{chain}
\mathcal{M}_{\Ep}\supseteq\mathcal{M}_{\Ep^2}\supseteq
\cdots \supseteq \mathcal{M}_{\Ep^{\kappa}}=\Mi.
 \end{equation}
Now note that the subalgebra $\Me$ can not be taken $\Md$ to begin with. This would mean the multiplicative domain of $\Ep$ is $\Md$ and hence $\Ep$ is a homomorphism and the kernel of $\Ep$,  $\rm{Ker(\Ep)}$ is an ideal of $\Md$. Since $\Md$ is simple as an algebra that is it can not contain a non-trivial two sided ideal, $\Ep$ has to have trivial kernel which makes $\Ep$ an isomorphism. So there exists another unital homomorphism $\Phi:\Md:\rightarrow\Md$ such that $\Ep\circ\Phi(x)=\Phi\circ\Ep(x)=x$, for all $x\in \Md$. Hence we get for every $x$,
\[\|x\|=\|\Phi\circ\Ep(x)\|\leq \|\Ep(x)\|\leq\|x\|.\]
This makes $\Ep$ an isometric isomorphism and hence following Kadison's work in \cite{kadison} we conclude that $\Ep(x)=uxu^*$ for some unitary $u$ in $\Md$, that is $\Ep$ is a unitary channel. Now by Example \ref{unitary} we get $\Mi=\Me=\Md$ and $\kappa=1$.

To get a possible value of $\kappa$ which is bigger than $1$, one starts with a channel $\Ep$ with $\Me$ to be a proper subalegbra of $\Md$. This choice significantly reduces the path of the the chain in Equation \ref{chain} because of the dimension criteria for any proper $*$- subalgebras. In a very recent article \cite{Agore}, it was shown that the maximal dimension of a proper unital subalgebra of $\Md$ is $d^2-d+1$. Since we are concerned with proper subalgebras which are also $*$-closed, the dimension of a maximal proper unital $*$-subalgebra of $\Md$ will be bounded by $d^2-d+1$. With this choice of $\Me$, the algebra $\mathcal{M}_{\Ep^2}$ will also have lesser dimension than that of $\Me$ and at the end even if $\Mi=\mathcal{M}_{\Ep^{\kappa}}=\mathbb{C}1$, we will get $\kappa<d^2$. 

In the case of $d=3$, it is more easily seen. By the Wedderburn's theorem, the $*$-subalgebras of $\mathcal{M}_{3}$ are described below upto isomorphism
\[\mathcal{M}_{2}\oplus\mathcal{M}_{1},\mathcal{M}_{1}\oplus\mathcal{M}_{1}\oplus\mathcal{M}
_{1}, (\mathcal{M}_{1}\otimes 1_{2})\oplus\mathcal{M}_{1}
,\mathcal{M}_{1}\otimes 1_{3},
\]
where $\mathcal{M}_{1}=\mathbb{C}1$ and $1_{2}, 1_{3}$ are identity matrices in $\mathcal{M}_{2}$ and $\mathcal{M}_{3}$ respectively. Example \ref{kappa three} shows the chain in \ref{chain} starting with $\mathcal{M}_{1}\oplus\mathcal{M}_{1}\oplus\mathcal{M}
_{1}$ and going down all the way to $\mathcal{M}_{1}\otimes 1_{3}$ yielding $\kappa=3$. 

%%%%%%%%%%%%%%%%%%%%%
We end this section with the following proposition which asserts that $\Mi$ is generated by partial isometries. Recall that an element $v\in \Md$ is a partial isometry if $vv^*=p$ and $v^*v=q$ where $p,q$'s are projections. Before we state our proposition, we state the following theorem 
\begin{theorem}{(\cite{ergodic})}
Let $\Phi:\Md\rightarrow\Md$ be a unital channel given by $\Phi(x)=\displaystyle\sum_{j=1}^{n} a_{j}xa_{j}^*$. Then an element $a\in \Md$ satisfies 
$\Phi(a)=\lambda a$ with $\lambda=1$ if and only if
\[a_{j}a=\lambda aa_{j}, \ \forall j=1, \cdots, n.\]
Moreover, the eigenspace of $\Phi$ corresponding to $\lambda$ is (linearly) spanned by partial isometries. 
\end{theorem}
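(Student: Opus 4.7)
The plan is to establish the equivalence by a direct computation in one direction and a positive sum-of-squares identity in the other, and then to extract the partial-isometry spanning claim via polar decomposition inside the multiplicative domain. The easy direction is ($\Leftarrow$): assuming $a_{j}a=\lambda aa_{j}$ for every $j$, unitality $\sum_{j}a_{j}a_{j}^{*}=1$ gives
\[\Phi(a)=\sum_{j}a_{j}aa_{j}^{*}=\lambda\sum_{j}aa_{j}a_{j}^{*}=\lambda a,\]
which in fact works for arbitrary $\lambda\in\mathbb{C}$ and does not require $|\lambda|=1$.

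For the forward direction, the hypothesis $|\lambda|=1$ combined with the Schwarz inequality and trace preservation first forces $\Phi(aa^{*})=aa^{*}$: indeed $\Phi(aa^{*})\geq\Phi(a)\Phi(a^{*})=|\lambda|^{2}aa^{*}=aa^{*}$, and equality follows by taking traces and invoking faithfulness. The crux is then a sum-of-squares identity: expanding the positive operator
\[\sum_{j}(a_{j}a-\lambda aa_{j})(a_{j}a-\lambda aa_{j})^{*}\]
into four groups and applying $\Phi(aa^{*})=aa^{*}$, $\Phi(a)=\lambda a$, $\Phi(a^{*})=\bar{\lambda}a^{*}$, and $\sum_{j}a_{j}a_{j}^{*}=1$, the terms collapse to $\Phi(aa^{*})-aa^{*}-aa^{*}+aa^{*}=0$. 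Since each summand is itself positive, each must vanish, yielding $a_{j}a=\lambda aa_{j}$ for all $j$.

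For the partial-isometry spanning, let $a$ be a $\lambda$-eigenvector with $|\lambda|=1$. The preceding peripheral-eigenvalue results place $a\in\Me$, and the calculation above gives $a^{*}a\in\mathcal{F}_{\Phi}\subseteq\Me$. Writing the polar decomposition $a=v|a|$, I observe that $v=af(a^{*}a)$ for a suitable continuous $f$ (a pseudo-inverse of $t\mapsto t^{1/2}$ off the kernel); since $\Me$ is a unital C$^{*}$-subalgebra closed under continuous functional calculus, $v\in\Me$, and multiplicativity then yields $\Phi(v)=\Phi(a)\Phi(f(a^{*}a))=\lambda v$. Using the spectral decomposition $|a|=\sum_{i}\mu_{i}p_{i}$ with $p_{i}\in\mathcal{F}_{\Phi}\subseteq\Me$, I write $a=\sum_{i}\mu_{i}(vp_{i})$, and each $vp_{i}$ is a partial isometry (since $p_{i}\leq v^{*}v$ whenever $\mu_{i}>0$) lying in the $\lambda$-eigenspace by multiplicativity.

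The main obstacle I anticipate is not the equivalence itself, which is a direct positivity argument, but the bookkeeping in the polar-decomposition step: specifically, verifying that $v$ genuinely lies in $\Me$ so that $\Phi$ can be distributed across the products $af(a^{*}a)$ and $vp_{i}$. This is precisely where $|\lambda|=1$ enters through $a\in\Me$ and $a^{*}a\in\mathcal{F}_{\Phi}$, together with the fact that $\Me$ is a C$^{*}$-subalgebra stable under functional calculus.
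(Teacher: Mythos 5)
The paper never proves this statement: it is imported verbatim from \cite{ergodic} and used as a black box to conclude that $\Mi$ is generated by partial isometries, so there is no internal proof to compare yours against. Your argument is a correct, self-contained proof of the intended statement (the ``$\lambda=1$'' in the displayed hypothesis is evidently a typo for $|\lambda|=1$, which is what you prove), and it is very much in the spirit of the paper's own toolkit (Schwarz inequality plus trace preservation plus faithfulness, as in Theorem \ref{mult.dom thm} and Lemma \ref{primitive}). Both directions check out: unitality $\sum_j a_ja_j^*=1$ alone gives ($\Leftarrow$), and for ($\Rightarrow$) your expansion of $\sum_j(a_ja-\lambda aa_j)(a_ja-\lambda aa_j)^*$ does collapse to $\Phi(aa^*)-aa^*$, which vanishes by your preliminary step, forcing each positive summand to be zero. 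Two small points in the partial-isometry step should be made explicit. First, the equalities $\Phi(f(a^*a))=f(a^*a)$ and $p_i\in\mathcal{F}_{\Phi}$ require that the fixed-point set of a unital channel be closed under continuous functional calculus; this is true because $\mathcal{F}_{\Phi}$ is the commutant of the Kraus operators (the fact from \cite{krbs} already used in Corollary \ref{commutant}), hence a von Neumann algebra containing the spectral projections of $a^*a$ --- or, more elementarily, note $a^*a\in\mathcal{M}_{\Phi}\cap\mathcal{F}_{\Phi}$ gives $\Phi((a^*a)^k)=(a^*a)^k$ for all $k$ by multiplicativity and induction, and on the finite spectrum of $a^*a$ your $f$ can be taken to be a polynomial. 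Second, it is worth one line to verify $v=af(a^*a)$ really is the polar partial isometry and that $p_iv^*vp_i=p_i$ for $\mu_i>0$, which is what makes each $vp_i$ a partial isometry; with $v\in\mathcal{M}_{\Phi}$ and $\Phi(p_i)=p_i$, multiplicativity then gives $\Phi(vp_i)=\lambda vp_i$ as you say. With these clarifications the proof is complete and could serve as the proof the paper omits.
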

Now part 3 of Theorem \ref{main theroem}, $\Mi$ is generated by the eigenvectors corresponding to the peripheral eigenvalues. Since the above theorem asserts that each peripheral eigenvector is a linear combination of partial isometries, we conclude that $\Mi$ is generated by partial isometries. We note down this fact as the following proposition.
\begin{proposition}
For a unital channel $\Ep:\Md\rightarrow\Md$, 
the algebra $\Mi$ is generated by partial isometries. 
\end{proposition}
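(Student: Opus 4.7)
The plan is to combine two results already available in the excerpt. By part 3 of Theorem \ref{main theroem}, the subalgebra $\Mi$ coincides with the algebra generated by $\Ne$, the set of peripheral eigen-operators of $\Ep$. Thus, to establish that $\Mi$ is generated (as a $*$-algebra) by partial isometries, it suffices to exhibit a generating set of $\Ne$ consisting of partial isometries; any algebra generated by $\Ne$ will then, a fortiori, be generated by those partial isometries.

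The second ingredient is the theorem quoted immediately before the proposition, taken from \cite{ergodic}: for each peripheral eigenvalue $\lambda$ of $\Ep$ (i.e.\ $|\lambda|=1$), the eigenspace $\{a\in\Md:\Ep(a)=\lambda a\}$ is linearly spanned by partial isometries. Forming the union of these spanning sets over all $\lambda$ in the peripheral spectrum $\mathrm{Spec}(\Ep)\cap \mathbb{T}$, which by part 3 of Theorem \ref{main theroem} is a finite set, yields a spanning set of partial isometries for $\Ne$ itself.

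Assembling the two steps: the peripheral eigenvectors span $\Ne$ by definition, each eigenspace is linearly generated by partial isometries by the cited theorem, and $\Mi$ is the algebra generated by $\Ne$ by Theorem \ref{main theroem}. Hence $\Mi$ is generated by partial isometries. I do not foresee any real obstacle; the work has been absorbed into Theorem \ref{main theroem} (which identifies $\Mi$ with the algebra generated by $\Ne$) and into the cited result from \cite{ergodic} (which supplies the partial-isometry decomposition of each peripheral eigenspace), so the proposition is essentially the formal concatenation of these two statements. The only point worth flagging explicitly is that one should apply the cited theorem to every peripheral $\lambda$, not merely to $\lambda=1$; this is legitimate because the theorem's hypothesis and conclusion concern an arbitrary eigenvalue $\lambda$ on the unit circle.
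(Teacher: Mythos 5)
Your proposal is correct and is essentially the paper's own argument: the paper likewise combines part 3 of Theorem \ref{main theroem} (that $\Mi$ is the algebra generated by $\Ne$) with the cited result from \cite{ergodic} that each peripheral eigenspace is linearly spanned by partial isometries. Your remark about applying the cited theorem to every peripheral eigenvalue $\lambda$ (the ``$\lambda=1$'' in its statement being read as $|\lambda|=1$) is exactly how the paper uses it.
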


%\begin{example}
%In general if $\Ep:\Md\rightarrow\Md$ is a unital channel 
%with Kraus decomposition $\Ep(x)=\sum_{j=1}^{n}a_{j}xa_{j}^*$ for some operators 
%$a_{j}\in \Md$ and for all $x\in \Md$, then we have a precise description for the eigen operators. In \cite{ergodic}, Theorem 11 asserts that an operator $a\in \Md$ satisfies $\Ep(a)=\lambda a$ for $|\lambda|=1$ if and only if $a_{j}a=\lambda aa_{j}$ for every $j$. Moreover, the eigenspace of $\Ep$ corresponding to $\lambda$ is spanned by partial isometries. Hence for any unital quantum channel $\Ep$, $\Mi$ is the algebra generated by partial isometries.    
%\end{example}

\section{Irreducibility and Primitivity }{\label{irreducible section}}
In this section we study the fixed points and multiplicative properties of irreducible positive linear maps on $\Md$. We recall some definitions and mention some well known facts below.
\begin{definition}
A positive linear map $\Phi:\Md \rightarrow \Md$ is called irreducible if there exist no non trivial projection $p \in \Md$ such that 
\begin{center}
$\Phi(p)\leq \lambda p$ for $\lambda>0$.
\end{center}
\end{definition}
We note some basic facts about irreducible positive linear maps on finite dimensional $C^*$ algebras (see also \cite{evans-krohn}, \cite{farenick}). In the literature of quantum information theory, such maps are also known as \emph{ergodic} linear maps (see \cite{ergodic}). In what follows ${\Md}_{+}$ denotes the set of all positive semidefinite elements of $\Md$.
We start with the spectral properties of an irreducible positive linear map.  
\begin{theorem}(see \cite{evans-krohn})
Let $\Phi$ be a positive linear map on $\Md$ and let $r$ be its spectral radius. Then
\begin{enumerate}
\item There is a non zero $x \in {\Md}_{+}$ such that $\Phi(x)=rx$
\item If $\Phi$ is irreducible and if $y \in {\Md}_{+}$ is an eigenvector of $\Phi$ corresponding to some
eigenvalue $s$ of $\Phi$, then $s = r$ and $y$ is a positive scalar multiple of $x$.\\
\item If $\Phi$ is unital, irreducible and satisfies the Schwarz inequality for positive linear maps then
\begin{itemize}
\item $r=1$ and $\mathcal{F}_{\Phi}=\mathbb{C}1$.
\item Every peripheral eigenvalue $\lambda \in \rm{Spec(\Phi)\cap \mathbb{T}} $ is simple and the corresponding eigenspace is spanned by a unitary $u_{\lambda}$ which satisfies $\Phi(u_{\lambda}x)=\lambda u_{\lambda}\Phi(x)$, for all $x\in \Md$.
\item The set $\Gamma=\rm{Spec(\Phi)\cap\mathbb{T}}$ is a cyclic subgroup of the group $\mathbb{T}$ and the corresponding eigenvectors form a cyclic group which is isomorphic to $\Gamma$ under the isomorphism $\lambda \rightarrow u_{\lambda}$. 
\end{itemize}
\end{enumerate}
\end{theorem}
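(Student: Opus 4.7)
I would tackle the three parts of this non-commutative Perron–Frobenius theorem in sequence, reusing the positive eigenvector from part 1 throughout.

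For part 1, the plan is a standard Brouwer fixed point argument on the state space. The set $\state_d = \{\sigma \in (\Md)_+ : \tr(\sigma)=1\}$ is convex and compact, and $\sigma \mapsto \Phi(\sigma)/\tr(\Phi(\sigma))$ is continuous on it whenever $\Phi(\sigma)\neq 0$. A mild perturbation $\Phi_\epsilon = \Phi + \epsilon \,\tr(\cdot)\,1$ is well-defined everywhere on $\state_d$, yields a fixed point $\sigma_\epsilon$ with positive eigenvalue $r_\epsilon$, and passing to a subsequential limit as $\epsilon\to 0^+$ produces a positive eigenvector $x$ with eigenvalue $r'\ge 0$. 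That $r' = r$ is extracted from the Gelfand formula: any element of $\Md$ decomposes into four positive parts, so $\|\Phi^n\|$ is controlled by $\sup_{\sigma\in\state_d}\|\Phi^n(\sigma)\|$, and the eigenvalue on $x$ bounds the latter.

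For part 2, given another positive eigenvector $y$ with $\Phi(y) = s y$, the first step is to argue $s=r$: since $\Phi^n(y) = s^n y$, one has $s \le r$, and irreducibility prevents $s<r$ (otherwise the spectral projection at $r$, together with the pencil argument below, produces a proper invariant face). For the uniqueness statement, I consider the real pencil $t \mapsto x - t y$. Choose the largest $t_0 > 0$ for which $x - t_0 y \ge 0$; the element $z = x - t_0 y$ is positive semidefinite with a nontrivial kernel, so its support projection $p$ is strictly less than $1$. Since $\Phi(z) = r z$ and $z \le \|z\| p$, one gets $\Phi(p) \ge \frac{1}{\|z\|}\Phi(z) = \frac{r}{\|z\|} z$, and by iterating the inequality $\Phi^n(p) \le c p$ would have to hold for some $c>0$ on the complementary face, contradicting irreducibility unless $z=0$. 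Hence $y$ is a positive multiple of $x$. The first bullet of part 3 is now immediate: $\Phi(1)=1$ forces $r \ge 1$, and $r \le 1$ holds for any unital positive map, so $r=1$ and by part 2 the unique positive fixed direction is $\mathbb{C}\cdot 1$; any self-adjoint fixed $a$ gives $\|a\|\cdot 1 \pm a$ both positive and fixed, both then forced to be scalars, so $a \in \mathbb{R}\cdot 1$.

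For the remaining bullets, I would construct a faithful invariant state by applying part 1 to the dual map $\Phi^*$ (positive with the same spectral radius), obtaining $\omega \ge 0$ with $\Phi^*(\omega)=\omega$; its support projection must be $1$, because otherwise the complementary projection would yield a non-trivial invariant face for $\Phi$, contradicting irreducibility — this faithfulness step is the main obstacle, as it is the bridge from ``positive'' to ``strictly positive'' and is what makes the Schwarz inequality sharp. Given $\Phi(a) = \lambda a$ with $|\lambda|=1$, Schwarz gives $\Phi(a^*a) \ge a^*a$, and applying the faithful invariant $\omega$ to the difference forces equality $\Phi(a^*a)=a^*a \in \mathcal{F}_\Phi = \mathbb{C}1$, so $a^*a$ (and similarly $aa^*$) is a scalar; thus $a$ is a scalar multiple of a unitary $u_\lambda$. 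Simplicity of the eigenspace follows because two unitary eigenvectors at $\lambda$ would make $u^*_\lambda u'_\lambda$ a unitary fixed point, hence a scalar. The identity $\Phi(u_\lambda x) = \lambda u_\lambda \Phi(x)$ comes from the equality case in Schwarz, which places $u_\lambda$ in the multiplicative domain of $\Phi$ (in the sense of Choi), combined with $\Phi(u_\lambda)=\lambda u_\lambda$. Finally, this multiplicativity shows $u_\lambda u_\mu$ is a unitary peripheral eigenvector at $\lambda\mu$, so by simplicity $u_\lambda u_\mu$ is a scalar multiple of $u_{\lambda\mu}$ and $\Gamma = \mathrm{Spec}(\Phi) \cap \mathbb{T}$ is a subgroup of $\mathbb{T}$; finiteness (since $\Md$ is finite-dimensional) forces $\Gamma$ to be cyclic, and the map $\lambda \mapsto u_\lambda$ is the claimed group isomorphism.
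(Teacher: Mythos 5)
The paper offers no proof of this theorem --- it is quoted directly from Evans and H{\o}egh-Krohn --- so there is no internal argument to compare against; your proposal has to stand on its own. Its overall architecture is the standard non-commutative Perron--Frobenius argument, and parts 2 and 3 are essentially sound: the support-projection argument showing an irreducible map admits no positive eigenvector of deficient rank, the pencil $x-t_0y$, the faithful $\Phi$-invariant state obtained from $\Phi^*$ and used to force equality in the Schwarz inequality, Choi's multiplicative domain theorem to get $\Phi(u_\lambda x)=\lambda u_\lambda\Phi(x)$, and the finite-subgroup-of-$\mathbb{T}$ conclusion are all the right moves.

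The genuine gap is in part 1, in the identification of the Brouwer eigenvalue with the spectral radius. The fixed points of $\sigma\mapsto\Phi_\epsilon(\sigma)/\tr(\Phi_\epsilon(\sigma))$ on the state space are \emph{all} normalized positive eigenvectors of $\Phi_\epsilon$, and the limit $x$ you extract need not be full rank when $\Phi$ is reducible; your closing step, that the eigenvalue on $x$ controls $\sup_\sigma\|\Phi^n(\sigma)\|$, only works when every state is dominated by a multiple of $x$, i.e.\ when $x$ is strictly positive. Already for the diagonal (classical) map with matrix $\bigl(\begin{smallmatrix}1&1\\0&2\end{smallmatrix}\bigr)$ the normalized map on the simplex has the fixed point $(1,0)$ with eigenvalue $1<2=r$, so the procedure can land on an eigenvalue strictly below $r$. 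The standard repair is the resolvent argument: for $\lambda>r$ the map $(\lambda-\Phi)^{-1}=\sum_{n\ge0}\lambda^{-n-1}\Phi^n$ is positive, a Pringsheim-type argument shows $r\in\mathrm{Spec}(\Phi)$ so that $\|(\lambda-\Phi)^{-1}\|\to\infty$ as $\lambda\downarrow r$, and normalizing $(\lambda_n-\Phi)^{-1}(\rho)$ for a suitable state $\rho$ and passing to a limit produces a positive eigenvector with eigenvalue exactly $r$. In the irreducible case your own support argument shows any positive eigenvector is full rank, which forces its eigenvalue to equal $r$ by two-sided comparability with $1$, so parts 2 and 3 survive; for the same reason, your justification of $s=r$ via ``the spectral projection at $r$ produces an invariant face'' should be replaced by that comparability argument, since the spectral projection is an operator on the $d^2$-dimensional space of maps, not a projection in $\Md$, and does not directly yield a face of the positive cone.
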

With these set up we are ready to note down the following definition.
\begin{definition}
Let $\Ep:\Md\rightarrow\Md$ be a irreducible positive linear map. If $\Ep$ has trivial peripheral spectrum, then $\Ep$ is called primitive. 
 \end{definition}
 We refer to \cite{wolf} for some properties of primitive maps.
%In order to see that we first give this following theorem.
%We give note some basic prperties of a primitive map in the following theorem.
%\begin{theorem}
%Let $\mathcal{A}$ be a finite dimensional $C^*$ algebra and $\Ep:\mathcal{A}\rightarrow\mathcal{A}$ be a unital positive linear map. Then the following statements are equivalent
%\begin{itemize}
%\item $\Ep$ is primitive i.e $\Ep$ is irreducible and it's peripheral spectrum, $\Gamma=\mathbb{C}1$.
%\item $\Ep^n$ is irreducible for all $n \in \mathbb{N}$. 
%\end{itemize}
%\end{theorem}
%\begin{proof}
%to be written soon
%\end{proof}
We begin with describing the set $\Mi$ for a irreducible and primitive channel $\Ep$.
\begin{lemma}\label{abelian}
Let $\Ep$ be a unital irreducible channel. Then $\Mi$ is a commutative C$^*$-algebra. 
\end{lemma}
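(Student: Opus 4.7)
The plan is to combine part 3 of Theorem \ref{main theroem} with the structural theorem cited just above for unital irreducible positive maps. The former identifies $\Mi$ with the algebra generated by the set $\Ne$ of peripheral eigenvectors of $\Ep$, so the task reduces to showing that this particular algebra is commutative.

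First, I would invoke part 3 of the Evans--H\o egh-Krohn theorem, which is available because $\Ep$ is unital, irreducible and satisfies the Schwarz inequality (being completely positive). This gives a very rigid description of the peripheral data: the peripheral spectrum $\Gamma = \rm{Spec(\Ep)}\cap \mathbb{T}$ is a (finite) cyclic subgroup of $\mathbb{T}$; for each $\lambda \in \Gamma$ the $\lambda$-eigenspace is one-dimensional, spanned by a single unitary $u_\lambda$; and the assignment $\lambda \mapsto u_\lambda$ is a group isomorphism from $\Gamma$ onto the multiplicative group formed by the peripheral unitaries. The decisive consequence is $u_\lambda u_\mu = u_{\lambda\mu} = u_{\mu\lambda} = u_\mu u_\lambda$ for all $\lambda,\mu \in \Gamma$, so the peripheral unitaries pairwise commute.

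Next, since each peripheral eigenspace is one-dimensional, $\Ne$ is the linear span of $\{u_\lambda : \lambda \in \Gamma\}$. The set $\Ne$ is automatically self-adjoint, because $\Ep$ preserves the $*$-operation and $\Gamma$ is closed under complex conjugation (so $u_\lambda^*$ is a scalar multiple of $u_{\bar\lambda} \in \Ne$). Hence the $*$-algebra generated by $\Ne$ coincides with the associative algebra generated by $\Ne$, and it is generated by a family of pairwise commuting unitaries. This algebra is therefore commutative, and it equals $\Mi$ by Theorem \ref{main theroem}(3), which finishes the proof.

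I do not foresee any real obstacle here: essentially all the content is already packaged inside the structural theorem for irreducible positive maps, and the group-isomorphism clause of that theorem delivers commutativity of the generating unitaries almost by definition. The only minor care point is verifying self-adjointness of $\Ne$ so that the ``algebra generated'' in Theorem \ref{main theroem}(3) matches a $*$-algebra, but this is immediate from $\Ep(x^*) = \Ep(x)^*$ and the subgroup structure of $\Gamma$.
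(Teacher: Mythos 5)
Your proof is correct and follows essentially the same route as the paper: both identify $\Mi$ with the algebra generated by $\Ne$ via Theorem \ref{main theroem}(3) and then invoke the third part of the Evans--H\o egh-Krohn theorem for irreducible Schwarz maps. The only cosmetic difference is that the paper uses the cyclic generator of $\Gamma$ to write every peripheral eigenvector as a power of a single unitary $u$, whereas you deduce pairwise commutation of the $u_\lambda$'s from the group-isomorphism clause; both yield commutativity of the generated algebra in the same way.
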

\begin{proof}
Since the channel $\Ep$ is unital and satisfies the Schwarz inequality, the peripheral spectrum $\Gamma=\rm{Spec(\Ep)\cap \mathbb{T}}$ is a cyclic subgroup. So $\Gamma=exp(2\pi i\mathbb{Z}_{m})$ for some $m\leq d^2$. If $u$ is the eigen vector for the eigenvalue $\lambda=exp(2\pi i/m)$, then it is easy to see that $\Ep(u^k)=\lambda^k u^k$ for all $k\in \mathbb{N}$. This shows that the peripheral eigen operators are generated by the powers of $u$ and hence $\Ne$ is spanned by the single unitary $u$. Since $\Mi$ is then algebraically generated by a unitary, we get $\Mi$ is commutative C$^*$-algebra.    
\end{proof}
Note that from the work of St{\o}rmer in \cite{stormer2008} it can be deduced that if $\Ep$ is unital \emph{entanglement breaking channel}, then $\Me$ (and hence $\Mi$) is abelian C$^*$-algebra where an entanglement breaking channel is one whose all Kraus operators are of rank one \cite{entng-brkng}. Lemma \ref{abelian} reflects upon similar characterisation of $\Mi$ in the case of irreducible channels which are found in abundance in the set of quantum channels.
\begin{corollary}
A unital quantum channel is primitive if and only if $\Mi=\mathbb{C}1$.
\end{corollary}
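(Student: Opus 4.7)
The plan is to derive both directions directly from part 3 of Theorem \ref{main theroem}, which characterises $\Mi$ as the algebra generated by the peripheral eigenvectors $\Ne$, together with the spectral results on irreducible unital Schwarz maps recalled just before the corollary.

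For the forward implication, I would unfold the definition: primitivity of $\Ep$ means that $\Ep$ is irreducible with $\rm{Spec(\Ep)\cap\mathbb{T}}=\{1\}$. The third bullet of the structure theorem for irreducible unital Schwarz maps quoted above tells us that each peripheral eigenvalue is simple and that the eigenspace corresponding to $\lambda=1$ equals $\mathcal{F}_{\Ep}=\mathbb{C}1$. Trivial peripheral spectrum therefore forces $\Ne=\mathbb{C}1$, and Theorem \ref{main theroem}(3) immediately yields $\Mi=\mathbb{C}1$.

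For the converse, suppose $\Mi=\mathbb{C}1$. By Theorem \ref{main theroem}(3) every peripheral eigenvector of $\Ep$ lies in $\Mi=\mathbb{C}1$, so the only peripheral eigenvalue is $1$ and $\mathcal{F}_{\Ep}=\mathbb{C}1$. It remains to verify irreducibility. If $p$ were a non-trivial projection with $\Ep(p)\leq\lambda p$ for some $\lambda>0$, then trace preservation gives $\tr(p)=\tr(\Ep(p))\leq\lambda\,\tr(p)$, so $\lambda\geq 1$, and $\Ep(p)\leq\lambda p$ forces the support of $\Ep(p)$ to lie under $p$. Consequently $p\Md p$ is an $\Ep$-invariant corner, and applying a Perron--Frobenius argument to the positive map $\Ep|_{p\Md p}$ produces a non-zero positive eigenvector $q\in p\Md p$; iteration together with trace preservation rules out spectral radius $>1$ on this corner, so the eigenvalue equals $1$ and $q$ is a non-scalar element of $\mathcal{F}_{\Ep}$, contradicting $\mathcal{F}_{\Ep}=\mathbb{C}1$.

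The forward direction and the extraction of trivial peripheral spectrum in the reverse direction are essentially immediate from Theorem \ref{main theroem}(3); the main obstacle is this last step, passing from $\mathcal{F}_{\Ep}=\mathbb{C}1$ to genuine irreducibility. The Perron--Frobenius-on-the-compression argument needs the trace-preservation hypothesis crucially to pin the spectral radius on $p\Md p$ at exactly $1$, which is the only place where the trace-preserving structure of a quantum channel (versus a general unital Schwarz map) is really used in the proof.
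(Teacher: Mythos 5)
Your proof is correct, and it supplies details that the paper leaves implicit: the corollary is stated there without proof, as an immediate consequence of Theorem \ref{main theroem}(3) together with the Evans--Krohn spectral theorem for irreducible unital Schwarz maps. Your forward direction and your extraction of trivial peripheral spectrum and $\mathcal{F}_{\Ep}=\mathbb{C}1$ from $\Mi=\mathbb{C}1$ coincide with that intended route. Where you genuinely diverge is the irreducibility step: you build an $\Ep$-invariant corner $p\Md p$ and run a Perron--Frobenius argument inside it to manufacture a positive fixed point. That works, but the paper's own toolkit handles exactly this situation more directly (and does so explicitly in the later proof that $\Ep$ is irreducible iff $\Ep+\Ep^2$ is primitive): for a unital Schwarz map, $\Ep(p)\leq\lambda p$ already forces $\Ep(p)\leq p$ by Lemma 3.1 of \cite{bratteli}, and then trace preservation plus faithfulness of the trace gives $\Ep(p)=p$ at once, yielding the non-scalar fixed point without invoking the corner restriction or the existence of a Perron eigenvector. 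Your route buys self-containedness (no appeal to the Bratteli lemma), at the cost of extra machinery. One cosmetic slip: ruling out spectral radius $>1$ on the corner does not by itself pin the eigenvalue at $1$; but the very identity you use, $\tr(q)=\tr(\Ep(q))=r\,\tr(q)$ with $\tr(q)>0$, gives $r=1$ exactly (and also excludes $r=0$), so the conclusion stands with the same computation stated as an equality rather than a one-sided bound.
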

The following proposition captures the description of the stabilising algebra of a composition of two channels. The commutativity of channels is a necessary condition. 
\begin{proposition}{\label{important prop}}
If two unital channels $\Phi,\Psi$ commute that is $\Phi\circ\Psi=\Psi\circ\Phi$, then $\mathcal{M}_{({\Phi\circ\Psi})^\infty}=\{a\in \mathcal{M}_{{\Psi}^\infty}: \Psi^{k}(a)\in \mathcal{M}_{{\Phi}^\infty} \ \forall k\in \mathbb{N}\}$.
\end{proposition}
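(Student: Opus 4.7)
The plan is to combine Lemma \ref{composition} iteratively with the spectral description of $\mathcal{M}_{\Phi^\infty}$ furnished by Theorem \ref{main theroem}(3). Since $\Phi$ and $\Psi$ commute, $(\Phi\circ\Psi)^n=\Phi^n\circ\Psi^n=\Psi^n\circ\Phi^n$ for every $n\ge 1$, and two applications of Lemma \ref{composition} yield
\[
\mathcal{M}_{(\Phi\circ\Psi)^n}=\{a\in\mathcal{M}_{\Psi^n}:\Psi^n(a)\in\mathcal{M}_{\Phi^n}\}=\{a\in\mathcal{M}_{\Phi^n}:\Phi^n(a)\in\mathcal{M}_{\Psi^n}\}.
\]
Intersecting over $n$ immediately gives $\mathcal{M}_{(\Phi\circ\Psi)^\infty}\subseteq\mathcal{M}_{\Psi^\infty}\cap\mathcal{M}_{\Phi^\infty}$, shows that any $a\in\mathcal{M}_{(\Phi\circ\Psi)^\infty}$ satisfies $\Psi^n(a)\in\mathcal{M}_{\Phi^n}$ for every $n$, and conversely that if $a\in\mathcal{M}_{\Psi^n}$ with $\Psi^n(a)\in\mathcal{M}_{\Phi^n}$ for every $n$, then $a\in\mathcal{M}_{(\Phi\circ\Psi)^\infty}$.

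The remaining work is to upgrade the condition ``$\Psi^n(a)\in\mathcal{M}_{\Phi^n}$ for all $n$'' to ``$\Psi^k(a)\in\mathcal{M}_{\Phi^\infty}$ for all $k$,'' and this is the only substantive step. I would prove the subclaim that \emph{$\Psi$ leaves $\mathcal{M}_{\Phi^\infty}$ invariant whenever $\Psi$ commutes with $\Phi$}. By Theorem \ref{main theroem}(3), $\mathcal{M}_{\Phi^\infty}$ is the algebra generated by the peripheral eigen-operators $\mathcal{N}_\Phi$ of $\Phi$. For any $v,w\in\mathcal{N}_\Phi$ with $\Phi(v)=\lambda v$, $\Phi(w)=\mu w$, the product $vw$ already lies in $\mathcal{M}_{\Phi^\infty}\subseteq\mathcal{M}_\Phi$, so multiplicativity on $\mathcal{M}_\Phi$ forces $\Phi(vw)=\Phi(v)\Phi(w)=\lambda\mu\,vw$. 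Hence $\mathcal{N}_\Phi$ is closed under products and $\mathcal{M}_{\Phi^\infty}=\operatorname{span}\mathcal{N}_\Phi$ as a linear subspace. Commutativity now gives $\Phi(\Psi(v))=\Psi(\Phi(v))=\lambda\Psi(v)$, so $\Psi$ preserves each peripheral eigenspace of $\Phi$ and hence its span $\mathcal{M}_{\Phi^\infty}$.

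Assembling the pieces, for $a\in\mathcal{M}_{(\Phi\circ\Psi)^\infty}$ the first paragraph gives $a\in\mathcal{M}_{\Phi^\infty}$, and iterating the subclaim yields $\Psi^k(a)\in\mathcal{M}_{\Phi^\infty}$ for every $k$, establishing the $\subseteq$ inclusion. For the reverse, if $a\in\mathcal{M}_{\Psi^\infty}$ satisfies $\Psi^k(a)\in\mathcal{M}_{\Phi^\infty}\subseteq\mathcal{M}_{\Phi^k}$ for every $k$, the first paragraph's characterisation gives $a\in\mathcal{M}_{(\Phi\circ\Psi)^n}$ for every $n$, hence the claim. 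The main obstacle is exactly the subclaim: a direct attempt through $\Phi^n\circ\Psi=\Psi\circ\Phi^n$ and the Schwarz inequality does not collapse to equality since $\Psi$ need not be multiplicative on $\mathcal{M}_{\Phi^\infty}$, and $\Psi^*$ is not known to commute with $\Phi$ either, so the spectral detour through Theorem \ref{main theroem}(3) is essential.
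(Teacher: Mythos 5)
Your proposal is correct, and its second half genuinely departs from the paper's argument. The paper proves the nontrivial inclusion by running a single Schwarz-inequality chain,
\[
(\Phi\circ\Psi)^k(aa^*)=\Phi^k\bigl(\Psi^k(aa^*)\bigr)\ \ge\ \Phi^k\bigl(\Psi^k(a)\Psi^k(a^*)\bigr)\ \ge\ \Phi^k\bigl(\Psi^k(a)\bigr)\Phi^k\bigl(\Psi^k(a^*)\bigr)=(\Phi\circ\Psi)^k(aa^*),
\]
and collapsing it with trace preservation and faithfulness of the trace, exactly ``the same idea as Lemma \ref{composition}''; it never invokes Theorem \ref{main theroem}. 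Your first paragraph is the same content in citation form (the finite-power identity $\mathcal{M}_{(\Phi\circ\Psi)^n}=\{a\in\mathcal{M}_{\Psi^n}:\Psi^n(a)\in\mathcal{M}_{\Phi^n}\}$), but your subclaim that $\Psi(\mathcal{M}_{\Phi^\infty})\subseteq\mathcal{M}_{\Phi^\infty}$, proved spectrally via Theorem \ref{main theroem}(3) ($\mathcal{N}_\Phi$ is closed under products, so $\mathcal{M}_{\Phi^\infty}=\operatorname{span}\mathcal{N}_\Phi$, and commutativity makes $\Psi$ preserve each peripheral eigenspace), is an addition the paper does not make. In fact it addresses a point the paper's proof leaves implicit: the displayed chain only yields $a\in\mathcal{M}_{\Psi^k}$ and $\Psi^k(a)\in\mathcal{M}_{\Phi^k}$, whereas the statement asks for $\Psi^k(a)\in\mathcal{M}_{\Phi^\infty}$, so your invariance lemma is precisely the bridge, and it is a statement of independent interest. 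One caveat: your closing remark that the spectral detour is ``essential'' overstates the case. The paper's technique does reach the full conclusion directly if one runs the chain inside $(\Phi\circ\Psi)^m$ with $m\ge\max(n,k)$, using commutativity to write it as $\Psi^{m-k}\Phi^{m-n}\circ\Phi^n\circ\Psi^k$ and applying Schwarz first to $\Psi^k$ and then to $\Phi^n$; collapsing with trace preservation of the outer factor gives $\Psi^k(a)\in\mathcal{M}_{\Phi^n}$ for all $n,k$ without any spectral input. So the two routes are: an elementary Schwarz-plus-trace argument (self-contained, slightly fiddly with exponents) versus your structural one (cleaner bookkeeping, at the cost of importing the main structure theorem).
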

\begin{proof}
The proof is based on the same idea which was used in Lemma \ref{composition}. One side is straight forward. For the other side, let $a\in \mathcal{M}_{({\Phi\circ\Psi})^\infty}$. Then for any 
$k\in \mathbb{N}$, we have
\begin{align*}
 (\Phi\circ\Psi)^k(aa^*)&=\Phi^k(\Psi^k(aa^*))\\
 &\geq \Phi^k\{(\Psi^k(a))\Psi^k(a^*)\}\\
 &\geq\Phi^k(\Psi^k(a))
 \Phi^k(\Psi^k(a^*))\\
 &=\Phi^k\Psi^k(aa^*)\\
 &=(\Phi\circ\Psi)^k(aa^*).
\end{align*}
Since the two extreme ends are same, the inequalities become equality and using the trace preservation of $\Phi,\Psi$ and faithfulness of trace, we obtain the result.
\end{proof}

Note that starting from the Perron-Frobenius theory of positive and non-negative matrices, the notions of irreducibility and primitivity of matrices have attracted wide attention in matrix theory \cite{H-J}. These concepts have been applied in algebraic graph theory, 
finite Markov chains and other related fields. In 
\cite{Schwarz}, the author is concerned with irreducibility and primitivity of sums and products of
non-negative matrices. Since unital quantum channels can be thought of as non-commutative generalisations of bistochastic matrices (matrices with non-negative real entires where each row and column sums to 1), the remaining part of this section addresses similar notions of irreducibility and primitivity of products of unital channels.    

 Note that product of primitive channels need not be primitive. To get a glimpse of the importance of primitivity of product of channels, we observe that to  quantify the increment of entropy of a quantum system under a primitive channel, the logarithmic-Sobolev (LS) constant plays an important role \cite{M-S-D-W}. The discrete LS constant in \cite{M-S-D-W} is defined assuming $\Ep^*\circ\Ep$ is primitive. So primitivity of composition of two channels seems to be a useful assertion. We derive a necessary condition for the composition of two channels to be primitive.  
%\newpage
\begin{theorem}\label{composition of primitives}
Let two unital channels $\Phi,\Psi$ be such that $\Phi\circ\Psi=\Psi\circ\Phi$. If one of the two channels is primitive, then $\Phi\circ\Psi$ is primitive.
\end{theorem}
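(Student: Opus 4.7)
The plan is to combine the two immediately preceding results: the Corollary stating that a unital channel $\Ep$ is primitive if and only if $\mathcal{M}_{\Ep^\infty} = \mathbb{C}1$, and Proposition \ref{important prop}, which describes the stabilising algebra of a composition of two commuting channels. Since the commutativity hypothesis $\Phi\circ\Psi=\Psi\circ\Phi$ is part of the statement, Proposition \ref{important prop} is directly applicable to both orderings of the composition, and this is the key observation that makes the primitivity of one factor transfer to the composition.

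First I would assume without loss of generality that $\Psi$ is primitive. By the Corollary, this translates to $\mathcal{M}_{\Psi^\infty}=\mathbb{C}1$. Then I invoke Proposition \ref{important prop} to write
\begin{equation*}
\mathcal{M}_{(\Phi\circ\Psi)^\infty} \;=\; \{\, a\in\mathcal{M}_{\Psi^\infty} : \Psi^{k}(a)\in\mathcal{M}_{\Phi^\infty}\ \forall\, k\in\mathbb{N}\,\}.
\end{equation*}
Since the outer membership condition already forces $a\in\mathcal{M}_{\Psi^\infty}=\mathbb{C}1$, we get $\mathcal{M}_{(\Phi\circ\Psi)^\infty}\subseteq\mathbb{C}1$. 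The reverse inclusion is automatic because scalars lie in every multiplicative domain and every power's multiplicative domain. Hence $\mathcal{M}_{(\Phi\circ\Psi)^\infty}=\mathbb{C}1$, and the Corollary gives that $\Phi\circ\Psi$ is primitive.

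If instead it is $\Phi$ that is primitive, the same argument runs after using the commutativity hypothesis to rewrite $\Phi\circ\Psi=\Psi\circ\Phi$ and then applying Proposition \ref{important prop} with the roles of $\Phi$ and $\Psi$ interchanged; then the outer condition $a\in\mathcal{M}_{\Phi^\infty}=\mathbb{C}1$ again collapses the intersection to $\mathbb{C}1$.

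There is no real obstacle once one has the characterisation $\mathcal{M}_{\Ep^\infty}=\mathbb{C}1\Leftrightarrow \Ep$ primitive and the commutant-type formula of Proposition \ref{important prop}; the only mild subtlety is making sure one uses commutativity explicitly to treat both cases symmetrically, since the stated form of Proposition \ref{important prop} is not manifestly symmetric in the two factors.
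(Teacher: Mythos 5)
Your proof is correct and follows essentially the same route as the paper: both arguments combine Proposition \ref{important prop} with the characterisation that a unital channel is primitive iff its stabilising algebra is $\mathbb{C}1$. The only cosmetic difference is that the paper reads off the inclusion $\mathcal{M}_{(\Phi\circ\Psi)^{\infty}}\subseteq\mathcal{M}_{\Phi^{\infty}}\cap\mathcal{M}_{\Psi^{\infty}}$ and concludes at once, whereas you use the outer membership condition directly and handle the case of primitive $\Phi$ by swapping the factors via commutativity.
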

\begin{proof}
Proposition \ref{important prop} implies that   \[\mathcal{M}_{({\Phi\circ\Psi})^{\infty}}\subseteq
\mathcal{M}_{{\Phi}^{\infty}}\bigcap
\mathcal{M}_{{\Psi}^{\infty}}.\]
Now if one of them (say $\Phi$) is primitive, then $\mathcal{M}_{{\Phi}^\infty}=\mathbb{C}1$, forcing $\mathcal{M}_{({\Phi\circ\Psi})^{\infty}}=\mathbb{C}1$, making $\Phi\circ\Psi$ primitive.
\end{proof}
Now we give a new proof to the following theorem concerning the primitivity of $\Ep^*\circ\Ep$, which was first given in \cite{ergodic}, Theorem 13.
\begin{theorem}
Let $\Ep$ be a unital channel. If $\Ep^*\circ\Ep$ is irreducible, then $\Ep$ is primitive. Furthermore, if $\Ep^*\circ\Ep=\Ep\circ\Ep^*$ and $\Ep$ is primitive, then $\Ep^*\circ\Ep$ is primitive and hence irreducible.
\end{theorem}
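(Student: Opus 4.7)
The strategy is to translate the multiplicative condition defining primitivity into a fixed-point condition via Lemma~\ref{primitive} ($\Me = \mathcal{F}_{\Ep^*\circ\Ep}$), then combine this with Theorem~\ref{main theroem}(3), which identifies $\Mi$ with the algebra generated by the peripheral eigenvectors, together with the corollary that $\Ep$ is primitive if and only if $\Mi = \mathbb{C}1$. Trace preservation of $\Ep$ makes $\Ep^*$ unital, so $\Ep^*\circ\Ep$ is itself a unital quantum channel satisfying the Schwarz inequality, and all of this machinery applies equally well to it.

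For the first implication, assume $\Ep^*\circ\Ep$ is irreducible. The structure theorem for irreducible unital Schwarz maps quoted at the beginning of this section then gives $\mathcal{F}_{\Ep^*\circ\Ep} = \mathbb{C}1$. Lemma~\ref{primitive} yields $\Me = \mathbb{C}1$, and since $\Mi \subseteq \Me$, we obtain $\Mi = \mathbb{C}1$. The primitivity characterisation corollary concludes that $\Ep$ is primitive.

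For the converse, suppose $\Ep$ is primitive and $\Ep^*\circ\Ep = \Ep\circ\Ep^*$. Primitivity gives $\Mi = \mathbb{C}1$. Normality of $\Ep$ together with Corollary~\ref{diagonalisable} yields $\Me = \Mi = \mathbb{C}1$, and Lemma~\ref{primitive} gives $\mathcal{F}_{\Ep^*\circ\Ep} = \mathbb{C}1$. To upgrade this to primitivity of $\Phi := \Ep^*\circ\Ep$, I apply Theorem~\ref{main theroem}(3) to $\Phi$: the subalgebra $\mathcal{M}_{\Phi^\infty}$ is the algebra generated by the peripheral eigenvectors of $\Phi$. By the earlier corollary, the peripheral spectrum of $\Ep^*\circ\Ep$ reduces to $\{1\}$, so those peripheral eigenvectors are just fixed points, and hence lie in $\mathcal{F}_\Phi = \mathbb{C}1$. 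Therefore $\mathcal{M}_{\Phi^\infty} = \mathbb{C}1$, and the characterisation corollary delivers primitivity of $\Phi$.

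The pinch point is the last step in Part~2: triviality of $\mathcal{F}_\Phi$ alone does not formally yield irreducibility of $\Phi$, so one has to route the conclusion through $\mathcal{M}_{\Phi^\infty}$ rather than through a direct Perron--Frobenius deduction. The earlier corollary pinning the peripheral spectrum of $\Ep^*\circ\Ep$ to $\{1\}$ is precisely what makes this routing succeed; everything else is bookkeeping around Lemma~\ref{primitive}.
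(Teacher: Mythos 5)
Your argument is correct, and the first implication is essentially the paper's own proof: both reduce irreducibility of $\Ep^*\circ\Ep$ to $\mathcal{F}_{\Ep^*\circ\Ep}=\mathbb{C}1$ via the Evans--Krohn theorem, convert that to $\Me=\mathbb{C}1$ through Theorem \ref{mult.dom thm}/Lemma \ref{primitive}, and conclude primitivity (the paper by noting peripheral eigenvectors of $\Ep$ lie in $\Me$, you by $\Mi\subseteq\Me$ plus the corollary that primitivity is equivalent to $\Mi=\mathbb{C}1$ --- the same content). Where you genuinely diverge is the second implication: the paper disposes of it in one line as a direct application of Theorem \ref{composition of primitives} (applied to the commuting unital channels $\Ep^*$ and $\Ep$), which in turn rests on Proposition \ref{important prop} about $\mathcal{M}_{(\Phi\circ\Psi)^\infty}$ for commuting channels, whereas you route through Corollary \ref{diagonalisable} to get $\Me=\Mi=\mathbb{C}1$, then through Theorem \ref{main theroem}(3) and the corollary pinning the peripheral spectrum of $\Ep^*\circ\Ep$ to $\{1\}$. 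Your route buys independence from the composition-of-primitives machinery and makes explicit the pinch point that triviality of $\mathcal{F}_{\Ep^*\circ\Ep}$ alone does not formally give irreducibility, which the paper leaves implicit; the paper's route is shorter given its earlier theorem. Note also that your detour through Theorem \ref{main theroem}(3) can be compressed: Lemma \ref{primitive} already gives $\mathcal{M}_{\Ep^*\circ\Ep}=\Me=\mathbb{C}1$, and since $\mathcal{M}_{(\Ep^*\circ\Ep)^\infty}\subseteq\mathcal{M}_{\Ep^*\circ\Ep}$, the characterisation corollary applies at once, so the peripheral-spectrum corollary and the structure theorem are not actually needed there.
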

\begin{proof}
Let $\Ep^*\circ\Ep$ be irreducible. Then we have $\mathcal{F}_{\Ep^*\circ\Ep}=\mathbb{C}1$. It is easy to see that if for any $a$, $\Ep(a)=\lambda a$ with $|\lambda|=1$, then $a\in \Me$. Now by Lemma \ref{primitive} we get $\Me=\mathcal{F}_{\Ep^*\circ\Ep}=\mathbb{C}1$. Hence the peripheral spectrum has to be trivial. So $\Ep$ is primitive.

The other assertion is a direct application of Theorem \ref{composition of primitives}. 

\end{proof}
The following result appeared in \cite{S-W}, Lemma 2,  which was used to show that the set of primitive channels with a fixed Kraus rank is path connected. We give an alternate proof of the result using the techniques we have developed. 
\begin{theorem} Let $\Ep$ be a unital quantum channel. Then 
 $\Ep$ is irreducible if and only if $\Ep+\Ep^2$ is primitive. 
\end{theorem}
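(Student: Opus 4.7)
The plan is to reduce to working with the unital channel $\Phi := \frac{1}{2}(\Ep+\Ep^2)$, which is a convex combination of unital quantum channels and therefore itself one. Scaling a positive map by a positive constant preserves irreducibility (the inequality $\Phi(p)\le \lambda p$ is scale-invariant) and carries peripheral eigenvalues bijectively between $\Phi$ and $2\Phi = \Ep+\Ep^2$, so primitivity of $\Ep+\Ep^2$ is equivalent to primitivity of $\Phi$. Thus it suffices to prove $\Ep$ is irreducible iff $\Phi$ is primitive.

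For the easy direction ($\Phi$ primitive $\Rightarrow \Ep$ irreducible) I would argue by contrapositive. Suppose $\Ep$ is not irreducible and pick a non-trivial projection $p$ and $\lambda>0$ with $\Ep(p)\le \lambda p$. The inequality forces $\Ep(p)$ to have range inside $\mathrm{range}(p)$ (for $v\perp \mathrm{range}(p)$, $\langle v,\Ep(p)v\rangle\le \lambda\langle v,pv\rangle=0$), so in the $p\oplus (1-p)$ block decomposition one has $\Ep(1-p)=1-\Ep(p)=(p-\Ep(p))\oplus (1-p)$. Positivity of $\Ep(1-p)$ demands $\Ep(p)\le p$; combined with trace preservation $\tr(\Ep(p))=\tr(p)$, this yields $\Ep(p)=p$. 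Then $\Phi(p)=\tfrac12(\Ep(p)+\Ep^2(p))=p$, so $p$ is a non-trivial fixed point of $\Phi$, contradicting $\Mi_\Phi=\mathbb{C}1$ (which holds by the corollary to Lemma \ref{abelian}).

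For the main direction ($\Ep$ irreducible $\Rightarrow \Phi$ primitive), I would apply Theorem \ref{main theroem} to $\Phi$: it suffices to show the algebra generated by the peripheral eigenvectors of $\Phi$ equals $\mathbb{C}1$. By the structure theorem for irreducible maps together with Lemma \ref{abelian}, the peripheral spectrum of $\Ep$ is a cyclic subgroup $\{1,\omega,\dots,\omega^{m-1}\}$ with $\omega=e^{2\pi i/m}$, and the corresponding eigenvectors are the powers $1,u,\dots,u^{m-1}$ of a single unitary $u\in \Mi_\Ep$. I would decompose $\Md=\Mi_\Ep\oplus \Mi_\Ep^{\perp}$ in the Hilbert-Schmidt sense; by parts 1-2 of Theorem \ref{main theroem} both summands are invariant under $\Ep$, hence under $\Phi$. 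On $\Mi_\Ep$, $\Phi(u^k)=\tfrac12(\omega^k+\omega^{2k})u^k$, and
\[
\left|\tfrac12(\omega^k+\omega^{2k})\right|=\tfrac{|1+\omega^k|}{2}=1 \iff \omega^k=1 \iff k=0,
\]
so on $\Mi_\Ep$ the only peripheral eigenvalue of $\Phi$ is $1$, with eigenspace $\mathbb{C}\cdot 1$. On $\Mi_\Ep^{\perp}$, part 2 of Theorem \ref{main theroem} gives $\Ep^n(x)\to 0$, so every eigenvalue $\lambda$ of $\Ep|_{\Mi_\Ep^{\perp}}$ satisfies $|\lambda|<1$, whence $|\tfrac12(\lambda+\lambda^2)|\le \tfrac{|\lambda|+|\lambda|^2}{2}<1$ and $\Phi|_{\Mi_\Ep^{\perp}}$ has no peripheral eigenvalue. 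Combining both pieces, $\mathcal{N}_\Phi=\mathbb{C}\cdot 1$, so by Theorem \ref{main theroem}(3) $\Mi_\Phi=\mathbb{C}\cdot 1$ and $\Phi$ is primitive.

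The main obstacle is the forward direction: one must rule out peripheral eigenvectors of $\Phi$ outside the fixed-point set, which is why invoking the cyclic structure of the peripheral spectrum of an irreducible unital channel and matching it against the map $\lambda\mapsto (\lambda+\lambda^2)/2$ is essential. The rest is a straightforward block computation plus the trace-preservation argument.
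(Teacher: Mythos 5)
Your proof is correct, but it takes a genuinely different route from the paper's. For the direction ``$\Ep$ irreducible $\Rightarrow$ $\Ep+\Ep^2$ primitive'', the paper factors $\Ep+\Ep^2=\Ep(1+\Ep)$, imports from Wolf's notes the fact that $1+\Ep$ is primitive whenever $\Ep$ is irreducible, and then applies Proposition \ref{important prop} to the commuting pair to conclude $\mathcal{M}_{(\Ep+\Ep^2)^{\infty}}\subseteq \Mi\cap\mathcal{M}_{(1+\Ep)^{\infty}}=\mathbb{C}1$; you instead argue intrinsically, decomposing $\Md=\Mi\oplus\Mi^{\perp}$ via Theorem \ref{main theroem}, using the cyclic structure of the peripheral spectrum of an irreducible unital channel (the Evans--H\o egh-Krohn theorem together with Lemma \ref{abelian}) to reduce the peripheral analysis on $\Mi$ to the scalar estimate $\tfrac12|\omega^k+\omega^{2k}|=\tfrac12|1+\omega^k|<1$ for $\omega^k\neq 1$, and the decay $\Ep^n\to 0$ on $\Mi^{\perp}$ to exclude peripheral eigenvectors there. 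For the converse, the paper cites Bratteli et al.\ for $\Ep(p)\leq p$, deduces $\Ep(p)=p$, and then contradicts the convergence characterization of primitivity via $(\Ep+\Ep^2)^n(p)=2^n p$; you prove $\Ep(p)\leq p$ by a short block-positivity argument, get $\Ep(p)=p$ from trace preservation, and observe directly that $p$ is then a nontrivial fixed point of $\tfrac12(\Ep+\Ep^2)$, contradicting primitivity via the corollary to Lemma \ref{abelian}. What your version buys is self-containedness (no appeal to Wolf's Theorems 6.2/6.7 or the Bratteli et al.\ lemma) and explicit care with normalization: you work with the unital channel $\tfrac12(\Ep+\Ep^2)$ and note that primitivity is scale-invariant, whereas the paper applies Proposition \ref{important prop} and the notation $\mathcal{M}_{(1+\Ep)^{\infty}}$ to $1+\Ep$, which is not itself trace preserving, so the same rescaling is implicitly needed there. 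What the paper's version buys is brevity and a further illustration of its commuting-composition machinery (Proposition \ref{important prop}), at the cost of leaning on external results.
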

\begin{proof}
First we observe that $\Ep+\Ep^2=\Ep(1+\Ep)=(1+\Ep)\Ep$ and hence $\Ep$ and $1+\Ep$ commute. Now it is well known (see \cite{wolf}, Theorem 6.2, 6.7) that if $\Ep$ is irreducible, then $1+\Ep$ is primitive. Now by the Proposition \ref{important prop},   $\mathcal{M}_{({\Ep+\Ep^2})^{\infty}}=
\mathcal{M}_{(\Ep(1+\Ep))^{\infty}} \subseteq\Mi\bigcap\mathcal{M}_{(1+\Ep)^\infty}$. As $\mathcal{M}_{(1+\Ep)^\infty}=\mathbb{C}1$, we have $\mathcal{M}_{({\Ep+\Ep^2})^{\infty}}$ is $\mathbb{C}1$, forcing $\Ep+\Ep^2$ to be primitive.

Conversely, let $\Ep+\Ep^2$ be primitive. If $\Ep$ is not irreducible, then there exists a non trivial projection $p$ such that $\Ep(p)\leq \lambda p$. Furthermore, using the Kraus operators it can be shown that $\Ep(p)\leq p$ (\cite{bratteli}, Lemma 3.1). Since $\Ep$ is trace preserving, we get by faithfulness of trace, $\Ep(p)=p$. This yields, $(\Ep+\Ep^2)(p)=2p$ and subsequently $(\Ep+\Ep^2)^{n}(p)=2^{n}p$. Now following \cite{wolf}, Theorem 6.7, if $\Ep+\Ep^2$ is primitive, then $\lim_{n\to\infty}(\Ep+\Ep^2)^n(p)$ is a positive definite matrix. But if $(\Ep+\Ep^2)^{n}=2^np$, then $\lim_{n\to\infty}(\Ep+\Ep^2)^n(p)$ does not exists, contradicting the primitivity of $(\Ep+\Ep^2)$.
\end{proof}

Exploiting the relation $\mathcal{F}_{\Ep^*\circ\Ep}=\Me$, we give a different proof of the following proposition given in \cite{ergodic}, Proposition 2.
\begin{proposition}
Let $\Ep$ be a unital channel. Then $\Ep^*\circ\Ep$ is irreducible if and only if there is no projection $p<1$ and no unitary $u$ such that $\Ep(p)=upu^*$.
\end{proposition}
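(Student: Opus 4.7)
My plan is to translate the statement into a condition on the multiplicative domain $\Me$ by invoking the identification $\Me = \mathcal{F}_{\Ep^*\circ\Ep}$ from Theorem \ref{mult.dom thm}. Since $\Ep^*\circ\Ep$ is itself a unital quantum channel on $\Md$ (as noted in the proof of Corollary \ref{commutant}), its irreducibility will be equivalent to $\mathcal{F}_{\Ep^*\circ\Ep} = \mathbb{C}1$: the forward implication is the content of the Evans--H{\o}egh-Krohn theorem recalled at the beginning of this section, while for the converse, any non-trivial projection $p \in \mathcal{F}_{\Ep^*\circ\Ep}$ trivially satisfies $\Ep^*\circ\Ep(p) = p \leq 1\cdot p$, which would violate irreducibility. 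With this in place, the claim reduces to: $\Me = \mathbb{C}1$ if and only if there is no projection $0 < p < 1$ and no unitary $u$ with $\Ep(p) = upu^*$.

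For the forward direction of this reformulated claim, I would suppose such a projection $p$ and unitary $u$ exist. Then $\Ep(p) = upu^*$ is a projection in $\Md$, so part~(1) of Theorem \ref{projection and unitary} forces $p \in \Me$; since $0 < p < 1$, this exhibits a non-scalar element of $\Me$, contradicting $\Me = \mathbb{C}1$. For the reverse direction, I would argue the contrapositive: assuming $\Me \supsetneq \mathbb{C}1$ and noting that $\Me$ is a finite-dimensional $C^*$-subalgebra of $\Md$ strictly larger than the scalars, it must contain a non-trivial projection $p$ with $0 < p < 1$. Again by Theorem \ref{projection and unitary}, $\Ep(p)$ is a projection, and trace preservation of $\Ep$ gives $\tr(\Ep(p)) = \tr(p)$; since projections of equal rank in $\Md$ are unitarily equivalent, there exists a unitary $u \in \Md$ with $\Ep(p) = upu^*$.

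The main point that is not essentially immediate is the equivalence between irreducibility of $\Ep^*\circ\Ep$ and triviality of its fixed-point set, but I expect this to be a short matter: the forward direction is one conclusion of the Evans--H{\o}egh-Krohn theorem quoted above, and the reverse is immediate from the definition of irreducibility applied to a non-trivial fixed projection. Once this reduction is in place, the rest is a direct application of Theorems \ref{mult.dom thm} and \ref{projection and unitary} combined with the elementary unitary equivalence of equal-rank projections in $\Md$.
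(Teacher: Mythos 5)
Your overall route is the same as the paper's: both proofs funnel the statement through the identity $\Me=\mathcal{F}_{\Ep^*\circ\Ep}$ of Theorem \ref{mult.dom thm}, use Theorem \ref{projection and unitary} to place $p$ in $\Me$ whenever $\Ep(p)=upu^*$ is a projection, and in the other direction extract a non-trivial projection from $\Me\supsetneq\mathbb{C}1$ (you via the structure of a finite-dimensional unital $C^*$-subalgebra, the paper via the spectral projections of a positive element of $\Me$) and then combine trace preservation with unitary equivalence of equal-rank projections. Those parts are correct.

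The genuine gap is in your reduction step, namely the implication ``$\mathcal{F}_{\Ep^*\circ\Ep}=\mathbb{C}1$ implies $\Ep^*\circ\Ep$ is irreducible'', which you do need for the direction ``no such $p,u$ $\Rightarrow$ irreducible''. The justification you give --- a non-trivial projection $p\in\mathcal{F}_{\Ep^*\circ\Ep}$ satisfies $\Ep^*\circ\Ep(p)=p\le 1\cdot p$ and hence witnesses reducibility --- proves the contrapositive of the \emph{other} implication (irreducible $\Rightarrow$ trivial fixed-point set), which is already supplied by the theorem quoted from \cite{evans-krohn}; it says nothing about why reducibility produces a non-scalar \emph{fixed} point. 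What is actually needed is: if $\Phi=\Ep^*\circ\Ep$ is reducible, so that $\Phi(p)\le\lambda p$ for some non-trivial projection $p$ and some $\lambda>0$, then $p$ is in fact fixed. This is true but uses both unitality and trace preservation: sub-invariance forces the support of $\Phi(p)$ under $p$, hence $\Phi(p)=p\Phi(p)p\le p\Phi(1)p=p$, and then $\tr(\Phi(p))=\tr(p)$ upgrades this to $\Phi(p)=p$ --- exactly the argument the paper invokes (via \cite{bratteli}, Lemma 3.1, plus faithfulness of the trace) in its proof that $\Ep$ is irreducible if and only if $\Ep+\Ep^2$ is primitive. To be fair, the paper's own proof of this proposition also passes over the point silently (``assume the contrary, that is $\mathcal{F}_{\Ep^*\circ\Ep}\neq\mathbb{C}1$''), but since you explicitly claimed the step is immediate from the definition applied to a fixed projection, the circularity should be flagged: reducibility only hands you a sub-invariant projection, not a fixed one. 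With that one-line lemma inserted, your proof is complete and essentially identical to the paper's.
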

\begin{proof}
Let $\Ep^*\circ\Ep$ be irreducible. Then  $\mathbb{C}1=\mathcal{F}_{\Ep^*\circ\Ep}=\Me$. If there is a projection $p$ and a unitary $u$ such that $\Ep(p)=upu^*$, then $\Ep(p)^2=up^2u^*=upu^*=\Ep(p)=\Ep(p^2)$ and hence $p\in \Me$, contradicting the hypothesis.

Conversely, assume the contrary that is  $\mathcal{F}_{\Ep^*\circ\Ep}\neq
\mathbb{C}$, then $\Me\neq\mathbb{C}1$ and hence assume $a\in \Me$. As $\Me$ is a C$^*$-algebra, replacing a by the real part $\Re a$ and the imaginary $\Im a$ part of $a$, we can assume $a$ is self adjoint. Furthermore, replacing $a$ by $a+\|a\|1$ and noting that $\Ep$ is unital, we can assume $a$ is positive. By the spectral decomposition theorem, let $a=\lambda_{1}p_{1}+\cdots\lambda_{k}p_{k}$ where $\lambda_{i}$'s are the eigenvalues and $p_{i}$'s are the corresponding eigen projections. Since $a\in \Me$, it is easy to see that $p_{i}\in \Me$, for every $i=1, 2, \cdots, k$. Hence for any such $j$, using Theorem \ref{projection and unitary}  we get $\Ep(p_{j})$ is a projection and because $\Ep$ is trace preserving, $\Ep(p_{j})$ is a projection of the same rank as $p_{j}$. This means there is a unitary $u$ such that $\Ep(p_j)=up_{j}u^*$. This violates the assumption.
\end{proof}
%\begin{theorem}
\section{In the Heisenberg Picture}{\label{heisenberg pic}}
In general setting, a quantum channel ($\Ep$) that is a completely positive trace preserving linear maps is defined on the trace class operators $\mathcal{T}(\mathcal{H})$ where $\mathcal{H}$ can be an infinite dimensional Hilbert space. The positivity of a map on $\mathcal{T}(\mathcal{H})$ is realised by considering $\mathcal{T}(\mathcal{H})$ as a matrix ordered space. 
 This map $\Ep:\mathcal{T}(\mathcal{H})\rightarrow \mathcal{T}(\mathcal{H})$ is seen as a linear operator in the Schr\"odinger picture. Since the dual of $\mathcal{T}(\mathcal{H})$ is $\mathcal{B}(\mathcal{H})$, that is $\mathcal{B}(\mathcal{H})_{*}=\mathcal{T}(\mathcal{H})$, the map  $\Ep$ induces a unital, normal and completely positive map on $\mathcal{B}(\mathcal{H})$. 
The dual picture where a completely positive and trace preserving map on trace class operators induces a unital completely positive map $ \Ep^*:\mathcal{B}(\mathcal{H})\rightarrow \mathcal{B}(\mathcal{H})$ is also an important association to a channel. 
The two maps are related via the relation $\tr(a\Ep(b))=\tr(\Ep^*(a)b)$ for all $a\in \mathcal{B}(\mathcal{H}), b\in \mathcal{T}(\mathcal{H})$.
This scenario where the map $\Ep^*:\mathcal{B}(\mathcal{H})\rightarrow\mathcal{B}(\mathcal{H})$ is a unital normal and completely positive, is known as Heisenberg picture. In finite dimensional Hilbert spaces, both the spaces are same but thought of as different matrix ordered spaces. That is, although the matrix ordered spaces ${\Md}_{*}$ and $\Md$ are different, a linear map, if it is completely positive on ${\Md}_*$ is equivalent to being completely positive on $\Md$. The discussion on positivity and quantum channel is mentioned in detail in \cite{farenick1}, Section III.A . We discuss the multiplicative property of a linear map in the Heisenberg picture and hence we will consider unital completely positive (ucp) maps on $\Md$.  
\paragraph{Density of Trivial Multiplicative Domain Maps}
\begin{theorem}\label{density of trivial mult. dom}
The set of unital completely positive maps on $\Md$ that have trivial multiplicative domains is dense in the completely bounded norm topology inside the set of unital completely positive maps.
\end{theorem}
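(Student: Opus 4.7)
The plan is to perturb an arbitrary ucp map $\Phi:\Md\to\Md$ by a small convex combination with one fixed ucp map whose multiplicative domain is already trivial, and to show that every such convex combination automatically inherits a trivial multiplicative domain. Since the perturbation can be made arbitrarily small in cb norm, this will give the result.

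First I would fix a convenient ucp map $\Psi_{0}:\Md\to\Md$ with $\mathcal{M}_{\Psi_{0}}=\mathbb{C}\cdot I$; the completely depolarising map $\Psi_{0}(x)=\tr(x)\,I/d$ will serve. The Schwarz equality $\Psi_{0}(aa^{*})=\Psi_{0}(a)\Psi_{0}(a^{*})$ reduces to $d\,\tr(aa^{*})=|\tr(a)|^{2}$, which by Cauchy--Schwarz in the Hilbert--Schmidt inner product forces $a\in\mathbb{C}\cdot I$; combined with Theorem \ref{paulsen's exc} this gives $\mathcal{M}_{\Psi_{0}}=\mathbb{C}\cdot I$.

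The heart of the argument is a pointwise algebraic identity. For $t\in(0,1)$ set $\Phi_{t}:=(1-t)\Phi+t\Psi_{0}$. Using only that ucp maps preserve adjoints, a direct expansion should give
\begin{align*}
\Phi_{t}(aa^{*})-\Phi_{t}(a)\Phi_{t}(a^{*}) &= (1-t)\bigl[\Phi(aa^{*})-\Phi(a)\Phi(a^{*})\bigr] \\
&\quad + t\bigl[\Psi_{0}(aa^{*})-\Psi_{0}(a)\Psi_{0}(a^{*})\bigr] \\
&\quad + t(1-t)\bigl[\Phi(a)-\Psi_{0}(a)\bigr]\bigl[\Phi(a)-\Psi_{0}(a)\bigr]^{*}.
\end{align*}
Each of the three summands on the right is positive semi-definite: the first two by the Schwarz inequality for ucp maps, and the third because it has the form $BB^{*}$. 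Thus if $a\in\mathcal{M}_{\Phi_{t}}$, the left-hand side vanishes and each summand must vanish individually. In particular $\Psi_{0}(aa^{*})=\Psi_{0}(a)\Psi_{0}(a^{*})$; the symmetric identity applied to $a^{*}a$ then places $a$ in $\mathcal{M}_{\Psi_{0}}=\mathbb{C}\cdot I$. Hence $\mathcal{M}_{\Phi_{t}}=\mathbb{C}\cdot I$ for every $t\in(0,1)$.

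To conclude, since ucp maps have cb norm equal to one, $\|\Phi_{t}-\Phi\|_{\mathrm{cb}}=t\|\Psi_{0}-\Phi\|_{\mathrm{cb}}\leq 2t$, which is arbitrarily small as $t\to 0^{+}$. So $\Phi$ lies in the cb closure of the set of ucp maps with trivial multiplicative domain. The only genuinely non-routine step is the algebraic decomposition identity above, and I expect that to be the main obstacle --- not because it is deep, but because assembling the cross terms into the rank-one-like positive summand $t(1-t)(\Phi(a)-\Psi_{0}(a))(\Phi(a)-\Psi_{0}(a))^{*}$ requires a careful expansion; once this identity is in hand, every remaining implication is immediate.
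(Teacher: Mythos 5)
Your proposal is correct, and it uses the same approximating family as the paper -- the convex mixture $\Phi_t=(1-t)\Phi+t\Psi_0$ with the completely depolarising map $\Psi_0(x)=\tr(x)\,1/d$, together with the cb-norm estimate $\|\Phi_t-\Phi\|_{\mathrm{cb}}\le 2t$ -- but the mechanism by which you verify that $\mathcal{M}_{\Phi_t}=\mathbb{C}1$ is genuinely different. The paper observes that $\Phi_t$ is strictly positive, hence irreducible, and then argues through the structure of the multiplicative domain: any element of $\mathcal{M}_{\Phi_t}$ can be replaced by its spectral projections, Theorem \ref{projection and unitary} forces their images to be projections, strict positivity forces those images to be $1$, and this contradicts irreducibility. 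You instead prove a convexity decomposition of the Schwarz defect,
\begin{align*}
\Phi_{t}(aa^{*})-\Phi_{t}(a)\Phi_{t}(a^{*}) ={}& (1-t)\bigl[\Phi(aa^{*})-\Phi(a)\Phi(a^{*})\bigr] + t\bigl[\Psi_{0}(aa^{*})-\Psi_{0}(a)\Psi_{0}(a^{*})\bigr] \\
&{}+ t(1-t)\bigl[\Phi(a)-\Psi_{0}(a)\bigr]\bigl[\Phi(a)-\Psi_{0}(a)\bigr]^{*},
\end{align*}
which I have checked and which does hold (the cross terms assemble exactly as you claim, using only $\Phi(a^*)=\Phi(a)^*$ and $\Psi_0(a^*)=\Psi_0(a)^*$); since each summand is positive semidefinite and $t,1-t>0$, membership in $\mathcal{M}_{\Phi_t}$ forces the $\Psi_0$-defect to vanish, and your Cauchy--Schwarz computation correctly identifies $\mathcal{M}_{\Psi_0}=\mathbb{C}1$. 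Your route is more elementary (no irreducibility, no projection lemma, no appeal to strict positivity) and in fact proves the stronger general statement that $\mathcal{M}_{(1-t)\Phi+t\Psi}\subseteq\mathcal{M}_{\Phi}\cap\mathcal{M}_{\Psi}$ for $0<t<1$, which is precisely the weighted form of the Choi theorem the paper quotes at the end of Section \ref{heisenberg pic}; so your argument could be shortened further by citing that result. What the paper's route buys in exchange is the extra information that the approximants are strictly positive and irreducible, which connects this construction to the irreducibility/primitivity discussion elsewhere in the paper, but it is not needed for the density statement itself.
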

\begin{proof}
Lets start with an arbitrary ucp map $\Phi$ and define a new map $\mathcal{E}:\Md\rightarrow\Md$
by $\Ep(x)=(1-\frac{1}{n})\Phi(x)+\frac{1}{n}\tr(x)\frac{1}{d}$, for all $x\in \Md$ where $n(>1) \in \mathbb{N}$ and $1$ is the identity operator in $\Md$.\\
One can check $\Ep$ is unital completely positive. Note also that $\Ep$ is strictly positive which means it sends positive semi definite operators to positive definite i.e invertible and positive elements. Hence it is clear that $\Ep$ is irreducible.
Now we see $\Ep$ approximates $\Phi$. Since the map $\tilde{\Ep}(x)=\tr(x)\frac{I}{d}$ is unital and positive and the cb norm is attained at $1$. We have 
\begin{align*}
||\Phi-\Ep||_{cb}&=||\frac{1}{n}(\Phi-\tilde{\Ep})||_{cb}\\
&\leq \frac{1}{n}[||\Phi||_{cb}+||\tilde{\Ep}||_{cb}]\\
&=\frac{2}{n}
\end{align*}
As we can take $n$ large enough, it shows that $\Ep$ approximates an arbitrary ucp maps $\Phi$.

Suppose $\Ep$ has a non trivial multiplicative domain and let $a \in \mathcal{M}_{\Ep}$. Since $\Me$ is a $C^*$ algebra, $a^*$ is also in $\mathcal{M}_{\Ep}$.
So we can assume $a$ to be self adjoint. Also since $\Ep$ is unital by replacing $a$ by $a+||a||1$ we can assume $a$ is a positive operator. Now let us assume $a$ has the following spectral decomposition:  $a=\lambda_{1}p_{1}+\cdots+\lambda_{k}p_{k}$ where each $p_{i}$ is a projection. Since 
$\mathcal{E}(a^2)=\Ep(a.a)=\Ep(a)\Ep(a)$ we have 
$\Ep(a^m)=\Ep(a)^m$ for all $m>0$. So for any polynomial $f$, we have $\Ep(f(a))=f(\Ep(a))$.
So for any polynomials $f$, $f(a)$ is in the multiplicative domain of $\Ep$ and since it is a $C^*$-algebra, all the spectral projections $p_{i}\in \Me$, for every $i$. By Theorem \ref{projection and unitary}  $\Ep(p_{i})$ is also a projection. Since $\Ep$ is strictly positive, the only possibility of $\Ep(p_{i})$ is $1$.  Now, if $\Ep(p_{i})=1$, then $\Ep(1-p_{i})=0$ which violates the irreducibility of $\Ep$. 
This shows the multiplicative domain of $\Ep$ can not contain any non-trivial elements.
\end{proof}
From the standpoint of C$^*$-algebra theory on finite dimensional Hilbert spaces, Theorem \ref{density of trivial mult. dom} reveals an interesting property that the majority of ucp maps on $\Md$ don't display any multiplicative nature even when restricted to a subdomain. From the viewpoint of quantum information theory, specially in \emph{quantum error correction} where a tacit relationship between multiplicative domain of a unital channel and error correcting codes have been discovered (see \cite{c-j-k}, \cite{johnston}), Theorem \ref{density of trivial mult. dom} rules out the possibility of \emph{perfect error correction} for majority of channels. Section \ref{sec:application} contains further discussion on 
the effect of multiplicative domain of a unital channel on error correction. 

We end this subsection with another observation which reinforce the effect of the Theorem \ref{density of trivial mult. dom} in the study of the convex set of all ucp maps on $\Md$. We first state the following Theorem
\begin{theorem}{\rm{(\text{Choi},\cite{choi1})}}
If $\Ep,\Phi,\Psi$ are ucp maps on a C$^*$-algebra $\mathcal{A}$ such that $\Ep=\frac{1}{2}(\Phi+\Psi)$, then
\[\Me=\mathcal{M}_{\Phi}\cap\mathcal{M}_{\Psi}\cap
\{x\in \mathcal{A}:\Ep(x)=\Phi(x)=\Psi(x)\}.\]
\end{theorem}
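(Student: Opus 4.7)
The plan is to split the set equality into the two inclusions. The direction $\supseteq$ is a one-line direct computation, while the reverse rests on a single algebraic identity that packages the two Schwarz gaps with a cross term.

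For $\supseteq$, suppose $x$ lies in $\mathcal{M}_{\Phi}\cap\mathcal{M}_{\Psi}$ and satisfies $\Phi(x)=\Psi(x)=\Ep(x)$. Then for any $y\in\mathcal{A}$,
\[
\Ep(xy)=\tfrac{1}{2}\bigl(\Phi(x)\Phi(y)+\Psi(x)\Psi(y)\bigr)=\Ep(x)\cdot\tfrac{1}{2}\bigl(\Phi(y)+\Psi(y)\bigr)=\Ep(x)\Ep(y),
\]
and $\Ep(yx)=\Ep(y)\Ep(x)$ is symmetric, placing $x$ in $\Me$.

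For the reverse inclusion $\subseteq$, the key step is to set $a=\Phi(x)$, $b=\Psi(x)$ and verify the identity
\begin{align*}
\Ep(xx^*)-\Ep(x)\Ep(x)^* &= \tfrac{1}{2}\bigl(\Phi(xx^*)-aa^*\bigr)+\tfrac{1}{2}\bigl(\Psi(xx^*)-bb^*\bigr)\\
&\quad +\tfrac{1}{4}(a-b)(a-b)^*.
\end{align*}
Each of the three summands on the right is positive semidefinite: the first two by the Schwarz inequality for $\Phi$ and $\Psi$, the third trivially. If $x\in\Me$, then Choi's characterisation recorded in Theorem \ref{paulsen's exc} forces the left-hand side to vanish, so every summand on the right must be zero. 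The first two equalities give Schwarz equality for $\Phi$ and $\Psi$ on $xx^*$; running the same argument on $x^*x$ and invoking Theorem \ref{paulsen's exc} once more places $x$ in $\mathcal{M}_{\Phi}\cap\mathcal{M}_{\Psi}$. The third equality $(a-b)(a-b)^*=0$ forces $\Phi(x)=\Psi(x)$, which therefore coincides with $\Ep(x)$.

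The only real obstacle is spotting the algebraic decomposition displayed above; once it is in hand, the remainder is bookkeeping with the Schwarz inequality and Theorem \ref{paulsen's exc}. Since only those two ingredients enter the argument, the proof applies verbatim on an arbitrary unital C$^*$-algebra and makes no use of the finite-dimensional or matrix-algebra structure exploited elsewhere in the paper.
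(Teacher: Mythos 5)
Your proposal is correct, and there is nothing in the paper to compare it against: the statement is quoted with attribution to Choi and no proof is given in the text, so your argument stands as a self-contained justification. The decisive identity does check out: writing $a=\Phi(x)$, $b=\Psi(x)$ and expanding $\Ep(x)\Ep(x)^*=\tfrac14(a+b)(a+b)^*$ gives
\[
\Ep(xx^*)-\Ep(x)\Ep(x)^*=\tfrac12\bigl(\Phi(xx^*)-aa^*\bigr)+\tfrac12\bigl(\Psi(xx^*)-bb^*\bigr)+\tfrac14(a-b)(a-b)^*,
\]
each summand is positive by the Schwarz inequality (respectively trivially), and a sum of positive elements of a C$^*$-algebra vanishes only if each summand does; combined with the companion identity for $x^*x$ and the characterisation of the multiplicative domain by Schwarz equality, this yields the inclusion $\subseteq$, while $\supseteq$ is the direct computation you give. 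Two minor points of hygiene: the vanishing of the left-hand side for $x\in\Me$ needs only the definition of the multiplicative domain (take $y=x^*$ and use $*$-preservation), not Theorem \ref{paulsen's exc}; and since Theorem \ref{paulsen's exc} is stated for maps into $\mathcal{B}(\mathcal{H})$, for maps between abstract C$^*$-algebras you should either represent the codomain faithfully on a Hilbert space or invoke Choi's Schwarz-equality description of $\mathcal{M}_{\Phi}$, which the paper records just before that theorem. As you note, the argument uses nothing beyond 2-positivity and unitality, so it is valid on an arbitrary C$^*$-algebra, consistent with the generality in which the statement is quoted.
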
    
Now if a ucp map $\Phi$ is such that $\mathcal{M}_{\Phi}=\mathbb{C}1$, then Choi's theorem 
implies that any ucp map lying in the line-segment passing through $\Phi$ must have trivial multiplicative domain. Since by Theorem \ref{density of trivial mult. dom} such $\Phi$'s are dense, it follows that every line-segment passing through the elements of this dense set contains ucp maps with trivial multiplicative domain. Hence ucp maps which do not show any multiplicative nature is quite generic in general sense.  
%%%%%%%%%%%%%%%%%%%
\paragraph{On Arveson's Boundary type theorem on matrix algebras}
Recall that an operator $a$ on a Hilbert space $\mathcal{H}$ is called \emph{irreducible} if there exists no non trivial projection $p$ such that $ap=pa$. The celebrated Boundary theorem (\cite{arveson}) of Arveson in finite dimension asserts:
\begin{theorem}{\label{Boundary Theorem}}(\cite{farenick2})
If $a\in \Md$ is irreducible and if $\Phi:\Md \rightarrow \Md$ is a unital completely positive linear transformation such that $a \in \mathcal{F}_{\Phi}$, then $\Phi(x)=x$ for all $x \in \Md$.
\end{theorem}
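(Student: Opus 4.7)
My plan is to exploit the minimal Stinespring dilation $\Phi(x) = V^*\pi(x)V$ together with the simplicity of $\Md$ and the irreducibility of $a$. Since $\Phi$ is unital, $V:\mathbb{C}^d\to\mathcal{K}$ is an isometry; and since $\Md$ is simple, the $*$-homomorphism $\pi:\Md\to B(\mathcal{K})$ is faithful, so in finite dimensions $\mathcal{K}\cong\mathbb{C}^d\otimes\mathbb{C}^n$ with $\pi(x) = x\otimes I_n$. Writing $V\xi = \sum_{i=1}^n V_i\xi\otimes e_i$ gives Kraus operators $V_i\in\Md$ satisfying $\sum_i V_i^*V_i = I$ and $\Phi(x) = \sum_i V_i^* x V_i$.

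The goal is to show that each $V_i$ commutes with both $a$ and $a^*$. If so, the irreducibility of $a$ -- in finite dimensions equivalent to $\{a,a^*\}' = \mathbb{C}I$ -- forces each $V_i = \lambda_i I$, and $\sum|\lambda_i|^2 = 1$ then gives $\Phi(x) = x$. Equivalently, setting $L := V\mathbb{C}^d\subseteq\mathcal{K}$, one wants $L$ to be a reducing subspace for $\pi(a)$. In the block decomposition $\mathcal{K} = L\oplus L^\perp$, the hypothesis $V^*\pi(a)V = a$ fixes the upper-left block of $\pi(a)$ as $a$; a direct computation produces $\Phi(a^*a) - a^*a = C^*C$ and $\Phi(aa^*) - aa^* = BB^*$, where $B$ and $C$ denote the off-diagonal blocks. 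Thus the $\pi(a)$-reducing property of $L$ is equivalent to $a\in\mathcal{M}_\Phi$.

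Once $a\in\mathcal{M}_\Phi$ is in hand, Theorem \ref{paulsen's exc} gives $\mathcal{M}_\Phi\supseteq C^*(a,a^*,I)$, which by irreducibility coincides with $\Md$. Thus $\Phi$ is a unital $*$-homomorphism on $\Md$; by simplicity and Kadison's theorem, $\Phi(x) = uxu^*$ for some unitary $u\in\Md$, and then $\Phi(a) = a$ together with irreducibility force $u = \lambda I$, yielding $\Phi(x) = x$ for all $x$.

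The main obstacle is to deduce $a\in\mathcal{M}_\Phi$ from the lone hypothesis $\Phi(a) = a$. Without trace-preservation, the technique of Theorem \ref{mult.dom thm} is unavailable. Instead, one combines the minimality of the Stinespring dilation (so that $\pi(\Md)L$ is cyclic in $\mathcal{K}$) with the irreducibility of $a$: a non-zero $C$ in the block form would, through iterated applications of $\pi(a)$ and $\pi(a^*)$, generate a proper $\pi(\Md)$-invariant subspace of $\mathcal{K}$ and contradict minimality. This is the substantive step of Arveson's boundary-representation argument, and is precisely what allows the theorem to go through in the Heisenberg picture.
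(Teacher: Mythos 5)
Your reduction and endgame are fine: identifying $\mathcal{K}\cong\mathbb{C}^d\otimes\mathbb{C}^n$ with $\pi(x)=x\otimes I_n$, the block computation showing that $a\in\mathcal{M}_\Phi$ is equivalent to $V\mathbb{C}^d$ being reducing for $\pi(a)$ (this is Theorem \ref{paulsen's exc}), and the final steps (irreducibility gives $C^*(a,a^*,I)=\mathcal{M}_d\subseteq\mathcal{M}_\Phi$, so $\Phi$ is a unital $*$-homomorphism of the simple algebra $\mathcal{M}_d$, hence inner, and $uau^*=a$ forces $u$ scalar) are all correct. The problem is that the entire content of the theorem sits in the one step you do not prove: that $\Phi(a)=a$ implies $a\in\mathcal{M}_\Phi$. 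Your proposed mechanism --- that a non-zero block $C$ would, through iterated applications of $\pi(a)$ and $\pi(a^*)$, ``generate a proper $\pi(\mathcal{M}_d)$-invariant subspace of $\mathcal{K}$ and contradict minimality'' --- does not work as stated. Proper $\pi(\mathcal{M}_d)$-invariant subspaces (all of the form $\mathbb{C}^d\otimes W$) exist whenever the multiplicity $n\geq 2$, and their existence contradicts nothing; minimality is violated only by a proper invariant subspace that \emph{contains} $V\mathbb{C}^d$, and a nonzero off-diagonal block hands you no such subspace. In fact the implication runs the other way: only after you know $B=C=0$ is $V\mathbb{C}^d$ reducing for $\pi(a)$, hence invariant for $\pi(C^*(a,a^*,I))=\pi(\mathcal{M}_d)$, at which point minimality forces $n=1$. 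Appealing to ``Arveson's boundary-representation argument'' at exactly this point is circular, since that argument is precisely the statement being proved.

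To see that the missing step is genuinely nontrivial, note that for ucp maps that are not trace preserving, fixed points need not lie in the multiplicative domain at all: for the ucp map $\Phi$ on $\mathcal{M}_3$ appearing after Corollary \ref{commutant}, the element $x=\mathrm{diag}\bigl(1,0,\tfrac12\bigr)$ satisfies $\Phi(x)=x$ but $\Phi(x^2)\neq\Phi(x)^2$. So irreducibility of $a$ must carry the full weight, and some genuinely different input is needed at your central step: Arveson's original rigidity argument, the finite-dimensional proof in the cited source, an argument that first produces a \emph{faithful} $\Phi$-invariant state (after which the Schwarz-inequality-plus-faithfulness trick of Theorem \ref{mult.dom thm} does apply), or the route the paper itself uses for the related Theorem \ref{pripheral eigenvectors}, via Kuperberg's limit idempotent and the Choi--Effros product. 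As written, your proposal has a genuine gap exactly where the theorem lives.
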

We state a simple lemma, the proof of which can be easily derived using the fact that the algebra generated by $\{1, a, a^*\}$ is $\Md$, where $a$ is an irreducible operator.
\begin{lemma}
Let $\Ep:\Md \rightarrow \Md$ be a ucp map and let $a\in \Md$ be irreducible such that $a \in \mathcal{M}_{\Ep}$.
Then $\Ep$ is an automorphism.
\end{lemma}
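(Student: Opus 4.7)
The plan is to use the hinted fact: the C$^*$-algebra generated by $\{1,a,a^*\}$ equals $\Md$ whenever $a$ is irreducible. Given that, almost everything else is automatic from Theorem \ref{paulsen's exc} and the simplicity of $\Md$.

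First I would argue that the multiplicative domain $\mathcal{M}_{\Ep}$ contains the whole algebra generated by $\{1, a, a^*\}$. Indeed, by Theorem \ref{paulsen's exc}, $\mathcal{M}_{\Ep}$ is a unital C$^*$-subalgebra of $\Md$, so since $a\in\mathcal{M}_{\Ep}$ we automatically also have $a^*\in\mathcal{M}_{\Ep}$, and thus the unital $*$-algebra generated by $a$ lies inside $\mathcal{M}_{\Ep}$. Invoking the irreducibility of $a$ (which, in finite dimensions, means that the commutant of $\{a,a^*\}$ contains no non-trivial projections and hence, by the spectral theorem, is exactly $\mathbb{C}1$, so by the double commutant theorem the algebra generated by $\{1,a,a^*\}$ equals $\Md$), this forces $\mathcal{M}_{\Ep}=\Md$.

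Next, I would observe that $\mathcal{M}_{\Ep}=\Md$ says exactly that $\Ep$ is multiplicative everywhere, i.e.\ $\Ep(xy)=\Ep(x)\Ep(y)$ for all $x,y\in\Md$. Combined with the fact that a positive map automatically respects the adjoint and $\Ep$ is unital, we conclude that $\Ep$ is a unital $*$-homomorphism from $\Md$ into $\Md$. The kernel of a $*$-homomorphism is a two-sided ideal; since $\Md$ is simple, $\ker\Ep$ is either $\{0\}$ or $\Md$, and the latter is ruled out because $\Ep(1)=1\neq 0$. Therefore $\Ep$ is injective, and because $\Md$ is finite-dimensional, injectivity upgrades to bijectivity via the rank--nullity theorem, giving that $\Ep$ is an automorphism.

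The only non-trivial step is the claim that the $*$-algebra generated by a single irreducible operator (together with the identity) is all of $\Md$, but the lemma explicitly allows us to use this as a known fact. Everything else reduces to a three-line manipulation using Theorem \ref{paulsen's exc} and simplicity of $\Md$, so I do not anticipate a genuine obstacle; the proof will essentially be the above bullet points written out in prose.
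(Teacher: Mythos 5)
Your proposal is correct and is exactly the argument the paper intends: the paper's proof is simply ``Easy,'' with the preceding remark pointing to the fact that the $*$-algebra generated by $\{1,a,a^*\}$ is all of $\Md$, so your write-up ($\mathcal{M}_{\Ep}$ is a unital C$^*$-subalgebra containing $a$, hence equals $\Md$; then $\Ep$ is a unital $*$-homomorphism with trivial kernel by simplicity of $\Md$, hence bijective in finite dimension) just fills in the same steps. No gaps.
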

\begin{proof}
Easy.
\end{proof}
Now we state and prove the main theorem of the section. Since in the discussion about multiplicative index, the eigen operators play an important role, the following theorem provides some more information extending the Theorem \ref{Boundary Theorem}.
 For a ucp map $\Ep:\Md\rightarrow \Md$, 
we recall that $\Ne=\{x\in \Md:\Ep(x)=\lambda x, |\lambda|=1\}$. Clearly the set $\Ne$ contains the fixed point set $\mathcal{F}_{\Ep}$. Then we have the following theorem.
\begin{theorem}{\label{pripheral eigenvectors}}
Let $\Ep:\Md\rightarrow\Md$ be a ucp map. Suppose an irreducible operator $a\in \Md$ is inside $\Ne$. Then
$\Ne$, the peripheral eigen operators of $\Ep$, span the entire $\Md$. 
\end{theorem}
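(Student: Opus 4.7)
The plan is to reduce the statement to Arveson's boundary theorem (Theorem \ref{Boundary Theorem}) by iterating $\Ep$ and passing to a limit point. Since $\Ep(a)=\lambda a$ with $|\lambda|=1$, we have $\Ep^n(a)=\lambda^n a$ for every $n$. First I would extract a subsequence $(n_k)$ of positive integers along which $\lambda^{n_k}\to 1$: if $\lambda$ is a root of unity this follows by taking an arithmetic progression, and otherwise $\{\lambda^n\}$ is dense in $\mathbb{T}$ by Weyl equidistribution, so a suitable subsequence exists.

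Next, I would use compactness of the set of ucp maps on $\Md$ (a closed bounded subset of the finite-dimensional space of linear maps on $\Md$) to pass to a further subsequence, still denoted $(n_k)$, so that $\Ep^{n_k}\to\Psi$ for some ucp map $\Psi:\Md\to\Md$. Taking limits in $\Ep^{n_k}(a)=\lambda^{n_k}a$ yields $\Psi(a)=a$, so $a\in\mathcal{F}_{\Psi}$. Because $a$ is irreducible, Theorem \ref{Boundary Theorem} forces $\Psi=\operatorname{id}_{\Md}$.

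The key step is then to convert $\Ep^{n_k}\to\operatorname{id}$ into structural information about $\Ep$. Each iterate $\Ep^n$ is ucp, hence $\|\Ep^n\|_{cb}=1$; consequently the sequence $\{\Ep^n\}$ is uniformly bounded. In the Jordan decomposition of $\Ep$, this uniform boundedness rules out any eigenvalue of modulus $>1$ as well as any non-trivial Jordan block attached to a peripheral eigenvalue (such a block would produce polynomial-in-$n$ growth). Moreover, on the invariant subspace corresponding to eigenvalues $\mu$ with $|\mu|<1$, the iterates $\Ep^n$ tend to zero; but $\Ep^{n_k}$ must converge to the identity on this subspace, which can only happen when the subspace is $\{0\}$. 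Therefore $\Ep$ is diagonalisable and $\mathrm{Spec}(\Ep)\subseteq\mathbb{T}$. A basis of eigenvectors of $\Ep$ then spans $\Md$, and by construction every such eigenvector lies in $\Ne$, giving the claim.

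The main obstacle is the last paragraph: extracting the diagonalisability of $\Ep$ and the absence of contractive spectral components from the one-sided convergence $\Ep^{n_k}\to\operatorname{id}$. This step depends essentially on finite dimensionality together with the uniform bound $\|\Ep^n\|_{cb}=1$ coming from the ucp hypothesis; without either, the Jordan-block ruling-out argument and the vanishing on the non-peripheral subspace would fail. A minor technical point is the simultaneous extraction of a subsequence satisfying both $\lambda^{n_k}\to 1$ and $\Ep^{n_k}\to\Psi$, which is handled by the successive compactness arguments described above.
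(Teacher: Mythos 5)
Your proof is correct, but it takes a genuinely different route from the paper's. The paper invokes Kuperberg's result to extract a subsequence $\Ep^{n_k}$ converging to an idempotent ucp map $P$ whose range is $\mathrm{span}\,\Ne$, endows that range with the Choi--Effros product $\odot$, uses irreducibility of $a$ (so that $\{1,a,a^*\}$ generates $\Md$ algebraically) to show $P$ is an algebra homomorphism onto $(\mathrm{span}\,\Ne,\odot)$, and then uses simplicity of $\Md$ to force $P$ injective, giving $\mathrm{span}\,\Ne=\Md$; notably it does not use Theorem \ref{Boundary Theorem} but rather reruns the Choi--Effros style argument behind it. You instead reduce directly to the boundary theorem: choosing $n_k$ with $\lambda^{n_k}\to 1$ and, by compactness of the (closed, bounded, finite-dimensional) set of ucp maps, with $\Ep^{n_k}\to\Psi$, you get $\Psi(a)=a$ with $a$ irreducible, hence $\Psi=\operatorname{id}$; then the power-boundedness of $\{\Ep^n\}$ rules out eigenvalues outside the closed disc and nontrivial peripheral Jordan blocks, while the generalized eigenspace for $|\mu|<1$ must vanish because $\Ep^n\to 0$ there yet $\Ep^{n_k}\to\operatorname{id}$, so $\Ep$ is diagonalisable with unimodular spectrum and a basis of peripheral eigenvectors spans $\Md$. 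All of these steps check out (limits of ucp maps are ucp, $\|\Ep^n\|_{cb}=1$, and the two successive subsequence extractions are legitimate). What your route buys: it avoids Kuperberg's theorem and the Choi--Effros machinery entirely, using only elementary finite-dimensional spectral theory once the boundary theorem is taken as a black box, and it makes explicit the equivalent statement that $\Ep$ is diagonalisable with spectrum in $\mathbb{T}$. What the paper's route buys: it does not rely on the boundary theorem as an external input and exhibits the extra structural fact that the limiting idempotent $P$ is a homomorphism onto the Choi--Effros algebra $\mathrm{span}\,\Ne$.
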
 
\begin{proof}

For a ucp $\Ep$, Kuperberg in \cite{kup} proved that there is a sequence of integers $n_{1}<n_{2}< \cdots$ such that $\Ep^{n_{k}}$ converges to a unique 
idempotent and completely positive map $P:\Md \rightarrow \Md$ and $Ran(P)$=span-$\Ne$.
Now we proceed similarly as in \cite{farenick2} and  using Choi-Effrose product on span-$\Ne$ we can make 
it a C$^*$-algebra. Let us define $a\odot b=P(ab)$ where $a,b\in$ span-$\Ne$ and it is well known that with this multiplication span-$\Ne$=$Ran(P)$ is a C$^*$-algebra. Since $P$ is a conditional expectation, we have $P(yz)=P(yP(z))$ for $y\in$ span-$\Ne, \  z\in \Md $. 
Now, we see for any 
$a \in$ span-$\Ne$ , $P(a)=a$ and\\
 $P(xy)=P(xP(y))=x\odot P(y)=P(x)\odot P(y)$, $x\in$ span- $\Ne, \ y\in \Md$.\\
 Similarly if $x_{1}, x_{2}\in$ span-$\Ne, \ y\in \Md$ we get\\
 $P((x_{1}x_{2})y)=P(x_{1}P(x_{2}y))=x_{1}\odot P(x_{2}y)=(P(x_{1})\odot P(x_{2}))\odot P(y)$.\\
 Hence by induction, if $\gamma$ is any word with $2n$ non commutative variables and if $x_{1}, \cdots,x_{n}\in$ span-$\Ne$ and $y \in \Md$ we have 
 \begin{align*}
 P(\gamma(x_{1}, \cdots, &x_{n},x_{1}^*, \cdots, x_{n}^*)y)\\
 &=\gamma_{\odot}(P(x_{1}), \cdots, P(x_{n}),P(x_{1}^*), \cdots, P(x_{n}^*))\odot P(y).
\end{align*}
Where $\gamma_{\odot}(P(x_{1}), \cdots, P(x_{n}),P(x_{1}^*), \cdots, P(x_{n}^*))$ denotes the $\odot$ product of the letters of the word $\gamma$.\\
Now since $\Ne$ contains an irreducible operator
$a$ and since $\Ne$ is $*$ closed we have $a^*\in \Ne$ . By irreducibility of $a$, the set $\{1,a,a^*\}$ algebraically generates the entire $\Md$. So for any $x\in \Md$ is of the form $\gamma(x_{1}, \cdots, x_{n},x_{1}^*, \cdots, x_{n}^*)$ with n varies over $\mathbb{N}$ and $x_{i}\in \Ne$ we have
\begin{center}
$P(xz)=P(x)\odot P(z)$ for all $x, z\in \Md$.
\end{center} 
So $P:\Md\rightarrow$ span-$(\Ne,\odot)$ is a homomorphism 
between the algebra $\Md$ and span-$(\Ne,\odot)$.\\
Since $\Md$ is simple as an algebra and $\Ne$ is not trivial, we get $Ker(P)$ is trivial, and hence $P$
is an isomorphism and therefore the set span-$\Ne$ is exactly the set $\Md$.  
\end{proof}
Note that in \cite{arveson2},\cite{arveson3} Arveson studied asymptotic stability of ucp maps on C$^*$-algebras where the maps in discussion satisfy certain properties. Isometric invertible operators acting on a general Banach space $X$  whose peripheral eigenvectors span the domain $X$ seem to have played a significant role in studying the asymptotic stability of ucp maps. Such maps were named \emph{diagonalisable} in 
\cite{arveson2}.  Theorem \ref{pripheral eigenvectors} exhibits similar behaviour of a ucp map if it contains an irreducible operator as a peripheral eigenvector. 
Since $\mathcal{F}_{\Ep}\subseteq\mathcal{N}_{\Ep}$ for a unital linear map, Theorem \ref{pripheral eigenvectors} also suggests a different prospective of the \emph{Boundary Theorem} of Arveson (Theorem \ref{Boundary Theorem}) on finite dimensional spaces from a more spectral theoretic viewpoint.

\section{Application}\label{sec:application}
In this section we note down some applications of the techniques and results developed throughout this paper. Quantum error correction is one of the areas where we find appropriate set up for exhibiting such applications. See \cite{knill-lafl},\cite{c-j-k} for more detailed discussion in the theory of error correction.

The standard model of error correction can be 
described by a triple $(\Ep,\mathcal{R},\mathcal{C})$. Here $\Ep:\Bh\rightarrow\Bh$ is a quantum channel, where $\Hi$ is a finite dimensional Hilbert space. $\mathcal{C}$ is a subspace of $\Hi$ known as the \emph{code} and $\mathcal{R}$ is another  quantum channel on $\Bh$ known as \emph{recovery operation}. Denote $P_{\mathcal{C}}$ 
by the projection onto $\mathcal{C}$. The triple should satisfy the condition 
\[\mathcal{R}(\mathcal{E}(\rho))=\rho,
 \ \text{where} \ P_{\mathcal{C}}\rho P_{\mathcal{C}}=\rho.\]
 With this set up, the ``Noiseless Subsystem" (NS) protocol (see \cite{NS1},\cite{knill-lafl}) seeks subsystems $\mathcal{H}^{B}$ (with dim $\mathcal{H}^B> 1$) of the full system $\Hi$ such that  $\mathcal{H}=(\mathcal{H}^A\bigotimes\mathcal{H}^B)
\bigoplus \mathcal{K}$, where $\mathcal{K}$ is another subspace of $\Hi$ such that $\forall \rho^A, \ \forall \rho^B$ there exists a $\gamma^A$ satisfying 

\[
\mathcal{E}(\rho^A\otimes \rho^B)=\gamma^A\otimes \rho^B.\]
Here we write $\rho^A$ (resp. $\rho^B$) for operators on
$\mathcal{B}(\mathcal{H}^A)$ (resp. $\mathcal{B}(\mathcal{H}^B)$). Noiseless subsystems of any unital channel $\Ep$ are obtained precisely from the fixed point set $\mathcal{F}_{\Ep}$ of $\Ep$.  

A subsystem $\mathcal{B}$ is called \emph{correctable} for $\Ep$ via a recovery operation $\mathcal{R}$ if it is a noiseless subsystem for the quantum operation $\mathcal{R}\circ\Ep$. An important class of subsystems is the \emph{unitarily correctable subsystem} (UCS) which is also known as \emph{unitarily correctable codes} where the recovery operation $\mathcal{R}$ can be chosen to be the unitary channel $x\mapsto uxu^*$, for a unitary operator $u$ and for all $x\in \Bh$. The following theorem relates the UCS of a unital channel $\Ep$ to the noiseless subsystems of $\Ep^*\circ\Ep$.
\begin{theorem}{\cite{kribs-spekkens}}\label{ucc}
Let $\Ep$ be a unital quantum channel. Then the following statements are equivalent:
\begin{enumerate}
\item $\mathcal{B}$ is a unitarily correctable subsystem for $\Ep$.
\item $\mathcal{B}$ is a noiseless subsystem for $\Ep^*\circ\Ep$.
\end{enumerate}
\end{theorem}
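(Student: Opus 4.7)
The plan is to exploit one clean identity and then bounce everything off the multiplicative domain. The key observation is that for any unitary channel $\mathcal{R}(x)=uxu^*$, the Hilbert-Schmidt adjoint is $\mathcal{R}^*(x)=u^*xu=\mathcal{R}^{-1}(x)$, and therefore
\[(\mathcal{R}\circ\Ep)^*\circ(\mathcal{R}\circ\Ep)=\Ep^*\circ\mathcal{R}^{-1}\circ\mathcal{R}\circ\Ep=\Ep^*\circ\Ep.\]
Combined with Theorem \ref{mult.dom thm}, which says $\Me=\mathcal{F}_{\Ep^*\circ\Ep}$, this places both statements of the theorem on common ground: both will ultimately be statements about the multiplicative domain $\Me$, which in turn is the fixed-point algebra of $\Ep^*\circ\Ep$.

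For $(1)\Rightarrow(2)$, I would set $\Phi:=\mathcal{R}\circ\Ep$ and use that $\mathcal{B}$ is a noiseless subsystem of $\Phi$ by hypothesis. The defining relation $\Phi(\rho^A\otimes\rho^B)=\gamma^A\otimes\rho^B$ forces the entire operator subalgebra $I^A\otimes\mathcal{B}(\mathcal{H}^B)$ (viewed as embedded in $\Md$) to lie inside $\mathcal{F}_\Phi$. Since $\Phi$ is unital, the Schwarz-inequality argument used in the corollary to Lemma \ref{primitive} shows that every fixed point lies in the multiplicative domain, so $I^A\otimes\mathcal{B}(\mathcal{H}^B)\subseteq\mathcal{M}_\Phi=\mathcal{F}_{\Phi^*\circ\Phi}$. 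Now invoking the boxed identity above gives $I^A\otimes\mathcal{B}(\mathcal{H}^B)\subseteq\mathcal{F}_{\Ep^*\circ\Ep}$. Finally, since $\Ep^*\circ\Ep$ acts as a $*$-homomorphism on its multiplicative domain and fixes the $B$-subalgebra pointwise, one checks that tensor elements $\rho^A\otimes\rho^B$ are sent to $\widetilde\gamma^A\otimes\rho^B$, i.e., $\mathcal{B}$ is noiseless for $\Ep^*\circ\Ep$.

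For $(2)\Rightarrow(1)$, the hypothesis gives $I^A\otimes\mathcal{B}(\mathcal{H}^B)\subseteq\mathcal{F}_{\Ep^*\circ\Ep}=\Me$. By Theorem \ref{paulsen's exc}, $\Ep$ restricted to this subalgebra is a unital $*$-homomorphism onto its image $\Ep\bigl(I^A\otimes\mathcal{B}(\mathcal{H}^B)\bigr)$, another $*$-subalgebra of $\Md$. The main obstacle is to upgrade this abstract isomorphism to a concrete inner implementation: we must produce a single unitary $u\in\Md$ so that $\mathcal{R}(x)=uxu^*$ satisfies $\mathcal{R}\circ\Ep(y)=y$ for every $y\in I^A\otimes\mathcal{B}(\mathcal{H}^B)$. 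The naive candidate $\Ep^*$ (which inverts $\Ep$ on $\Me$ by Theorem \ref{mult.dom thm}) is generally not a unitary channel, so this route fails. Instead, I would appeal to the classical Wedderburn structure theory for $*$-subalgebras of $\Md$: two unital $*$-isomorphic $C^*$-subalgebras of $\Md$ have identical block-multiplicity data and are therefore unitarily conjugate inside $\Md$. Applying this to $\Ep\bigl(I^A\otimes\mathcal{B}(\mathcal{H}^B)\bigr)$ and the original $I^A\otimes\mathcal{B}(\mathcal{H}^B)$, composed with the $*$-iso given by $\Ep$, yields the desired $u$. The resulting recovery $\mathcal{R}$ makes $\mathcal{R}\circ\Ep$ fix the $B$-subsystem pointwise, so $\mathcal{B}$ is unitarily correctable.

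The conceptual heart of the argument is the identity $(\mathcal{R}\circ\Ep)^*\circ(\mathcal{R}\circ\Ep)=\Ep^*\circ\Ep$, while the technical obstacle is strictly on the $(2)\Rightarrow(1)$ side: passing from the abstract $*$-isomorphism $\Ep|_{\Me}:\Me\to\Ep(\Me)$ to an inner automorphism of the ambient $\Md$. That step is purely structural and uses only that all unital $*$-subalgebras of $\Md$ with a given Wedderburn signature are mutually unitarily equivalent in $\Md$.
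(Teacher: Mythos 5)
The paper does not actually prove Theorem \ref{ucc}; it quotes it from \cite{kribs-spekkens}, so there is no internal proof to compare yours against, and your proposal has to stand on its own. Its good parts do: the identity $(\mathcal{R}\circ\Ep)^*\circ(\mathcal{R}\circ\Ep)=\Ep^*\circ\Ep$ for a unitary recovery $\mathcal{R}$ is correct, and the $(1)\Rightarrow(2)$ direction goes through essentially as you sketch it (the assertion that the noiseless-subsystem relation forces $I^A\otimes\mathcal{B}(\mathcal{H}^B)\subseteq\mathcal{F}_{\mathcal{R}\circ\Ep}$ deserves a line -- positivity of $(\mathcal{R}\circ\Ep)$ applied to the complement of $\mathcal{H}^A\otimes\mathcal{H}^B$ gives $\Lambda(I^A)\leq I^A$, and equality of traces forces $\Lambda(I^A)=I^A$ -- and the final ``one checks'' step is indeed checkable from the bimodule property of the multiplicative domain).

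The genuine gap is the structural lemma you invoke in $(2)\Rightarrow(1)$: it is false that two $*$-isomorphic unital $C^*$-subalgebras of $\Md$ automatically ``have identical block-multiplicity data and are therefore unitarily conjugate inside $\Md$.'' Abstract $*$-isomorphism fixes the block sizes but not the multiplicities of the embedding: in $\mathcal{M}_3$ the subalgebras $\{a\oplus a\oplus b\}$ and $\{a\oplus b\oplus b\}$ are both isomorphic to $\mathbb{C}\oplus\mathbb{C}$ but are not unitarily conjugate, since conjugation preserves ranks of minimal projections. Worse, even when the two subalgebras are conjugate, the particular isomorphism you need implemented (the one given by $\Ep$) is unitarily implementable only if it preserves the rank of every projection; the automorphism of $\{a\oplus a\oplus b\}$ that swaps the two blocks is not inner in $\mathcal{M}_3$. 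So the ``purely structural'' route fails as stated, and it must fail, because the statement is false for merely unital completely positive maps.

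The missing ingredient, which your argument never uses in this direction, is trace preservation of $\Ep$. Since $\Ep$ is a channel and restricts to a $*$-homomorphism on $\Me\supseteq I^A\otimes\mathcal{B}(\mathcal{H}^B)$, every projection $p$ in this algebra is mapped to a projection $\Ep(p)$ with $\tr\Ep(p)=\tr p$, hence of the same rank (the paper uses exactly this observation in its proof that $\Ep^*\circ\Ep$ is irreducible iff no projection $p<1$ satisfies $\Ep(p)=upu^*$). Rank preservation is precisely the criterion for a $*$-isomorphism of finite-dimensional $*$-subalgebras of $\Md$ to extend to conjugation by a unitary $u\in\Md$: match minimal projections of equal rank block by block and note that the complements of the two units also have equal rank, so the partial isometries can be completed to a unitary. (Note also that $I^A\otimes\mathcal{B}(\mathcal{H}^B)$, embedded in $\Md$, is typically not unital -- its unit is the projection onto $\mathcal{H}^A\otimes\mathcal{H}^B$ -- so this last completion step is needed.) With such a $u$ and $\mathcal{R}(x)=uxu^*$, the channel $\mathcal{R}\circ\Ep$ fixes $I^A\otimes\mathcal{B}(\mathcal{H}^B)$ pointwise, and the same multiplicative-domain bimodule argument as in your first direction upgrades this to the noiseless-subsystem form, completing $(2)\Rightarrow(1)$.
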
  
In \cite{c-j-k}, the UCC algebra for a unital channel is defined to be the $\mathcal{F}_{\Ep^*\circ\Ep}$ and in Theorem 11 it was shown that this is precisely the multiplicative domain.

With these background we are ready to exhibit the application of the techniques developed in this paper. We will prove that for a unital channel $\Ep$, even if we require the recovery operation $\mathcal{R}$ to be a unital channel (not necessarily a unitary channel), we still get the multiplicative domain of $\Ep$ to be the exact correctable code. This means we don't get any extra correctable codes  other than those arising from the multiplicative domain as in Theorem \ref{ucc} even if we allow our recovery map to be any unital channel. We formulate the following proposition:
\begin{proposition}\label{appl. 1}
Given a unital channel $\Ep:\Md\rightarrow\Md$, define the following set
\[\mathfrak{C}_{\Ep}=\{\mathcal{F}_{\mathcal{R}\circ\Ep}: \text{for unital channels} \ \mathcal{R} \ on \ \Md \}.\]
Then we have   
\[ \Me= \mathfrak{C}_{\Ep}.\] 
\end{proposition}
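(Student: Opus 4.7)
The plan is to interpret the set $\mathfrak{C}_{\Ep}$ as the union $\bigcup_{\mathcal{R}}\mathcal{F}_{\mathcal{R}\circ\Ep}$ taken over all unital channels $\mathcal{R}:\Md\rightarrow\Md$, and then to verify both inclusions. The key observation is that if $\Ep$ is a unital quantum channel then its Hilbert--Schmidt adjoint $\Ep^*$ is also a unital quantum channel, a fact already used in the proof of Corollary \ref{commutant}. This furnishes an explicit recovery map.

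For the inclusion $\Me\subseteq\mathfrak{C}_{\Ep}$, I would take $\mathcal{R}=\Ep^*$. Then $\mathcal{R}\circ\Ep=\Ep^*\circ\Ep$, so by Theorem \ref{mult.dom thm},
\[\mathcal{F}_{\mathcal{R}\circ\Ep}=\mathcal{F}_{\Ep^*\circ\Ep}=\Me,\]
which shows that $\Me$ itself appears in the collection $\mathfrak{C}_{\Ep}$.

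The substantive content is the reverse inclusion: for every unital channel $\mathcal{R}$, $\mathcal{F}_{\mathcal{R}\circ\Ep}\subseteq\Me$. Let $a\in\mathcal{F}_{\mathcal{R}\circ\Ep}$, so $\mathcal{R}(\Ep(a))=a$ and, by positivity, also $\mathcal{R}(\Ep(a^*))=a^*$. Chaining the Schwarz inequality for $\Ep$ and then for $\mathcal{R}$ gives
\[\mathcal{R}\circ\Ep(aa^*)\ \geq\ \mathcal{R}\bigl(\Ep(a)\Ep(a^*)\bigr)\ \geq\ \mathcal{R}(\Ep(a))\,\mathcal{R}(\Ep(a^*))\ =\ aa^*.\]
Taking the trace and using that $\mathcal{R}\circ\Ep$ is trace preserving forces both extremes to coincide, and hence every inequality above is an equality. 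In particular $\mathcal{R}\bigl(\Ep(aa^*)-\Ep(a)\Ep(a^*)\bigr)=0$. Since the operator inside is a positive element of $\Md$ (by Schwarz for $\Ep$) and $\mathcal{R}$ is trace preserving, faithfulness of the trace yields $\Ep(aa^*)=\Ep(a)\Ep(a^*)$. Repeating the same chain with $a$ replaced by $a^*$ (which is also fixed by $\mathcal{R}\circ\Ep$) gives $\Ep(a^*a)=\Ep(a^*)\Ep(a)$, so by Choi's characterisation (the set $\mathcal{S}$ recalled at the beginning of Section \ref{Fixed Point and Multiplicative Domain section}) we conclude $a\in\Me$.

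The main obstacle I anticipate is simply the double use of the Schwarz inequality together with two different trace-preservation arguments: one needs $\mathcal{R}\circ\Ep$ to be trace preserving (which is automatic once $\mathcal{R}$ and $\Ep$ are) to collapse the chain, and then separately one needs $\mathcal{R}$ itself to be trace preserving to deduce the equality $\Ep(aa^*)=\Ep(a)\Ep(a^*)$ from $\mathcal{R}\bigl(\Ep(aa^*)-\Ep(a)\Ep(a^*)\bigr)=0$. Both are built into the standing hypothesis that $\mathcal{R}$ is a channel, so the argument is clean provided one is careful about which map's trace preservation is being invoked at each step.
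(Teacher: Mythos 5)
Your proposal is correct and follows essentially the same route as the paper: take $\mathcal{R}=\Ep^*$ together with Theorem \ref{mult.dom thm} for the inclusion $\Me\subseteq\mathfrak{C}_{\Ep}$, and for the converse chain the Schwarz inequalities for $\Ep$ and $\mathcal{R}$, use trace preservation to force equality, and conclude by faithfulness of the trace. The only cosmetic differences are that you chain the inequalities at the operator level before tracing (the paper traces immediately) and that you explicitly verify the condition for $a^*a$ as well, which is a slightly more complete appeal to Choi's characterisation.
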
   
\begin{proof}
First we see that $\Me\subseteq\mathfrak{C}_{\Ep}$ by the aid of the Theorem \ref{mult.dom thm} that asserts $\Me=\mathcal{F}_{\Ep^*\circ\Ep}$. 
Hence $\Ep^*$ is one of the choices of $\mathcal{R}$.

Conversely let $a\in \Md$ be such that $a\in \mathcal{F}_{\mathcal{R}\circ\Ep}$ for a unital channel $\mathcal{R}$. Since $a^*\in \mathcal{F}_{\mathcal{R}\circ\Ep}$ as well, we have 
\begin{align*}
\tr(aa^*)&=\tr(\mathcal{R}\circ\Ep(aa^*))\\
&\geq \tr(\mathcal{R}(\Ep(a)\Ep(a^*)))\\
&\geq \tr(\mathcal{R}\circ\Ep(a)\mathcal{R}\circ\Ep(a^*))\\
&=\tr(aa^*).
\end{align*}
Where we have use the Schwarz inequality for the unital maps $\Ep$ and $\mathcal{R}$. So the inequalities above are all equalities. Subsequently, by the trace preservation property of $\mathcal{R}$, we get $\tr(\Ep(aa^*))=\tr(\Ep(a)\Ep(a^*))$ and hence by the faithfulness of trace and the Schwarz inequality for $\Ep$ we get $\Ep(aa^*)=\Ep(a)\Ep(a^*)$, which shows $a\in \Me$. Hence we have the desired equality of sets.
\end{proof}
Since the unitary channels are a particular case for  arbitrary unital channels, it can be noted that Theorem 
\ref{ucc} appears as a special case of the Proposition 
\ref{appl. 1}.
It is now evident from Proposition \ref{appl. 1} that  even if we collect the \emph{unitally correctable codes} of a unital channel via a unital recovery operation, we don't get anything other than the \emph{unitarily correctable codes}.

A UCS code $\mathcal{B}$ of $\Phi$ is called \emph{unitarily noiseless subsystem} (UNS) for $\Phi$ (see \cite{Magesan}, \cite{UNS} ) if $\mathcal{B}$ is UCS code for $\Phi^n$ for every $n\geq 1$. Now by the aid of the Theorem \ref{ucc}, the UCS codes of $\Phi^n$ are exactly the noiseless subsystems for $\Phi^{*n}\circ\Phi^n$ for $n\geq 1$, which can be obtained precisely from the algebra $\mathcal{F}_{\Phi^{*n}\circ\Phi^n}$. Since for each $n\in \mathbb{N}$, $\Phi^n$ is a unital channel, Theorem \ref{mult.dom thm} asserts that
\[\mathcal{M}_{\Phi^n}=\mathcal{F}_{\Phi^{*n}\circ\
\Phi^n}.\]
So if $\mathcal{B}$ is a UNS code, then it can be obtained from the set $\displaystyle\bigcap_{n\geq 1}\mathcal{F}_{\Phi^{*n}\circ\Phi^n}$ which is essentially the stabilising algebra $\mathcal{M}_{\Phi^{\infty}}$. Hence the UNS codes for a unital channel $\Phi$ are exactly those that arise from the stabilising subalgebra $\mathcal{M}_{\Phi^{\infty}}$. 

In this connection the following proposition relates to the notion of multiplicative index of a unital channel and the UNS codes.
\begin{proposition}\label{UNS}
Let $\Ep:\Md\rightarrow\Md$ be a unital channel with multiplicative index 1, that is $\kappa(\Ep)=1$. Then every UCS code for $\Ep$ is a UNS code. Moreover, if $\kappa(\Ep)>1$, then there exist UCS codes which are not UNS.
\end{proposition}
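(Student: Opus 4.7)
The plan is to leverage the dictionary the author has just built up in the paragraph preceding the proposition: for each $n\ge 1$, Theorem~\ref{ucc} applied to the unital channel $\Ep^n$ combined with Theorem~\ref{mult.dom thm} identifies the UCS codes of $\Ep^n$ with the subsystems encoded by the Wedderburn decomposition of $\mathcal{M}_{\Ep^n}=\mathcal{F}_{\Ep^{*n}\circ\Ep^n}$. Consequently the UCS codes of $\Ep$ correspond to $\Me$ and the UNS codes of $\Ep$ correspond to $\Mi=\bigcap_{n\ge 1}\mathcal{M}_{\Ep^n}$, and the proposition becomes a direct reading of the definition of the multiplicative index.

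Suppose first that $\kappa(\Ep)=1$. The remark introducing $\kappa$ then gives $\mathcal{M}_{\Ep^n}=\Me$ for every $n\ge 1$, so in particular $\Me=\Mi$. Any UCS code $\mathcal{B}$ of $\Ep$ therefore corresponds to subsystem data sitting inside $\Me=\mathcal{M}_{\Ep^n}=\mathcal{F}_{\Ep^{*n}\circ\Ep^n}$ for every $n$; applying Theorem~\ref{ucc} once to each unital channel $\Ep^n$ shows that $\mathcal{B}$ is a UCS code for every $\Ep^n$, which is exactly the definition of a UNS code.

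Now suppose $\kappa(\Ep)>1$. By minimality of $\kappa$ we have $\Me\supsetneq\mathcal{M}_{\Ep^{\kappa(\Ep)}}=\Mi$ strictly as unital $*$-subalgebras of $\Md$. Passing to the Wedderburn decomposition $\Me\cong\bigoplus_i \mathcal{M}_{n_i}$ with the induced ampliation $\Hi\cong\bigoplus_i(\mathcal{H}^{A_i}\otimes\mathcal{H}^{B_i})\oplus\mathcal{K}_0$, the proper inclusion $\Mi\subsetneq\Me$ is reflected in the induced $B$-subsystem data. Since a unital $*$-subalgebra of $\Md$ is uniquely recovered from its ampliation structure, the subsystems coming from $\Me$ cannot match those coming from $\Mi$; because every UNS code is a UCS code by definition, this forces the existence of at least one UCS subsystem that fails to be UNS, namely any $B$-factor present in the Wedderburn structure of $\Me$ but not in that of $\Mi$. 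Concretely such a subsystem is UCS for $\Ep$ but fails to be UCS for $\Ep^{\kappa(\Ep)}$, as $\mathcal{M}_{\Ep^{\kappa(\Ep)}}=\Mi$ does not see it.

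The step I expect to be the main obstacle is this final extraction: translating the strict subalgebra inclusion $\Mi\subsetneq\Me$ into an explicit witnessing $B$-subsystem. The key point is the uniqueness of the ampliation/Wedderburn decomposition attached to a unital $*$-subalgebra of $\Md$, which forces strictly distinct subalgebras to yield strictly distinct subsystem data; everything else in the argument is bookkeeping with the definitions together with the identification $\Me=\mathcal{F}_{\Ep^*\circ\Ep}$ from Theorem~\ref{mult.dom thm} and the characterization of $\Mi$ as the stable algebra of the chain $\mathcal{M}_{\Ep}\supseteq\mathcal{M}_{\Ep^2}\supseteq\cdots$.
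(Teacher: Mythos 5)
Your proof is correct and follows essentially the same route as the paper: for $\kappa(\Ep)=1$ you use $\Me=\mathcal{M}_{\Ep^n}=\mathcal{F}_{\Ep^{*n}\circ\Ep^n}$ for every $n$ together with Theorem \ref{ucc}, and for $\kappa(\Ep)>1$ you extract from the strict inclusion $\Mi\subsetneq\Me$ a UCS code for $\Ep$ that fails to be UCS for a higher power ($\Ep^{\kappa(\Ep)}$ in your write-up, $\Ep^{2}$ in the paper's). The extra Wedderburn/ampliation bookkeeping you supply only makes explicit a step the paper leaves informal; no substantive difference.
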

\begin{proof}
As $\kappa(\Ep)=1$, $\Mi=\mathcal{M}_{\Ep^n}=\Me$, for every $n\geq 1$. Hence the stabilising subalgebra 
$\Mi$ is obtained after applying the channel only once.  
Clearly if $\mathcal{B}$ is UCS then $\mathcal{B}$ is obtained from the set $\Me=\Mi$ which is same as $\mathcal{M}_{\Ep^n}$, for all $n\geq 1$ and the first assertion follows.

For the other assertion, note that if $\kappa(\Ep)>1$, then we have proper containment  $\Mi\subset\Me$. Hence there is a UCS code  $\mathcal{B}$ corresponding to $\Me$ such that it is not UCS for $\Ep^2$, and hence not UCS for $\Ep^n$ if $n\geq 2$. So $\mathcal{B}$ ca not be UNS.
\end{proof}
Note that by the Corollary \ref{diagonalisable}, all the channels that commute with the adjoints have multiplicative index 1 and hence have UCS codes as UNS codes.  
As a corollary to the above result we capture the following well known result for Pauli Channels which constitute an important class of quantum channels . 
\begin{corollary}{(\cite{Magesan1}, Section 3.2.1)}
A UCS code for Pauli Channels or generalised Pauli channels is also a UNS code.
\end{corollary}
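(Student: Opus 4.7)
The plan is to chain together three facts already established in the paper. First, the text immediately preceding this corollary recalls that the Pauli channels and their higher-dimensional generalisations (random mixtures of the discrete Weyl--Heisenberg unitaries) are diagonalisable in the sense that $\Ep^*\circ\Ep = \Ep\circ\Ep^*$; this is the starting observation. Second, Corollary \ref{diagonalisable} tells us that any unital channel satisfying $\Ep^*\circ\Ep=\Ep\circ\Ep^*$ has $\Mi=\Me$, i.e.\ multiplicative index $\kappa(\Ep)=1$. Third, Proposition \ref{UNS} asserts that a unital channel with $\kappa(\Ep)=1$ has the property that every UCS code is automatically a UNS code.

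So the proof I would write is essentially a one-line composition: let $\Ep$ be a Pauli or generalised Pauli channel acting on $\Md$; by the Weyl--Heisenberg decomposition $\Ep$ is diagonalisable, hence $\Ep^*\circ\Ep=\Ep\circ\Ep^*$; by Corollary \ref{diagonalisable} we have $\kappa(\Ep)=1$; and by Proposition \ref{UNS} every UCS code for $\Ep$ is a UNS code. No additional argument is required.

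There is no real obstacle here, because all the heavy lifting was done in Corollary \ref{diagonalisable} (which shows that commutativity of $\Ep$ with its adjoint forces the stabilising algebra to coincide with the multiplicative domain at the first step) and in Proposition \ref{UNS} (which reinterprets the stabilising algebra $\Mi$ as the collection of UNS codes via the identification $\mathcal{M}_{\Ep^n}=\mathcal{F}_{\Ep^{*n}\circ\Ep^n}$ from Theorem \ref{mult.dom thm}). The only minor subtlety worth flagging in the writeup is the justification that generalised Pauli channels are diagonalisable; this is immediate from the fact that they are convex combinations of unitary conjugations by a commuting family (up to phases) of Weyl operators, and one can cite \cite{ergodic} for the explicit verification that $\Ep^*\circ\Ep=\Ep\circ\Ep^*$ in this case, exactly as was already done in the paragraph defining Pauli channels above.
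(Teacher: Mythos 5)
Your proposal is correct and follows exactly the paper's own argument: cite the fact from \cite{ergodic} that Pauli and generalised Pauli channels satisfy $\Ep^*\circ\Ep=\Ep\circ\Ep^*$, apply Corollary \ref{diagonalisable} to get $\Me=\Mi$ and hence $\kappa(\Ep)=1$, and then invoke Proposition \ref{UNS}. No differences worth noting.
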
 
\begin{proof}
If $\Ep$ is a Pauli channel or a generalised Pauli channel, then $\Ep^*\circ\Ep=\Ep\circ\Ep^*$ (\cite{ergodic}, see discussion after definition 6). Hence by the Corollary \ref{diagonalisable}, we get 
$\Me=\Mi$ and hence $\kappa(\Ep)=1$ and Proposition \ref{UNS} applies.    
\end{proof}
\section{Summary and Discussion}\label{sec:summary}
We have put forward a structure theorem for a unital quantum channel that depicts the asymptotic automorphic behaviour of the channel on a stabilising subalgebra. This subalgebra is generated by the peripheral eigen operators of the channel. Based on the finite dimensionality of the system, the notion of multiplicative index has been introduced which measures the number of times the channel needs to be composed with itself for the multiplicative domain to stabilise. Some applications of the results obtained in the paper have been shown in quantum error correction.  

%%%%%%%%%%%%%%%%%%%%%%%%%%%
It is interesting to note that any unital channel $\Phi:\Md\rightarrow\Md$ can be approximated by a channel $\Ep$ with $\kappa(\Ep)=1$. Indeed the set  
 \[\mathfrak{S}_{1}=\{\Ep:\Md\rightarrow\Md: \kappa(\Ep)=1\}\] is dense inside the convex set of unital channels. For given a channel $\Phi$, proceeding similarly as in the proof of the Theorem \ref{density of trivial mult. dom}, for each $n\in \mathbb{N}$, there exists a channel  
 $\Ep$ such that $\|\Phi-\Ep\|_{cb}<\frac{1}{n}$ and 
 $\Me=\mathbb{C}1$. Each of these $\Ep$'s has  $\kappa(\Ep)=1$ as $\Mi=\Me=\mathbb{C}1$.
   
Even though $\mathfrak{S}_{1}$ is dense, it follows that $\mathfrak{S}_{1}$ is not relatively open set inside the set of unital channels. To see this, define a channel $\Phi:\mathcal{M}_{2}\rightarrow\mathcal{M}_{2}$ as follows 
\[\Phi(x)=pxp+qxq, \ \forall x\in \mathcal{M}_{2},\]
where $p=\left[\begin{array}{cc}
1 & 0\\
0 & 0
\end{array}\right], q=\left[\begin{array}{cc}
0 & 0\\
0 & 1
\end{array}\right]$.

Then it is easily verified that 
\[\mathcal{M}_{\Phi}=\Bigg\{\left[\begin{array}{cc}
a & 0\\
0 & b
\end{array}\right]: a,b\in \mathbb{C}\Bigg\}.\]
Also since $\Phi^*=\Phi$, we get from Corollary \ref{diagonalisable} that $\mathcal{M}_{\Phi}=\mathcal{M}_{{\Phi}^\infty}$ and hence $\kappa=1$ and we have $\Phi\in \mathfrak{S}_{1}$.

Now for each $t\in [0,1]$, define $\Phi_{t}:\mathcal{M}_{2}\rightarrow\mathcal{M}_{2}$ by 
\[\Phi_{t}(x)=p(t)xp(t)^*+q(t)xq(t)^*,\]
where $p(t)=\frac{1}{c}\left[ \begin{array}{cc}
1+t & 0\\
t & 0
\end{array}\right], q(t)=\frac{1}{c}\left[ \begin{array}{cc}
0 & -t\\
0 & 1+t
\end{array}\right], \text{and} \ c= \sqrt{(1+t)^2+t^2}$.

Then $\Phi_{t}$ is a unital quantum channel for each $t\in [0,1]$ such that $\Phi_{0}=\Phi$. So it follows that $t\mapsto\Phi_{t}$ is a continuous path starting from $\Phi$ and ending at $\Phi_{1}$. We will show that for each $t>0$, we will have $\kappa(\Phi_{t})>1$. Since any open neighbourhood of $\Phi$ will intersect this path, it will be evident then that $\Phi$ can not admit a neighbourhood consisting of channels with $\kappa=1$ only, implying the fact that $\mathfrak{S}_{1}$ is not a relatively open set. To this end, we note that for any $x\in \mathcal{M}_{2}$,
\[\Phi_{t}^*\circ\Phi_{t}(x)=pxp+qxq,\]
where $p= \left[\begin{array}{cc}
1 & 0\\
0 & 0
\end{array}\right]$ and $q=\left[\begin{array}{cc}
0 & 0\\
0 & 1
\end{array}\right]$.
Hence we have  
\[\mathcal{M}_{\Phi_{t}}=\Bigg\{\left[\begin{array}{cc}
a & 0\\
0 & b
\end{array}\right]: a,b\in \mathbb{C}\Bigg\}.\]
Now for $p= \left[\begin{array}{cc}
1 & 0\\
0 & 0
\end{array}\right]
\in \mathcal{M}_{\Phi_{t}}$, we compute $\Phi_{t}(p)=\frac{1}{c^2}\left[\begin{array}{cc}
(1+t)^2 & (1+t)t\\
(1+t)t & t^2
\end{array}\right]$. Evidently $\Phi_{t}(p)\not\in \mathcal{M}_{\Phi_{t}}$ if $t\neq 0$. And by definition we get $p\not\in \mathcal{M}_{\Phi_{t}^2}$. Hence $\mathcal{M}_{\Phi_{t}^2}\subset\mathcal{M}_{\Phi_{t}}
$ and subsequently $\kappa>1$ as desired.

It can be also seen that for any value $n>1$, the set \[
\mathfrak{S}_{n}=\{\Ep:\Md\rightarrow\Md: \kappa(\Ep)=n\}\]
can not be relatively open as $\mathfrak{S}_{1}$ is  a dense set. 

Notwithstanding the discussion on channels with fixed multiplicative index, it will be interesting to know:  
given a unital channel $\Ep$ on $\Md$, how to get the exact value of the multiplicative index ($\kappa$) of $\Ep$. 
The exact upper bound for $\kappa$ is an interesting problem itself. The discussion after the Example \ref{kappa three} suggests that it is strictly less than $d^2$. The question may have a connection with the exact dimension of proper maximal unital $*$-subalgebras of $\Md$. 
    
\section{Acknowledgements}
The work is supported by the Graduate Research Fellowship at the University of Regina.
The author likes to thank Dr.\ Douglas Farenick for various stimulating discussions, Dr.\ Sarah Plosker for reading the manuscript and providing with feedback and Sam Jaques for a lot of insightful discussions on this topic. The author also wants to thank the anonymous referees for various comments and suggestions to improve the quality and exposition of this work.

\bibliography{multiplicative}
\bibliographystyle{amsplain}

\end{document}